\keywords{verification, distributed system, asynchronous communications, choreographies}
\theoremstyle{plain} %\crefname{satz}{Satz}{S\"atze}
\begin{document}

\title[Synchronizability of CFSM is not Decidable]{Synchronizability of Communicating\texorpdfstring{\\}{ }Finite State Machines is not Decidable}

\titlecomment{{\lsuper*}Extended version of an article published in ICALP'17 proceedings}

\author[A.~Finkel]{Alain Finkel\lmcsorcid{0000-0003-0702-3232}}[a]
\author[E.~Lozes]{Etienne Lozes\lmcsorcid{0000-0001-8505-585X}}[b]	%optional

\address{Universit\'e Paris-Saclay, CNRS, ENS Paris-Saclay, Institut Universitaire de France, Laboratoire Méthodes Formelles, 91190, Gif-sur-Yvette, France}
\urladdr{https://ens-paris-saclay.fr/alain-finkel/}
\email{alain.finkel@ens-paris-saclay.fr}  %optional

\address{Université Côte d'Azur, CNRS, I3S, France}
\email{elozes@i3s.unice.fr}  %optional

%% required for running head on odd and even pages, use suitable
%% abbreviations in case of long titles and many authors:

%%%%%%%%%%%%%%%%%%%%%%%%%%%%%%%%%%%%%%%%%%%%%%%%%%%%%%%%%%%%%%%%%%%%%%%%%%%

%% the abstract has to PRECEDE the command \maketitle:
%% be sure not to issue the \maketitle command twice!

\begin{abstract}
  A system of communicating finite state machines is \emph{synchronizable}
  if its send trace semantics, i.e.
  the set of sequences of sendings it can perform, is the same when its
  communications are
  FIFO asynchronous and when they are just rendez-vous synchronizations.
  This property was claimed to be decidable in several conference and journal
  papers
  for either mailboxes or peer-to-peer communications,
  thanks to a form of small model property. In this paper, we show that
  this small model property does not hold neither for mailbox communications,
  nor for peer-to-peer communications, therefore the decidability
  of synchronizability becomes an open question. We close this question for
  peer-to-peer communications, and we show that
  synchronizability is actually undecidable. We show that synchronizability is
  decidable if the topology of communications is an oriented ring. We also
  show that,
  in this case,
  synchronizability implies the absence of unspecified receptions, and the
  channel-recognizability of the reachability set.
\end{abstract}

\maketitle

%% start the paper here:

\section{Introduction\label{sec:intro}}
Asynchronous distributed systems are error prone not only because they are
difficult to program, but also because they are difficult
to execute in a reproducible way.
The slack of communications,
measured by the number of messages that can be buffered in a same
communication channel, is not always under
the control of the programmer, and even when it is, it may be delicate to
choose the right size of the communication buffers.

The synchronizability of a system of communicating machines
is a property introduced by Basu and Bultan~\cite{DBLP:journals/tse/FuBS05,BasuB11} that formalizes
the idea of a distributed system that is ``slack elastic'',
in the sense that its behaviour is the same whatever the size of the buffers,
and in particular it is enough to detect bugs by considering executions
with buffers of size one~\cite{BasuBO12,BasuB16}. Synchronizability can
also be used for checking other properties like
choreography realizability~\cite{poplBasuBO12}.

For instance, consider the machines
$$
P\ =\ !a\cdot{}!b \qquad \mbox{and} \qquad
Q\ =\ ?a\cdot{}?b
$$
where $P$ may send two messages $a$ and $b$ in sequence to $Q$, and $Q$ is
ready to receive them.
These two machines
form a synchronous system $P|Q$:
the asynchronous trace $!a\cdot{}!b\cdot{}?a\cdot{}?b$ is
``equivalent'' to the synchronous trace $!a\cdot{}?a\cdot{}!b\cdot{}?b$.
Two traces
are considered ``equivalent''  by Basu and Bultan if they present the same
sequence of send actions, i.e. that they are identical after erasing all
receive actions. This is the case for the above example, as both traces
result in $!a\cdot{}!b$ after erasing the receive actions. A system is language
synchronizable if all of its traces are equivalent to a synchronous trace.
An additional requirement is that
two ``equivalent'' traces lead to the same
configuration; when it is the case, the system is called synchronizable.
For instance, taking
$$
P\ =\ !a\cdot{}?b \qquad \mbox{and} \qquad
Q\ =\ ?a\cdot{}!b\ +\ !b\cdot{} ?a
$$
$P|Q$ is language synchronizable but it is not synchronizable, because the asynchronous trace $!a\cdot{}!b\cdot{}?a\cdot{}?b$ does not lead to the same
configuration as the synchronous trace $!a\cdot{}?a\cdot{}!b\cdot{}?b$.

For systems with more than two machines, there are at least two
distinct reasonable semantics of a system
of communicating machines with FIFO queues: either each message sent
from $P$ to $Q$ is stored in a queue which is specific to the pair $(P,Q)$,
which we will call the ``peer-to-peer'' semantics, or all messages sent to
$Q$ from several other peers are mixed toghether in a queue that is specific
to $Q$, which we will call the ``mailbox'' semantics.

Basu and Bultan claimed that synchronizability is decidable,
first for the mailbox semantics~\cite{BasuBO12}, and later for other
semantics, including the peer-to-peer one~\cite{BasuB16}. Their
main argument was a small model property, stating that if all 1-bounded
traces are equivalent to synchronous traces then the system is synchronizable.

This paper corrects some of these claims and discuss some related questions.

%\AFNOTE{surprenant qu'une propriété qui n'a pas l'air triviale soit décidable pour les systèmes de cfsm. en fait elle est indécidable à partir 3-cfsm mais pas pour deux, alors que 2-cfsm simulent bien un fifo automate et que reachability et le pb de reception sont indécidables...serait-ce un pb trivial au sens de Rice ? expliquer pourquoi !  et qu'en est-il pour un automate à une file ? peut-on définir une réduction/équivalence modulo la small model property qui marcherait pas pour les 2-cfsms ?}

%\AFNOTE{réorganiser la liste par ordre d'importance positive et de surprise}
\begin{itemize}
\item We establish the undecidability of synchronizability for systems of $3$-CFSMs with the peer-to-peer semantics
%	(hence also for systems of $n$-CFSMs with $n \geq 3$)
(Theorem~\ref{thm:undecidability}). This shows that the claim on the decidability of synchronizability for the peer-to-peer semantics ~\cite{BasuB16} is actually wrong.
%\item et pour les FIFO automata ?
\item We provide counter-examples to the small model property for systems of $3$-CFSMs
both for the peer-to-peer semantics (Example~\ref{ex:counter-example-peer-to-peer})
and the mailbox semantics (Example~\ref{ex:counter-example-mailbox}) which illustrate that the claims in~\cite{BasuBO12,BasuB16} are not proved correctly. The fact that the small model property is false for the peer-to-peer semantics is a consequence of the previous result but this is not a consequence for the mailbox semantics.
\item We prove that the small model property is true for systems of $2$-CFSMs and more generally for systems of communicating machines on an oriented ring both under the
mailbox and peer-to-peer semantics (actually both are the same in that case) and therefore synchronizability is decidable for oriented rings (Theorem~\ref{thm:decidability}).
\item We show that the reachability set of synchronizable systems
is channel-recognizable (Theorem~~\ref{thm:channel-rec}), ie the set of
reachable configurations is regular.
\item Finally, we show that the counter-examples we gave invalidate other
claims, in particular a result used for
checking \emph{stability}~\cite{AkrounSY16,AkrounS18}.
\end{itemize}
%%%%%%%%%

%\begin{itemize}
%\item we provide counter-examples to the small model property
%both for the peer-to-peer semantics (Example~\ref{ex:counter-example-peer-to-peer})
%and the mailbox semantics (Example~\ref{ex:counter-example-mailbox}) which
%illustrate that the claims in~\cite{BasuBO12,BasuB16} are not proved correctly.
%\item we show that the claim on the decidability of synchronizability for
%the peer-to-peer semantics is actually wrong, and establish the undecidability
%of synchronizability (Theorem~\ref{thm:undecidability}).
%\item we show that the small model property holds for systems of
%communicating machines on an oriented ring (Theorem~\ref{thm:weak-implies-strong-bipartite}), both under the
%mailbox and peer-to-peer semantics (actually both are the same in that case),
%and therefore synchronizability is decidable for oriented rings (Theorem~\ref{thm:decidability}). We also show that the reachability set of such systems
%is channel-recognizable (Theorem~~\ref{thm:channel-rec}), ie the set of
%reachable configurations is regular.
%\item finally, we show that the counter-examples we gave invalidate other
%claims, in particular a result used for
%checking \emph{stability}~\cite{AkrounSY16,AkrounS18}.
%\end{itemize}

\subsection*{Outline} The paper first
focuses on the peer-to-peer communication model.
Section~\ref{sec:prel} introduces all notions of
communicating finite state machines and synchronizability. In
Section~\ref{sec:undecidability}, we show that synchronizability is
undecidable. Section~\ref{sec:bipartite} shows the decidability of synchronizability
on ring topologies.
Section~\ref{sec:extensions} concludes with
various discussions, including
counter-examples about the mailbox semantics.

\subsection*{Related Work}
The analysis of systems of communicating finite state machines has always been a very active topic
of research. Systems with channel-recognizable (aka QDD~\cite{BoigelotG99} representable)
reachability sets are known to enjoy a decidable reachability problem~\cite{Pachl87}.
Heussner~\emph{et al} developed a CEGAR approach based on
regular model-checking~\cite{HeussnerGS12}. Classifications of communication
topologies according to the decidability of the reachability problems are known
for FIFO, FIFO+lossy, and FIFO+bag communications~\cite{ChambartS08,ClementeHS14}.
In~\cite{TorreMP08,Heussneretal12},
the bounded context-switch reachability problem for
communicating machines extended with local stacks modeling recursive function calls is shown decidable
under various assumptions.
Session types dialects have been introduced for systems of communicating finite state
machines~\cite{DenielouY12},
and were shown to enforce various desirable properties.

Several notions similar to the one of synchronizability have also been
studied in different context.
\emph{Slack elasticity} seems to be the most general name given to
a the property that a given distributed system
with asynchronous communications ``behaves the same''
whatever the slack of communications is. This property has been
studied in hardware design~\cite{Manohar1998},
with the goal of ensuring that some code transformations
are semantic-preserving, in high performance computing,
for ensuring
the absence of deadlocks and other bugs in MPI programs~\cite{Siegel05,Vakkalanka2010}, but also for communicating finite state machines, like in this work,
with a slightly different way of comparing the behaviours of the
system at different buffer bounds. Genest~\emph{et al} introduced the
notion of existentially bounded systems of communicating finite state machines,
that is defined on top of Mazurkiewicz traces, aka message sequence charts in
the context of communicating finite state machines~\cite{GenestKM06}.
Finally, a notion similar to the one of
existentially bounded systems has been recently introduced and
christened ``$k$-synchronous systems''~\cite{Bouajjani-el-al-CAV-2018}.
Existential boundedness, $k$-synchronous systems, and synchronizability are further compared in Section~\ref{sec:existentially-bounded}.
\section{Preliminaries\label{sec:prel}}
\subsection{Messages and topologies}
A \emph{message set}
$\messages$ is a tuple $\tuple{\Sigma_M,\nbpeers,\msgsrc,\msgdst}$ where
$\Sigma_M$ is a finite set of letters (more often called messages),
$\nbpeers\geq 1$ and $\msgsrc,\msgdst$ are functions that associate to every
letter $a\in\Sigma$ naturals $\msgsrc(a)\neq\msgdst(a)\in\{1,\dots,\nbpeers\}$.
We often write $\aij$ for a message $a$ such that $\msgsrc(a)=i$
and $\msgdst(a)=j$; we often identify $M$ and $\Sigma_M$ and write for instance
$M = \{\tagmsg{a_1}{i_1}{j_1},\tagmsg{a_2}{i_2}{j_2},\dots\}$ instead of
$\Sigma_M=\dots$, or $w\in M^*$ instead of
$w\in \Sigma_M^*$. The communication topology associated to $M$ is the graph
$G_M$ with vertices $\{1,\dots,\nbpeers\}$ and with an edge from $i$ to $j$ if there is a message
$a\in\Sigma_M$ such that $\msgsrc(a)=i$ and $\msgdst(a)=j$.
$G_M$ is an \emph{oriented ring} if the
set of edges of $G_M$ is $\{(i,j)\mid i+1=j\mbox{ mod } \nbpeers\}$.

\subsection{Traces}
An \emph{action} $\action$ over $\messages$ is either a send action $!a$ or a
receive action $?a$, with $a\in\Sigma_M$.
The peer $\peerofaction{\action}$ of action $\action$
is defined as $\peerofaction{!a}=\msgsrc(a)$ and $\peerofaction{?a}=\msgdst(a)$.
We write $\actionsofpeer{i}{\messages}$ for the set of actions of peer $i$ and
$\actions{\messages}$ for the set of all actions over $\messages$.
An $M$-\emph{trace} $\trace$ is a
finite (possibly empty) sequence of actions. We write $\actions{\messages}^*$ for the set of
$M$-traces, $\emptytrace$ for the empty $M$-trace, and $\trace_1\cdot{}\trace_2$ for the
concatenation of two $M$-traces. We sometimes write $!?a$ for
$!a\cdot{}?a$.
An $M$-trace $\trace$ is a prefix of
$\tracebis$, $\trace\leqpref\tracebis$ if there is $\traceter$ such that
$\tracebis=\trace\cdot{}\traceter$.
The prefix closure $\prefixclosure S$ of a set of $M$-traces $S$ is the set
$\{\trace\in\actions{\messages}^*\mid\mbox{there is }\tracebis\in S\mbox{ such that }\trace\leqpref\tracebis\}$.

For an $M$-trace $\trace$ and peer ids $i,j\in\{1,\dots,\nbpeers\}$ we write

\begin{itemize}

\item $\sendprojection{\trace}$ (resp. $\recvprojection{\trace}$)
for the sequence of messages sent (resp. received) during $\trace$, \emph{i.e.}
$\sendprojection{!a}=a$, $\sendprojection{?a}=\emptytrace$, and $\sendprojection{\trace_1\cdot{}\trace_2}=\sendprojection{\trace_1}\cdot{}\sendprojection{\trace_2}$ (resp. $\recvprojection{!a}=\emptytrace$,
$\recvprojection{?a}=a$, and $\recvprojection{\trace_1\cdot{}\trace_2}=\recvprojection{\trace_1}\cdot{}\recvprojection{\trace_2}$).

\item $\machineprojection{i}{\trace}$ for the $M$-trace of actions $\action$ in $\trace$
such that $\peerofaction{\action}=i$.

\item $\channelprojection{i\to j}{\trace}$ for the
$M$-trace of actions $\action$ in $\trace$ such that
$\action\in\{!a,?a\}$ for some $a\in M$ with $\msgsrc(a)=i$ and $\msgdst(a)=j$.

\item $\bufferon{i\to j}{\trace}$ for the word $w\in M^*$, if it exists,
such that
$
\sendprojection{\channelprojection{i\to j}{\trace}}=
\recvprojection{\channelprojection{i\to j}{\trace}}\cdot{}w$.

\end{itemize}

\begin{exa}
Consider $M=\tuple{\{a,b\},2,\msgsrc,\msgdst}$ with $\msgsrc(a)=\msgdst(b)=1$ and
$\msgsrc(b)=\msgdst(a)=2$, and let $\trace=!a?b$. Then $\sendprojection{\trace}=a$,
$\machineprojection{1}{\trace}=\trace$,
and $\bufferon{1\to 2}{\trace}=a$.
\end{exa}

An $M$-trace $\trace$ is \emph{FIFO} (resp. a \emph{$k$-bounded FIFO}, for $k\geq 1$)
if for all $i,j\in\{1,\dots,\nbpeers\}$,
for all prefixes $\trace'$ of $\trace$, $\bufferon{i\to j}{\trace'}$ is defined
(resp. defined and of length at most $k$); in other words,
$\trace$ is FIFO if for every prefix $\trace'$ of $\trace$, for all $i\neq j$,
the sequence of messages received from $i$ by $j$ in $\trace' $
is a prefix of the sequence of message sent from $i$ to $j$ in $\trace'$.
Intuitively, an $M$-trace is \emph{FIFO} if it is an execution of
a machine that manipulates FIFO queues, with one queue
per pair of peers.

An $M$-trace is \emph{synchronous}
if it is of the form $!?a_1\cdot{}!?a_2\cdots{}!?a_{\ell}$ for
some $\ell\geq 0$ and $a_1,\dots,a_{\ell}\in M$.
In particular, a synchronous
$M$-trace is a $1$-bounded FIFO $M$-trace (but the converse is false).
An $M$-trace $\trace$ is \emph{stable} if $\bufferon{i\to j}{\trace}=\emptybuffer$ for all
$i\neq j\in\{1,\dots,\nbpeers\}$.

Two $M$-traces $\trace,\tracebis$ are \emph{causal-equivalent} $\trace\causalequiv\tracebis$
if
\begin{itemize}
\item $\trace,\tracebis$ are FIFO, and
\item for all $i\in\{1,\dots,\nbpeers\}$, $\machineprojection{i}{\trace}=
\machineprojection{i}{\tracebis}$.
\end{itemize}
%% ,
%% and an equivalence class is a Mazurkiewicz
%% trace.
Intuitively, $\trace\causalequiv\tracebis$ if
$\trace$ is obtained from $\tracebis$ by iteratively commuting adjacent actions that
are not from the same peer and do not form a ``matching send/receive pair'' (this is why $\trace,\tracebis$ are deemed to be FIFO).
The relation $\causalequiv$ is a congruence with respect to concatenation.

\medskip
\subsection{Peers, systems, configurations}
%\begin{defi}
A system (of communicating machines) over a message set $\messages$ is a tuple
$\system=\systemlist{\peer_1,\dots,\peer_{\nbpeers}}$ where for all
$i\in\{1,\dots,\nbpeers\}$, the peer
$\peer_i$ is a finite state automaton $\tuple{\states_i,\state_{0,i},\Delta_i}$
%\AFNOTE{pourquoi $M_i$ qui n'est pas défini ? et pas $M$ ?}
%\ELNOTE{corrig\'e?}
over the alphabet $\actionsofpeer{i}{\messages}$ and
with (implicitly) $\states_i$ as the
set of accepting states.
%\end{defi}
%% In a system $\system=\systemlist{\peer_1,\dots,\peer_{\nbpeers}}$ , there is an unique channel $i \to j$ between each pair of peers $(\peer_i,\peer_j)$ for every $i \neq j$; hence there are $p(p-1)$ channels in $\system$.
%% A peer is \emph{deterministic} if for all $\state,\action$, there is at most
%% one $\state'$ such that $(\state,\action,\state')\in\transitions$.
%\ELNOTE{Remettre la def de peer deterministe si on en parle (complexite de $\sendtracesweak{0}{\system}=\sendtracesweak{}{\system}$, par exemple.}
We write $L(\peer_i)$ for the set of $M$-traces that label a path in $\peer_i$
starting at the initial state $\state_{0,i}$.

Let the system $\system$ be fixed. A \emph{configuration} $\configuration$ of
$\system$ is a tuple
$(\state_1,\dots,\state_{\nbpeers},w_{1,2},\dots,w_{\nbpeers-1,\nbpeers})$
where $\state_i$ is a state of $\peer_i$ and for all $i\neq j$,
$w_{i,j}\in \messages^*$ is the content of channel $i\to j$.
A configuration is \emph{stable}
if $w_{i,j}=\emptybuffer$ for all $i,j \in\{1,\dots,\nbpeers\}$ with $i \neq j$.
%% A configuration is $k$-bounded, for a number $k\geq 1$, if
%% the maximal length of all queues is $k$.

Let $\configuration=(\state_1,\dots,\state_{\nbpeers},w_{1,2},\dots,w_{\nbpeers-1,\nbpeers})$,
$\configuration'=(\state_1',\dots,\state_{\nbpeers}',w_{1,2}',\dots,w_{\nbpeers-1,\nbpeers}')$ and
$\amessage\in\messages$ with $\msgsrc(\amessage)=i$
and $\msgdst(\amessage)=j$.
We write $\configuration\tr{!\amessage}_{\system}\configuration'$
(resp. $\configuration\tr{?\amessage}_{\system}\configuration'$)
if $(\state_i,!\amessage,\state_i')\in\transitions_i$
(resp. $(\state_j,?\amessage,\state_j')\in\transitions_j$),
$w_{i,j}'=w_{i,j}\cdot{}\amessage$
(resp. $w_{i,j}=\amessage\cdot{}w_{i,j}'$)
%% \ELNOTE{Corrig\'e, mais ca change quelque chose?}
%% \AFNOTE{ça ne change rien dans la suite, c'est juste juste :))}
%% \ELNOTE{En fait c'est important par rapport a la def de buffer au-dessus, tu as bien fait de me faire corriger.}
and for all $k,\ell$ with $k\neq i$ (resp. with $k\neq j$),
$\state_{k}=\state_{k}'$ and $w_{k,\ell}'=w_{k,\ell}$ (resp. $w_{\ell,k}'=w_{\ell,k}$).
If $\trace=\action_1\cdot{}\action_2\cdots{}\action_n$,
we write $\tr{\trace}_{\system}$ for %the composition of binary relations over configurations
$\tr{\action_1}_{\system}\tr{\action_2}_{\system}\dots\tr{\action_n}_{\system}$.
We often write $\tr{\trace}$ instead of $\tr{\trace}_{\system}$
when $\system$ is clear from the context.
The \emph{initial configuration} of $\system$ is the stable configuration $\configuration_0=(\state_{0,1},\dots,\state_{0,\nbpeers},\emptybuffer,\dots,\emptybuffer)$.
An $M$-trace $\trace$ is a trace of system $\system$ if
there is $\configuration$ such that
$\configuration_0\tr{\trace}\configuration$.
Equivalently, $\trace$ is a trace of $\system$ if
\begin{itemize}
\item it is a FIFO
trace, and
\item for all $i\in\{1,\dots,\nbpeers\}$,
$\machineprojection{i}{\trace}\in L(\peer_i)$.
\end{itemize}
For $k\geq 1$, we write $\traces{k}{\system}$ for
the set of $k$-bounded traces of $\system$, $\traces{0}{\system}$ for the set of synchronous
traces of $\system$, and
$\traces{}{\system}$ for $\bigcup_{k\geq 0}\traces{k}{\system}$.

\begin{exa}\label{ex:counter-example-peer-to-peer}
%\AFNOTE{je propose: Consider the message set
%$\messages=\{\aonetwo,\bonethree,\cthreetwo,\dtwoone\}$}
%\ELNOTE{Adopte}
Consider the message set
$\messages=\{\aonetwo,\bonethree,\cthreetwo,\dtwoone\}$
and the system $\system=\systemlist{\peer_1,\peer_2,\peer_3}$ where $\peer_1,\peer_2,\peer_3$ are as depicted in Fig.~\ref{fig:counter-example}.
$$
\begin{array}{ll}
L(\peer_1)= & \prefixclosure\{!\aonetwo\cdot{} !\aonetwo\cdot{} !\bonethree\}\\
L(\peer_2)= & \prefixclosure\{?\aonetwo\cdot{} ?\aonetwo\cdot{} ?\cthreetwo ~,~ ?\cthreetwo\cdot{}!\dtwoone\}\\
L(\peer_3)= & \prefixclosure\{?\bonethree\cdot{} !\cthreetwo\}.
\end{array}
$$

%% \AFNOTE{comment P2 peut consommer des messages qui vont de 3 vers 1 ?}\ELNOTE{typo corrigee}
An example of a stable trace  is
$!\tagmsg{a}{1}{2}\cdot{}!\tagmsg{a}{1}{2}\cdot{}!?\tagmsg{b}{1}{3}\cdot{}!\tagmsg{c}{3}{2}\cdot{}?\tagmsg{a}{1}{2}\cdot{}?\tagmsg{a}{1}{2}\cdot{}?\tagmsg{c}{3}{2}$.
Let $\trace=!\aonetwo\cdot{}!\aonetwo\cdot{}!?\bonethree\cdot{}!?\cthreetwo\cdot{}!\dtwoone$.
Then $\trace\in\traces{2}{\system}$
is a $2$-bounded trace of the system $\system$, and
$\configuration_0\tr{\trace}(\state_{3,1},\state_{5,2},\state_{2,3},\tagmsg{a}{1}{2}\tagmsg{a}{1}{2},\emptybuffer,\tagmsg{d}{2}{1},\emptybuffer,\emptybuffer,\emptybuffer)$.
\end{exa}
\begin{figure}
\begin{tikzpicture}[shorten >=1pt,=stealth’,initial text={},auto,every state/.style={scale=.7,initial distance={2mm}},scale=2]

\begin{scope}
\node[state,initial] (m1q0) at (0,0) {$\state_{0,1}$};
\node[state] (m1q1) at (1,0) {$\state_{1,1}$};
\node[state] (m1q2) at (2,0) {$\state_{2,1}$};
\node[state] (m1q3) at (3,0) {$\state_{3,1}$};
\node[left of=m1q0] {$\peer_1$};
\draw[->] (m1q0) edge node {$!\tagmsg{a}{1}{2}$} (m1q1);
\draw[->] (m1q1) edge node {$!\tagmsg{a}{1}{2}$} (m1q2);
\draw[->] (m1q2) edge node {$!\tagmsg{b}{1}{3}$} (m1q3);
\end{scope}

\begin{scope}[xshift=4cm]
\node[state,initial] (m2q0) at (0,-1) {$\state_{0,2}$};
\node[state] (m2q1) at (0,0) {$\state_{1,2}$};
\node[state] (m2q2) at (1,0) {$\state_{2,2}$};
\node[state] (m2q3) at (2,0) {$\state_{3,2}$};
\node[state] (m2q4) at (1,-1) {$\state_{4,2}$};
\node[state] (m2q5) at (2,-1) {$\state_{5,2}$};
\node[left of=m2q0] {$\peer_2$};
\draw[->] (m2q0) edge node {$?\tagmsg{a}{1}{2}$} (m2q1);
\draw[->] (m2q1) edge node {$?\tagmsg{a}{1}{2}$} (m2q2);
\draw[->] (m2q2) edge node {$?\tagmsg{c}{3}{2}$} (m2q3);
\draw[->] (m2q0) edge node {$?\tagmsg{c}{3}{2}$} (m2q4);
\draw[->] (m2q4) edge node {$!\tagmsg{d}{2}{1}$} (m2q5);
\end{scope}

\begin{scope}
\node[state,initial] (m3q0) at (0,-1) {$\state_{0,3}$};
\node[state] (m3q1) at (1,-1) {$\state_{1,3}$};
\node[state] (m3q2) at (2,-1) {$\state_{2,3}$};
\node[left of=m3q0] {$\peer_3$};
\draw[->] (m3q0) edge node {$?\tagmsg{b}{1}{3}$} (m3q1);
\draw[->] (m3q1) edge node {$!\tagmsg{c}{3}{2}$} (m3q2);
\end{scope}

\end{tikzpicture}
\caption{\label{fig:counter-example}System of Example~\ref{ex:counter-example-peer-to-peer} and
Theorem~\ref{thm:bugBultan}.}
\end{figure}
Two traces $\trace_1,\trace_2$ are $\system$-\emph{equivalent},
denoted with $\trace_1\systemequiv\trace_2$, if
$\trace_1,\trace_2\in\traces{}{\system}$ and there is $\configuration$
such that $\configuration_0\tr{\trace_i}\configuration$ for both $i=1,2$.
It follows from the definition
of $\causalequiv$ that
if $\trace_1\causalequiv\trace_2$ and $\trace_1,\trace_2\in\traces{}{\system}$,
then $\trace_1\systemequiv\trace_2$.

\medskip
\subsection{Synchronizability.}
Following~\cite{BasuBO12}, we define the observable behaviour of a system as its set
of send traces enriched with their final configurations when they
are stable. Formally, for any $k\geq 0$, we write
$\sendtracesweak{k}{\system}$ and $\sendtraces{k}{\system}$ for the sets
$$
\begin{array}{ll}
\sendtracesweak{k}{\system}= &
\{\sendprojection{\trace}\mid\trace\in\traces{k}{\system}\}
\\
\sendtraces{k}{\system}= & \sendtracesweak{k}{\system}
\cup
\{(\sendprojection{\trace},\configuration)\mid \configuration_0\tr{\trace}\configuration,\configuration~\mbox{stable},\trace\in\traces{k}{\system}\}.
\end{array}
$$
%\AFNOTE{est-ce que le dernier $\trace$ ne soit pas être dans $\traces{k}{\system}$ ?}
%\ELNOTE{Bien vu!}
Synchronizability is then defined as the slack elasticity of these observable
behaviours.
\begin{defi}[Synchronizability~\cite{BasuB11,BasuBO12}]
A system $\system$ is \emph{synchronizable} if $\sendtraces{0}{\system}=\sendtraces{}{\system}$.
$\system$ is called \emph{language synchronizable} if
$\sendtracesweak{0}{\system}=\sendtracesweak{}{\system}$.
\end{defi}

%% Note that $\sendtraces{0}{\system}\subseteq\sendtraces{1}{\system}\subseteq
%% \sendtraces{2}{\system}\subseteq\dots\subseteq\sendtraces{}{\system}=\bigcup_{k\geq 0}\sendtraces{k}{\system}$,
%% so $\system$ is synchronizable if and only if for all
%% $k\geq$ it holds that $\sendtraces{0}{\system}=\sendtraces{k}{\system}$.

%% \begin{rem}
%% In~\cite{BasuB11,BasuB16} the authors take another definition
%% of behaviour that does not consider stable configurations as observable; formally,
%% instead of $\sendtraces{k}{\system}$, the definition of synchronizability
%% is based on
%% $\sendtracesweak{k}{\system}=
%% \{\sendprojection{\trace}\mid\trace\in\traces{k}{\system}\}$.
%% The fact that we follow~\cite{BasuBO12} rather than~\cite{BasuB11,BasuB16}
%% will be important in Section~\ref{sec:bipartite}, but the undecidability
%% result (Theorem~\ref{thm:undecidability}) also holds for the definition
%% of synchronizability of~\cite{BasuB11,BasuB16} (and the proof is actually simpler).
%% \end{rem}

For convenience, we also introduce a notion of $k$-synchronizability:
for $k\geq 1$, a system $\system$ is $k$-\emph{synchronizable} if
$\sendtraces{0}{\system}=\sendtraces{k}{\system}$,
and \emph{language} $k$-synchronizable if $\sendtracesweak{0}{\system}=\sendtracesweak{k}{\system}$.
A system is therefore
(language) synchronizable if and only if it is (language)
$k$-synchronizable for all $k\geq 1$.

\begin{thm}\label{thm:bugBultan}
There is a system $\system$ that is $1$-synchronizable, but not
synchronizable.
\end{thm}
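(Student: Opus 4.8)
The plan is to take as witness the system $\system$ of Example~\ref{ex:counter-example-peer-to-peer} and to establish the two halves separately: that $\dtwoone$ can be produced at slack $2$ but never synchronously (non-synchronizability), and that the only live behaviour at slack $1$ is the synchronous one ($1$-synchronizability).

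For non-synchronizability I would exhibit the $2$-bounded trace $\trace=!\aonetwo\cdot{}!\aonetwo\cdot{}!?\bonethree\cdot{}!?\cthreetwo\cdot{}!\dtwoone$ already given in Example~\ref{ex:counter-example-peer-to-peer}, whose send projection $\sendprojection{\trace}=\aonetwo\cdot{}\aonetwo\cdot{}\bonethree\cdot{}\cthreetwo\cdot{}\dtwoone$ lies in $\sendtracesweak{2}{\system}\subseteq\sendtracesweak{\omega}{\system}$. The crux is $\sendprojection{\trace}\notin\sendtracesweak{0}{\system}$, which follows from the stronger claim that $\dtwoone$ occurs in \emph{no} $1$-bounded trace at all. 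Indeed, in $L(\peer_2)$ the action $!\dtwoone$ occurs only in the branch $?\cthreetwo\cdot{}!\dtwoone$, so any trace $\tracebis$ containing $!\dtwoone$ has $\machineprojection{2}{\tracebis}$ beginning with $?\cthreetwo$; tracing the causal chain $?\cthreetwo \leftarrow !\cthreetwo \leftarrow ?\bonethree \leftarrow !\bonethree \leftarrow !\aonetwo\cdot{}!\aonetwo$ shows that both copies of $\aonetwo$ are sent, yet none is received by peer~$2$, before that first $?\cthreetwo$. Hence the prefix $\tracebis'$ of $\tracebis$ ending at this $?\cthreetwo$ satisfies $|\bufferon{1\to 2}{\tracebis'}|\geq 2$, so $\tracebis$ is not $1$-bounded. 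Since every synchronous trace is $1$-bounded, no synchronous trace produces $\dtwoone$, and therefore $\sendtraces{0}{\system}\neq\sendtraces{\omega}{\system}$.

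For $1$-synchronizability I must prove $\sendtraces{0}{\system}=\sendtraces{1}{\system}$. The inclusion $\subseteq$ is immediate since synchronous traces are $1$-bounded. For $\supseteq$, the same buffer argument shows that at slack~$1$ peer~$2$ can never enter its $?\cthreetwo\cdot{}!\dtwoone$ branch: to send the second $\aonetwo$ the channel $1\to 2$ must first be emptied, forcing peer~$2$ to receive an $\aonetwo$ and thus commit to the branch $?\aonetwo\cdot{}?\aonetwo\cdot{}?\cthreetwo$. Consequently every $1$-bounded trace uses only the actions of the single maximal synchronous trace $!?\aonetwo\cdot{}!?\aonetwo\cdot{}!?\bonethree\cdot{}!?\cthreetwo$, whose causal dependencies force the send order $\aonetwo\cdot{}\aonetwo\cdot{}\bonethree\cdot{}\cthreetwo$. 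I would then inspect the finitely many $1$-bounded reachable configurations: each either is one of the five stable configurations also reached by a prefix of that synchronous trace, or has a nonempty channel, so both the send traces and the stable configurations collected in $\sendtraces{1}{\system}$ coincide with those of $\sendtraces{0}{\system}$.

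The main obstacle is this last verification for $\supseteq$: having killed the $\dtwoone$ branch, one still has to check that no $1$-bounded interleaving yields a send trace or a stable configuration outside the synchronous set, i.e. that holding a single message in a buffer slot never creates a new observable. Here this reduces to a short finite case analysis, because the send order is rigid and the reachable stable configurations are exactly those in which a prefix of $\aonetwo\cdot{}\aonetwo\cdot{}\bonethree\cdot{}\cthreetwo$ has been fully exchanged.
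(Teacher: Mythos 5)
Your proposal is correct and follows essentially the same route as the paper: it uses the system of Example~\ref{ex:counter-example-peer-to-peer} as witness, observes that $\dtwoone$ appears in a $2$-bounded send trace but in no $1$-bounded (hence no synchronous) one, and checks that the $1$-bounded behaviours coincide with the synchronous ones. The paper simply states the resulting sets $\sendtraces{k}{\system}$ without justification, whereas you supply the causal-chain and buffer-capacity arguments that make those computations explicit; this is a welcome elaboration rather than a different proof.
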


\begin{proof}
Consider again the system $\system$ of Example~\ref{ex:counter-example-peer-to-peer}. Let $\configuration_{ijk}:=(\state_{i,1},\state_{j,2},\state_{k,3},\emptybuffer,\dots,\emptybuffer)$.
If the buffers are $1$-bounded, $\peer_1$ must wait that the first
$a$ message has been received before sending the $b$ message. Therefore
$$
\begin{array}{lll}
\sendtracesweak{0}{\system} & = &
\prefixclosure \{\aonetwo\cdot{}\aonetwo\cdot{}\bonethree\cdot{}\cthreetwo\}\\
\sendtracesweak{1}{\system} & = & \sendtracesweak{0}{\system}\\
\end{array}
$$
On the other hands, if the buffers can host two transiting
messages, it becomes possible for $\peer_1$ to send $b$ before the first $a$ is
received by $\peer_2$, so it becomes possible for $\peer_3$ to receive $b$
and send $c$, and finally $\peer_2$ may decide to receive $c$ before receiving
any $a$ message. Consequently,
$$
\begin{array}{lll}
\sendtracesweak{2}{\system} & = &
\prefixclosure \{\aonetwo\cdot{}\aonetwo\cdot{}\bonethree\cdot{}\cthreetwo\cdot{}\dtwoone\}
\\
\sendtraces{k}{\system} & = & \sendtracesweak{k}{\system}\cup \mathsf{Stab}
\qquad \mbox{for all }k\geq 0
\end{array}
$$
where $\mathsf{Stab}=
\{
(\emptytrace,\configuration_0),(\tagmsg{a}{1}{2},\configuration_{110}),
(\tagmsg{a}{1}{2}\cdot{}\tagmsg{a}{1}{2},\configuration_{220}),
(\tagmsg{a}{1}{2}\cdot{}\tagmsg{a}{1}{2}\cdot{}\tagmsg{b}{1}{3},\configuration_{321}),
(\tagmsg{a}{1}{2}\cdot{}\tagmsg{a}{1}{2}\cdot{}\tagmsg{b}{1}{3}\cdot{}\tagmsg{c}{3}{2},\configuration_{332})\}
$.
\end{proof}

This example contradicts\footnote{see also the discussion in Section~\ref{sec:error-explaned}} Theorem~4 in~\cite{BasuB16}, which stated
that
$\sendtracesweak{0}{\system}=\sendtracesweak{1}{\system}$ implies
$\sendtracesweak{0}{\system}=\sendtracesweak{}{\system}$.
This also shows that the decidability of synchronizability
for peer-to-peer communications is open despite the claim in~\cite{BasuB16}.
The next section closes this question.

\begin{rem}
In Section~\ref{sec:extensions}, we give a counter-example that addresses
communications with mailboxes, \emph{i.e.} the first communication model considered in all works
about synchronizability, and we list several other published theorems that our counter-example contradicts.
\end{rem}
\section{Undecidability of Synchronizability\label{sec:undecidability}}
%\AFNOTE{commencer par expliquer pourquoi on prouve pour des systems particuliers de 3-cfsm et pas pour un fifo automate ni pour des systems avec 2-cfsm alors que 1-fifo automate et 2-cfsm simulent bien un fifo automate et que reachability et le pb de reception sont indécidables pour ces modèles  ceci permettra aussi de mieux faire comprendre notre preuve..et les erreurs des autres preuves.}

In this section, we show the undecidability of synchronizability for systems
with \emph{at least three} peers. Although the reachability problem is undecidable for two peers, we
cannot establish the undecidability of the synchronizability of a system with two peers. The reasons for that are twofolds.

First, synchronizability only deals with messages that are sent and received, which is orthogonal to reachability. We therefore rely on the undecidability of the message reception problem: given a FIFO automaton $\A$ (i.e. an automaton
that can both enqueue and dequeue messages in a unique
channel) and a special message $\specialmsg$, decide whether there exists a trace of $\A$ that contains $?\specialmsg$.

Second, synchronizability constrains a lot the communications. In particular, when an automaton must be in a mixed state (ready to send and receive), it imposes some commutativity of the two actions (see next section), and as a consequence, a synchronizable system with two peers cannot simulate a FIFO automaton.
A third peer is necessary to get rid of all the constraints imposed by synchronizability.

To sum up, we reduce the message reception problem
on a FIFO automaton $\A$ to the synchronizability of a system with three peers:
we construct a system $\system_{\A,m}''$ such that the synchronizability of
$\system_{\A,m}''$ is equivalent to the non-reception
of the special message $\specialmsg$ in $\A$.

\subsection{An Undecidable Problem on FIFO automata}
A \emph{FIFO automaton} is a finite state automaton $\A=\tuple{\states,\actions{\Sigma},\transitions,\state_0}$ over an alphabet of the form
$\actions{\Sigma}$ for some finite set of letters $\Sigma$ with all states
being accepting states.
A FIFO automaton can be thought as a system with only one peer, with the
difference
that, according to our definition of systems, a peer can only
send messages to peers different from itself,
whereas a FIFO automaton enqueues and dequeues letters in a unique FIFO
queue, and thus, in a sense, ``communicates with itself''.
All notions we introduced for systems are obviously extended to FIFO automata.
In particular, a configuration of $\A$ is a tuple
$\configuration=(\state,w)\in\states\times\Sigma^*$, it is stable if
$w=\emptybuffer$, and the transition relation
$\configuration\tr{\trace}\configuration'$ is defined exactly the same way as for
systems.Let us now state the problem that we will consider

\begin{defi}[Message reception problem]
The \emph{message reception problem} is the following decision problem
\begin{description}
\item[Input] a FIFO automaton $\A=\tuple{\states,\actions{\Sigma},\transitions,\state_0}$ and a distinguished message $\specialmsg\in \Sigma$.
\item[Question] is there a trace $\trace$ such that $\trace\cdot{}?m\in\traces{}{\A}$ ?
\end{description}
\end{defi}

\begin{rem}
 A similar but different problem to the message reception problem is the \emph{executable reception problem }in  \cite{BrandZ} which consists to decide for a given control-state $q$ and a message $m$ such that $q \tr{?m}$, whether there exists a reachable configuration $(q,mw)$ with $w \in \Sigma^*$ : this problem is proved undecidable for systems of $2$-CFSMs in the (non available) associated technical report \cite{BrandZ81}. The proof reduces the halting problem to the executable reception problem by using the two unidirectional FIFO channels to simulate the tape of a Turing machine. We present another proof technique (using the undecidability of the existence of a tiling ~\cite{Lewis-Papadimitriou}) for another model, the FIFO automata model and another property. Then we will simulate any FIFO automaton $\A$ by an associated particular system of $3$-CFSMs $\system_{\A}$ that will have the property to be synchronizable iff a message $m$ never appears in a trace of $\A$. We dont try to simulate $\A$ by a system of $2$-CFSMs since for systems of $2$-CFSMs, synchronizability is decidable and message reception problem is undecidable.
 \end{rem}

\begin{lem}\label{lem:undecidability-FIFO-automaton}
The message reception problem is undecidable.
\end{lem}

\begin{proof}
%This kind of result is often considered folklore, but it seems it could be informative to detail a possible
%construction.
We consider the problem of
the existence of a finite, but arbitrarily large,
tiling for a given set of Wang tiles
and a pair of initial and final tiles. More precisely,
consider a tuple $\tilingset = \tuple{\tiles,\atile_0,\atile_F,H,V}$ where
\begin{itemize}
\item
$\tiles$ is a finite set of tiles,
\item
$\atile_0,\atile_F\in \tiles$ are respectively the initial tile and the final tile, and
\item
$H,V\subseteq \tiles\times \tiles$
are horizontal and vertical compatibility relations.
\end{itemize}
Without loss of generality, we assume that there is a ``padding tile''
$\blank$ such that $(\atile,\blank)\in H\cap V$ for all $\atile\in\tiles$.
For a natural $n\geq 1$,
a $n$-tiling is a function $f:\Nat\times\{1,\dots,n\} \to \tiles$ such that
\begin{itemize}
\item $f(0,1)=\atile_0$,
\item there are $(i_F,j_F)\in\Nat\times\{1,\dots,n\}$ such
that $f(i_F,j_F)=\atile_F$,
\item $(f(i,j),f(i,j+1))\in H$ for all $(i,j)\in\Nat\times\{1,\dots,n-1\}$, and
\item $(f(i,j),f(i+1,j))\in V$ for all $(i,j)\in\Nat\times\{1,\dots,n\}$.
\end{itemize}
The problem of deciding, given a tuple $\tilingset=\tuple{\tiles,\atile_0,\atile_F,H,V}$,
whether there is some $n\geq 1$ for which there exists
a $n$-tiling, is undecidable \cite{Lewis-Papadimitriou}, intuitively because
it is equivalent to the halting problem for a Turing machine working with a
half-infinite ribbon.

In the remainder, we explain how this tiling problem can be reduced
to the message reception problem.
Intuitively, we construct a FIFO automaton that outputs the first row of the tiling, storing it into the queue, and guessing at the same time the width $n$ of the
tiling. Then, for all next row $i+1$, the automaton outputs the row tile after tile, popping a tile of row $i$ in the queue
in between so as to check that each tile of row $i+1$ vertically
coincides with the corresponding tile of row $i$.

More precisely, let $\tilingset=\tuple{\tiles,\atile_0,\atile_F,H,V}$ be fixed.
We define the FIFO automaton
$\A_\tilingset=\tuple{\states,\Sigma,\transitions,q_0}$ with
$\states=\{\statefirstline{\atile},\statebelow{\atile},\stateleft{\atile},\stateleftbelow{\atile}{\atile'}\mid \atile\in \tiles,\atile'\in\tiles\cup\{\cutmsg\}\}\cup\{\state_0,\state_1\}$,
$\Sigma=\tiles\cup\{\cutmsg\}$, and
 $\transitions\subseteq Q\times \actions{\Sigma}\times Q$, with
$$
\begin{array}{lll}
\transitions & = & \{(\state_0,!\atile_0,\statefirstline{\atile_0})\}
\cup
\{(\statefirstline{\atile},!{\atile'},\statefirstline{\atile'})\mid (\atile,\atile')\in H\}
\cup
\{(\statefirstline{\atile},!\cutmsg,\state_1)\mid \atile\in \tiles\}
\\ & \cup &
\{(\state_1,?\atile,\statebelow{\atile})\mid \atile\in \tiles)\}
\cup
\{(\statebelow{\atile},!{\atile'},\stateleft{\atile'})\mid (\atile,\atile')\in V\}
\\ & \cup &
\{(\stateleft{\atile},?{\atile'},\stateleftbelow{\atile}{\atile'})\mid \atile\in\tiles,\atile'\in \tiles\cup\{\cutmsg\}\}
\\ & \cup &
\{(\stateleftbelow{\atile}{\atile'},!\atile'',\stateleft{\atile''})\mid (\atile,\atile'')\in H\mbox{ and } (\atile',\atile'')\in V\}
\\ & \cup &
\{(\stateleftbelow{\atile}{\cutmsg},!\cutmsg,\state_{1})\mid \atile\in\tiles\}
\end{array}
$$
Note that the automaton moves to state $\statebelow{\atile}$ after it has popped the first tile $\atile$ of row $i$ (it needs to remember it), then moves to state $\stateleft{\atile'}$ after it has decided to put a tile $\atile'$ on row $i+1$ above tile $\atile$ (it only needs to remember $\atile'$), then moves to state $\stateleftbelow{\atile'}{\atile''}$ after it has popped the second tile $\atile''$ of row $i$, and so on.
Therefore, any execution of $\A_{\tilingset}$ is of the form
$$
!\atile_{1,1}\cdot{}!\atile_{1,2}\cdots !\atile_{1,n}\cdot{}!\cutmsg\cdot{}?\atile_{1,1}\cdot{}!\atile_{2,1}\cdot{}?\atile_{1,2}\cdot{} !\atile_{2,2}\cdots !\atile_{2,n}\cdot{} ?\cutmsg\cdot{}!\cutmsg\cdot{}?\atile_{2,1}\cdot{}!\atile_{3,1}\cdots
$$
where $\atile_{1,1}=\atile_0$, $(\atile_{i,j},\atile_{i+1,j})\in V$ and $(\atile_{i,j},\atile_{i,j+1})\in H$.
The following two are thus equivalent:
\begin{itemize}
\item
there is $n\geq 1$ such that $\tilingset$ admits a $n$-tiling
\item
there is a trace $\trace\in\traces{}{\A}$ that contains $?\atile_F$.
\qedhere
\end{itemize}
\end{proof}

Note that, from this result, we can easily deduce the undecidability of the reachability problem for a system consisting of two machines, a sender and a receiver, a FIFO channel between them, and an extra channel with rendez-vous synchronization. Indeed, such a system may simulate a FIFO automaton: the sender does exactly the same as the FIFO automaton, except that for reception it uses a rendez-vous synchronization to ask the receiver for performing a reception, and waits for an acknowledgment that this reception has indeed been done. In the following, we will exploit this idea, although not with two machines and a rendez-vous channel, but with three machines and FIFO channels only.

\subsection{A System that Simulates a FIFO Automaton}
We are now going to define a communicating system that
simulates a FIFO automaton $\A$. This system, that we will write
$\system_{\A}$, will later be completed so as to reduce the message reception
problem to synchronizability.
The system $\system_{\A}$ consists of three peers
$\peer_1$, $\peer_2$ and $\peer_3$ that
are connected through the following topology.
\begin{center}
\begin{tikzpicture}[baseline=-.6cm]

%% \begin{scope}[xshift = -2cm,yshift=-.5cm]
%% \node[rectangle,draw] at (0,0) {$\machine_1$};
%% \node[rectangle,draw] at (2,0) {$\machine_2$};
%% \draw[->] (0.6,.1) --++ (.8,0);
%% \draw[<->] (0.6,-.1) -- node[below] {\tiny synch} (1.4,-.1);
%% \end{scope}

\begin{scope} %[xshift = 2cm]
%\draw[step=.2] (-1,-1.4) grid (2.5,1);
\node[rectangle,draw] (P1) at (0,0) {$\peer_1$};
\node[rectangle,draw] (P2) at (2,0) {$\peer_2$};
\node[rectangle,draw] (P3) at (1,-1) {$\peer_3$};
\draw[->] (0.6,0) --++ (.8,0);
\draw[->] ($(P1.south east)+(.1,.1)$) -- ($(P3.north west)+(.2,.1)$);
\draw[<-] ($(P1.south east)+(-.2,-.1)$) -- ($(P3.north west)+(-.1,-.1)$);
\draw[->] ($(P2.south west)+(.2,-.1)$) -- ($(P3.north east)+(.1,.-.1)$);
\draw[<-] ($(P2.south west)+(-.1,.1)$) -- ($(P3.north east)+(-.2,.1)$);

\end{scope}

\end{tikzpicture}
\end{center}
Intuitively, we want $\peer_1$ to mimick $\A$'s decisions and the channel $1\to 2$ to mimick $\A$'s queue. When $\A$ would
enqueue a letter $a$ , peer $1$ sends $\tagmsg{a}{1}{2}$
to peer $2$, and when $\A$ would dequeue a letter $a$,
peer $\peer_1$ sends to peer $\peer_2$ via peer $\peer_3$ the order to dequeue $a$, and waits for
the acknowledgement that the order has been correcly executed. So the only
role of $\peer_3$ is to enable a second channel between $\peer_1$
and $\peer_2$ for ``rendez-vous communications''.

Let us now define these peers and $\system_{\A}$ more formaly. Let
$\A=\tuple{\states_{\A},\actions{\Sigma},\transitions_{\A},\state_0}$ a FIFO
automaton be fixed.
Let $\messages$ be such that all messages
of $\Sigma$ can be exchanged among all peers in all directions but $2\to 1$, \emph{i.e.}
$$
\messages = \{\tagmsg{a}{1}{2},\tagmsg{a}{1}{3},\tagmsg{a}{2}{3},\tagmsg{a}{3}{1},\tagmsg{a}{3}{2}\mid a\in\Sigma\}
$$
Peer $\peer_1$ is obtained by replacing every $!a$ transition of $\A$ with
a $!\tagmsg{a}{1}{2}$ transition, and every $?a$ transition with the
sequence of transitions $!\tagmsg{a}{1}{3}?\tagmsg{a}{3}{1}$.
Formally, $\peer_1=\tuple{\states_1,q_{0,1},\transitions_1}$ is defined by
$\states_{1} =\states_{\A}\uplus\{\state_{\atrans}\mid \atrans\in \transitions_{\A}\}$ and
$$
\begin{array}{lcl}
\transitions_{1} & = &
\{(\state,!\tagmsg{a}{1}{2},\state')\mid (\state,!a,\state')\in\transitions_{\A}\}
\\
& \cup &
\{(\state,!\tagmsg{a}{1}{3},\state_{\atrans}),(\state_{\atrans},?\tagmsg{a}{3}{1},
\state')\mid \atrans=(\state,?a,\state')\in\transitions_{\A}\}.
\end{array}
$$
The \emph{modus operandi} of $\peer_2$ and $\peer_3$ is rather simple: $\peer_3$
propagates all messages it
receives, and $\peer_2$ executes all orders it receives and sends back an
acknowledgement when
this is done. So both $\peer_2$ and $\peer_3$ operate with an ``infinite loop''.
For some technical reason that will be later explained,
we need to make sure that $\peer_2$ never comes back to its initial state. To do
so, we simply ``unroll the loop'' once in $\peer_2$.
Formally,
let $\peer_2=\tuple{\states_2,q_{0,2},\transitions_2}$
and $\peer_3=\tuple{\states_3,q_{0,3},\transitions_3}$ be defined by
$$
\begin{array}{ll}
\begin{array}{l@{~=~}l}
\states_{2} &  \{\state_{0,2},\state_{1,2}\}\cup\{\state_{a,1},\state_{a,2}\mid a\in \Sigma\}\\
\end{array}
&
\begin{array}{l@{~=~}l}
\states_{3} &  \{\state_{0,3}\}\cup\{\state_{a,1},\state_{a,2},\state_{a,3}\mid a\in \Sigma\}
\end{array}
\\
\multicolumn{2}{l}{
\begin{array}{l@{~=~}l}
\transitions_2 &
\{(\state_{0,2},?\tagmsg{a}{3}{2},\state_{a,1}),(\state_{1,2},?\tagmsg{a}{3}{2},\state_{a,1}),(\state_{a,1},?\tagmsg{a}{1}{2},\state_{a,2}),(\state_{a,2},!\tagmsg{a}{2}{3},\state_{1,2})\mid a\in \Sigma\}
\\
\transitions_{3} &
\{(\state_{0,3},?\tagmsg{a}{1}{3},\state_{a,1}),(\state_{a,1},!\tagmsg{a}{3}{2},\state_{a,2}),(\state_{a,2},?\tagmsg{a}{2}{3},\state_{a,3}),(\state_{a,3},!\tagmsg{a}{3}{1},\state_{0,3})\mid a\in\Sigma\}
\end{array}}
\end{array}
$$

\begin{exa}\label{ex:fifo-automaton-encoding}
Consider $\Sigma=\{a,\specialmsg\}$ and
the FIFO automaton $\A=\tuple{\{\state_0,\state_1\},\actions{\Sigma},\transitions,\state_0}$
with transition relation
$\transitions_{\A}=\{(\state_0,!a,\state_0),(\state_0,!\specialmsg,\state_1),(\state_1,?a,\state_0),(\state_1,?\specialmsg,\state_0)\}$.
Then $\A$ and the peers $\peer_1,\peer_2,\peer_3$ are depicted in Fig.~\ref{fig:fifo-automaton-encoding}.
\end{exa}

\begin{figure}
\begin{tikzpicture}[shorten >=1pt,=stealth’,initial text={},auto,every state/.style={scale=.3,initial distance={2mm}}]

\begin{scope}[xshift=3cm]
\draw[dashed] (-1,1.5) rectangle (1.7,-1.5);
\node at (0.5,-2) {$\A$};
\node[state,initial] (aq0) at (0,0) {};
\node[state] (aq1) at (1,0) {};
\draw[->] (aq0) edge [loop above] node[above] {$!a$} (aq0);
\draw[->] (aq0) edge [bend left] node [above] {$!\specialmsg$} (aq1);
\draw[->] (aq1) edge [bend left] node {$?a,?\specialmsg$} (aq0);
\end{scope}

\begin{scope}[xshift=8cm]
\draw[dashed] (-1,2) rectangle (6,-2);
\node at (2.5,-2.5) {$\peer_1$};
\node[state,initial] (p1q0) at (0,0) {};
\node[state] (p1q1) at (5,0) {};
\node[state] (p1q2) at (2.5,-0.5) {};
\node[state] (p1q3) at (2.5,-1.5) {};
\draw[->] (p1q0) edge [loop above] node [above] {$!\tagmsg{a}{1}{2}$} (p1q0);
\draw[->] (p1q0) edge [bend left] node[above] {$!\tagmsg{\specialmsg}{1}{2}$} (p1q1);
\draw[->] (p1q1) edge[bend left=10] node [above] {$!\tagmsg{a}{1}{3}$} (p1q2.east);
\draw[->] (p1q2) edge[bend left=10] node [above right] {$?\tagmsg{a}{3}{1}$} (p1q0);
\draw[->] (p1q1) edge[bend left] node [below right] {$!\tagmsg{\specialmsg}{1}{3}$} (p1q3);
\draw[->] (p1q3) edge[bend left] node [below left] {$?\tagmsg{\specialmsg}{3}{1}$} (p1q0);

\end{scope}

\begin{scope}[yshift=-5cm,xshift=3cm,xscale=2]
\node at (1,-3) {$\peer_2$};
\draw[dashed] (-1,-2.5) rectangle (3,2.5);
\node[state,initial,initial left] (p2q0) at (0,0) {};
\node[state](p2q0') at (1,0) {};
\node[state] (p2q1) at (1,-2) {};
\node[state] (p2q2) at (2.5,-1) {};
\node[state] (p2q1') at (1,2) {};
\node[state] (p2q2') at (2.5,1) {};
\draw[->] (p2q0) edge node[below left] {$?\tagmsg{a}{3}{2}$} (p2q1);
\draw[->] (p2q0') edge node[right] {$?\tagmsg{a}{3}{2}$} (p2q1);
\draw[->] (p2q1) edge node[below right] {$?\tagmsg{a}{1}{2}$} (p2q2);
\draw[->] (p2q2) edge node[above right] {$!\tagmsg{a}{2}{3}$} (p2q0');
\draw[->] (p2q0) edge node [above left] {$?\tagmsg{\specialmsg}{3}{2}$} (p2q1');
\draw[->] (p2q0') edge node [right] {$?\tagmsg{\specialmsg}{3}{2}$} (p2q1');
\draw[->] (p2q1') edge node [above right] {$?\tagmsg{\specialmsg}{1}{2}$} (p2q2');
\draw[->] (p2q2') edge node[below right] {$!\tagmsg{\specialmsg}{2}{3}$} (p2q0');
\end{scope}

\begin{scope}[xscale=1.5,xshift=2cm,yshift=-15cm]
\node at (4,-2) {$\peer_3$};
\draw[dashed] (-1,-1.5) rectangle (9,1.5);
\node[state,initial,initial above] (p3q0) at (4,0) {};
\node[state] (p3q1) at (6,.5) {};
\node[state] (p3q2) at (8,0) {};
\node[state] (p3q3) at (6,-.5) {};
\node[state] (p3q1') at (2,.5) {};
\node[state] (p3q2') at (0,0) {};
\node[state] (p3q3') at (2,-.5) {};
\draw[->] (p3q0) edge node[above] {$?\tagmsg{a}{1}{3}$} (p3q1);
\draw[->] (p3q1) edge node[above] {$!\tagmsg{a}{3}{2}$} (p3q2);
\draw[->] (p3q2) edge node[below] {$?\tagmsg{a}{2}{3}$} (p3q3);
\draw[->] (p3q3) edge node[below] {$!\tagmsg{a}{3}{1}$} (p3q0);
\draw[->] (p3q0) edge node[above] {$?\tagmsg{\specialmsg}{1}{3}$} (p3q1');
\draw[->] (p3q1') edge node[near start,above] {$!\tagmsg{\specialmsg}{3}{2}$} (p3q2');
\draw[->] (p3q2') edge node[below] {$?\tagmsg{\specialmsg}{2}{3}$} (p3q3');
\draw[->] (p3q3') edge node[below] {$!\tagmsg{\specialmsg}{3}{1}$} (p3q0);
\end{scope}

\begin{scope}[xshift=4cm,xscale=1.5,yshift=-10cm]
\node at (0.5,-2.5) {$\peer_2'$};
\draw[dashed] (-2,-2) rectangle (3,1.25);
\node[state,initial] (p2'q0) at (-.7,0) {};
\node[state] (p2'q1) at (1,0) {};
\node[state] (p2'q2) at (1,-1) {};
\draw[->] (p2'q0) edge node [above] {$?\tagmsg{a}{1}{2},?\tagmsg{\specialmsg}{1}{2}$} (p2'q1);
\draw[->] (p2'q1) edge [in=0,out=60,loop] (p2'q1);
\node at ($(p2'q1)+(1,.5)$) {$\begin{array}{c}?\tagmsg{a}{1}{2}, ?\tagmsg{\specialmsg}{1}{2}\end{array}$};
\draw[->] (p2'q1) edge [bend left] node [right] {$?\tagmsg{a}{3}{2}$} (p2'q2);
\draw[->] (p2'q2) edge [in=-60,out=0,loop] (p2'q2);
\node at ($(p2'q2)+(1,-.5)$) {$\begin{array}{c}?\tagmsg{a}{1}{2}, ?\tagmsg{\specialmsg}{1}{2}\end{array}$};
\draw[->] (p2'q2) edge [bend left] node [left] {$!\tagmsg{a}{2}{3}$} (p2'q1);
\draw[->] (p2'q0) |- node [pos=.6,below] {$?\tagmsg{a}{3}{2}$} (p2'q2);
\end{scope}

\end{tikzpicture}
\caption{\label{fig:fifo-automaton-encoding}
The FIFO automaton $\A$ of Example~\ref{ex:fifo-automaton-encoding}
and its associated systems $\system_{\A}=\systemlist{\peer_1,\peer_2,\peer_3}$
and $\system_{\A,\specialmsg}'=\systemlist{\peer_1,\peer_2',\peer_3}$.
The sink state $\state_{\bot}$ and the transitions
$\state\tr{?\tagmsg{\specialmsg}{3}{2}}\state_{\bot}$ are omitted in the representation of $\peer_2'$.}
\end{figure}

Let $\system_{\A}=\systemlist{\peer_1,\peer_2,\peer_3}$.
There is a
tight correspondence between the $k$-bounded traces of $\A$, for $k\geq 1$, and
the $k$-bounded traces of $\system_{\A}$: every trace
$\tau\in\traces{k}{\A}$ induces the trace $h(\tau)\in\traces{k}{\system_{\A}}$
where
$h:\actions{\Sigma}\to\actions{\messages}^*$
is the homomorphism from the traces of $\A$ to the traces of $\system_{\A}$
defined by $h(!a)=!\tagmsg{a}{1}{2}$ and
$h(?a)=!?\tagmsg{a}{1}{3}\cdot{}!?\tagmsg{a}{3}{2}\cdot{}?\tagmsg{a}{1}{2}\cdot{}!?\tagmsg{a}{2}{3}\cdot{}!?\tagmsg{a}{3}{1}$. The converse is not
true: there are traces of $\system_{\A}$ that are not prefixes
of a trace $h(\trace)$ for some $\trace\in\traces{k}{\A}$. This happens
when $\peer_1$ sends an order to dequeue $\tagmsg{a}{1}{3}$
that correspond to a transition $?a$ that $\A$ cannot execute. In that case,
the system blocks when $\peer_2$ has to execute the order. To sum up, we obtain the following result.

%\AFNOTE{pourquoi noter le système $\system_{\A,\specialmsg}'=\systemlist{\peer_1,\peer_2',\peer_3}$ et pas $\system_{\A}'=\systemlist{\peer_1,\peer_2',\peer_3}$ car le message $m$ apparait aussi partout dans $\system_{\A}$ ?  la figure 2 est difficile à lire, on ne voit pas bien les noms des processus. C'était pour gagner de la place. Pour le journal, nous avons toute la place...Je propose de donner l'automate A puis plus bas de reprendre la disposition spatiale des 3 processus communicants dans des boites avec les flêches . Je propose de dessiner spatialement differemment P1 pour faire apparaitre mieux les symétries,  d'écrire  le processus P2 avec 3 états de controle (au lieu de 6)  et 4 transitions (au lieu de 6): deux simples consommant a et m de 3 vers 2 puis deux boucles étiquetées par 3 envois/réceptions ; de même, le processus 3 serait avec un état de contrôle (au lieu de 7) et avec deux boucles simples étiquetées par 4 envois/réceptions de messages (au lieu de 8 transitions) et pour P'2 d'utiliser un symbole de variable de lettre, x, pour avoir 6 transitions visibles (au lieu de  9) et surtout on verrait mieux du premier coup d'oeil où sont les transitions similaires ?  Je n'ai pas vu si ça pose des pbs qqpart de compacter les transitions sur des boucles...}

\begin{lem}\label{lem:fifo-auto-encoding}
For all $k\geq 0$,
$$
\begin{array}{l@{~}l@{~}l}
\traces{k}{\system_{\A}} & = &
\prefixclosure\{h(\trace)\mid\trace\in\traces{k}{\A}\}
\\ & \cup &
\prefixclosure\{h(\trace)\cdot{}!?\tagmsg{a}{1}{3}\cdot{}!?\tagmsg{a}{3}{2}\mid\exists q,b,w,q'\mbox{ s.t. }
\\ && ~~~~~\trace\in\traces{k}{\A},(\state_0,\emptybuffer)\tr{\trace}(\state,bw),(\state,?a,\state')\in\transitions
\mbox{and }b\neq a\}.
\end{array}
$$
%\AFNOTE{not clear: do you want to say: $\exists q, w$ st $(\state_0,\emptybuffer)\tr{\trace}(\state,w)$ and $ (\state,?a,\state')\notin\transitions$ for all $q'$ ?}
\end{lem}

\subsection{A Synchronizable System}

Let us fix a special message $\specialmsg\in\Sigma$.
In this section, we define a system
$\system'_{\A,\specialmsg}=\systemlist{\peer_1,\peer_2',\peer_3}$
where
$\peer_1$ and $\peer_3$ are like in the system $\system_{\A}$,
but $\peer_2'$ is a new peer.
This system will later be combined with $\system_{\A}$ so
as to form the whole system that will be used in the reduction of
the message reception problem to the synchronizability problem.
The main purpose
of $\system'_{\A,\specialmsg}=\systemlist{\peer_1,\peer_2',\peer_3}$
is to be a synchronizable system that will ``make synchronizable''
all traces of $\system_{\A}$ except the ones that contain $\tagmsg{\amessage}{2}{3}$, which are the only ones we want to care about in the reduction.
Our outline for this section is therefore the following: (1) define $\system_{\A,\specialmsg}'$, (2) compute
its synchronous traces, (3) show that they ``make synchronizable''
the asynchronous traces of $\system_{\A}$ where
$!\tagmsg{\specialmsg}{2}{3}$ does not occur,
and (4)
show that it is a synchronizable system.

Let us start with the definition of $\system'_{\A,\specialmsg}=\systemlist{\peer_1,\peer_2',\peer_3}$.
Intuitively, the new peer $\peer_2'$ will always be able to receive any message
from peer $\peer_1$, in particular at the time the message is sent. Moreover,
like $\peer_2$, $\peer_2'$ will also be able to receive
orders to dequeue from peer $\peer_3$,
but instead of executing the order before sending an acknowledgement,
it will ignore the order as follows. If $\peer_2'$ receives the order
to dequeue a message $\tagmsg{a}{1}{2}\neq\tagmsg{\specialmsg}{1}{2}$,
$\peer_2'$ acknowledges $\peer_3$ but does not dequeue $a$ in the $1\to2$ queue.
If the order
was to dequeue $\specialmsg$, $\peer_2'$ blocks in the sink state $\state_{\bot}$ and does not send an acknowledgement to $\peer_3$. As for $\peer_2$,
we ``unroll the loop''
 so as to make sure that it not possible to come back to the initial
state of $\peer_2'$. Formally, $\peer_2'$ is defined as follows.
$$
\begin{array}{lll}
\states_2' & = &
\{\state_{0,2},\state_{0,2}'\}\cup\{\state_{a,1}'\mid a\in\Sigma,a\neq \specialmsg,\}\cup\{\state_{\bot}\}
\\
\transitions_2' & = &
\{(\state_{0,2},?\tagmsg{a}{1}{2},\state_{0,2}'),(\state,?\tagmsg{a}{1}{2},\state)\mid a\in\Sigma,\state\neq\state_{0,2}\}
\\ & \cup &
\{(\state_{0,2},?\tagmsg{a}{3}{2},\state_{a,1}'),(\state_{0,2}',?\tagmsg{a}{3}{2},\state_{a,1}'),(\state_{a,1}',!\tagmsg{a}{2}{3},\state_{0,2}'),\mid a\in\Sigma,a\neq \specialmsg\}
\\ & \cup & \{(\state,?\tagmsg{\specialmsg}{3}{2},\state_{\bot})\mid q\in \states_2'\}
\end{array}
$$

\begin{exa}
For $\Sigma=\{a,\specialmsg\}$, and $\A$ as in Example~\ref{ex:fifo-automaton-encoding}, $\peer_2'$ is depicted in Fig.~\ref{fig:fifo-automaton-encoding}
(omitting the transitions to the sink state $\state_{\bot}$).
\end{exa}

Let us now
compute the set of all synchronous traces of $\system_{\A,\specialmsg}'$.
Observe first that the system $\system_{\A,\specialmsg}'=\systemlist{\peer_1,\peer_2',\peer_3}$ contains
many synchronous traces:
when
$\peer_1$ sends a message $\tagmsg{a}{1}{2}$, it can always do it synchronously,
because $\peer_2'$ is always ready to receive it. When $\peer_1$ sends an order for dequeuing,
the transmission of this order to $\peer_2'$ through $\peer_3$
can be synchronous. If this order is not the order
to dequeue $\tagmsg{\specialmsg}{1}{2}$, then $\peer_2'$ sends the
acknowledgment to $\peer_1$ through $\peer_3$, which can also happen
synchronously.
Note in particular that, unlike in $\system_{\A}$,
peer 1 does not block forever after it has sent
an order $\tagmsg{a}{1}{3}$ in a configuration where the first
message to be dequeued in channel $1\to 2$ is not $a$,
because $\peer_2'$ now acknowledges any order (except for $\specialmsg$).
Therefore
any trace $\trace$ labeling a path in automaton
$\peer_1$ can be lifted to a synchronous trace $\tau'\in\traces{0}{\system_{\A,\specialmsg}'}$
provided $!\tagmsg{\specialmsg}{1}{3}$ does not occur in $\tau$. However, if
$\peer_1$ takes a $!\tagmsg{\specialmsg}{1}{3}$ transition,
it gets
blocked for ever waiting for $\tagmsg{\specialmsg}{3}{1}$.
Therefore, if $!\tagmsg{m}{1}{3}$ occurs in a synchronous trace $\tau$ of
$\system_{\A,\specialmsg}'$, it must be in the last four actions, and this
trace leads to a deadlock configuration in which both 1 and 3 wait for an acknowledgement and 2 is in the sink state.

Let us now formalize further these observations.
Let $L^{\specialmsg}(\A)$ be the set of traces $\trace$
recognized by $\A$ as a finite state automaton (over the alphabet
$\actions{\Sigma}$) such that either
$?\specialmsg$ does not occur in $\trace$, or it occurs only once and it
is the last action of $\trace$.

\begin{exa}
With $\A$ as in
Example~\ref{ex:fifo-automaton-encoding},
$
L^{\specialmsg}(\A)=~\prefixclosure\!
\Big((!a^*\cdot{}!\specialmsg\cdot{}?a)^*\cdot{}!a^*\cdot{}!\specialmsg\cdot{}?\specialmsg\Big).
$
\end{exa}

The next lemma formalizes the observations we did about how synchronous
traces of $\system_{\A,\specialmsg}'$ correspond, up to an homomorphism,
to $L^{\specialmsg}(\A)$, and gives the desired computation of the synchronous traces of $\system_{\A,\specialmsg}'$.

\begin{lem}\label{lem:peer2-variant}
$\traces{0}{\system'_{\A,\specialmsg}}=~\prefixclosure\!
\{h'(\trace)\mid \trace\in L^{\specialmsg}(\A)\}
$,
where $h':\actions{\Sigma}^*\to\actions{\messages}^*$
is the morphism defined by
\begin{itemize}
\item
$h'(!a)=!?\tagmsg{a}{1}{2}$ for all $a\in\Sigma$,
\item
$h'(?a)=!?\tagmsg{a}{1}{3}\cdot{}!?\tagmsg{a}{3}{2}\cdot{}!?\tagmsg{a}{2}{3}\cdot{}!?\tagmsg{a}{3}{1}$ for all $a\neq\specialmsg$, and
\item
$h'(?\specialmsg)=!?\tagmsg{\specialmsg}{1}{3}\cdot{}!?\tagmsg{\specialmsg}{3}{2}$.
\end{itemize}
\end{lem}

As a consequence, we get the following result, which will be later used
to ``make synchronizable'' all traces of $\system_{\A}$ that do not
contain $!\tagmsg{m}{2}{3}$.
\begin{lem}
\label{lem:make-synchronizable}
For all trace $\trace\in\traces{}{\system_{\A}}$ such
that $!\tagmsg{\specialmsg}{2}{3}\not\in\trace$, there is a synchronous
trace $\trace'\in\traces{0}{\system_{\A,\specialmsg}'}$ such that
$\sendprojection{\trace}=\sendprojection{\trace'}$.
\end{lem}
\begin{proof}
Let  $\trace\in\traces{}{\system_{\A}}$ such
that $!\tagmsg{\specialmsg}{2}{3}\not\in\trace$ be fixed.
By Lemma~\ref{lem:fifo-auto-encoding},
there is $\trace_0\in \traces{}{\A}$ such that
$\trace=h(\trace_0)$. By definition of $h$, $?\specialmsg$
does not occur in $\trace_0$.
Let $\trace'=h'(\trace_0)$.
By Lemma~\ref{lem:peer2-variant},
$\trace'\in\traces{0}{\system_{\A,\specialmsg}'}$,
and by definition of $h$ and $h'$, and the fact that $?\specialmsg$
does not occur in $\trace_0$,
$\sendprojection{\trace}=\sendprojection{\trace'}$.
\end{proof}

Let us finally establish the synchronizability of $\system_{\A,\specialmsg}'$.
We consider some arbitrary asynchronous trace $\trace\in\traces{}{\system_{\A,\specialmsg}'}$ that we need to be equivalent, up to receive actions,
to a synchronous trace. Let us reason message by message on $\trace$,
by case analysis on the channel of the message.

\begin{itemize}
\item
If $\peer_1$ sends a message $\tagmsg{a}{1}{2}$ to $\peer_2'$,
it is always possible to make sure
that $\peer_2'$ receives it immediately. Indeed, there are two cases:
if $\tagmsg{a}{1}{2}$ was not received in $\trace$, adding $?\tagmsg{a}{1}{2}$
right after $!\tagmsg{a}{1}{2}$ in $\trace$ yields a valid trace in
$\traces{}{\system_{\A,\specialmsg}'}$, because the transitions
$?\tagmsg{a}{1}{2}$ in $\peer_2'$ do not modify the control state;
similary, if $\tagmsg{a}{1}{2}$ was received in $\trace$ but not
immediately after $!\tagmsg{a}{1}{2}$, it is possible
to move $?\tagmsg{a}{1}{2}$ immediately after
$!\tagmsg{a}{1}{2}$ in $\trace$ while keeepin a valid trace in
$\traces{}{\system_{\A,\specialmsg}'}$, again because the transitions
$?\tagmsg{a}{1}{2}$ in $\peer_2'$ do not modify the control state. In the remainder, we therefore assume that all $!\tagmsg{a}{1}{2}$ in $\trace$
are immediately followed by $?\tagmsg{a}{1}{2}$.
\item
If $\peer_1$ sends a message to $\peer_3$,
it is always possible to make sure
that $\peer_3$ receives it immediately. Indeed, it is always the case that
whenever $\peer_1$ sends a message to $\peer_3$, $\peer_3$ is in its initial state, otherwise $\peer_1$ would be waiting
for an acknowledgment from $\peer_3$, and won't be able to send a message to
$\peer_3$.
\item For the same reason,
if $\peer_2'$ sends a message $\tagmsg{a}{2}{3}$ to $\peer_3$,
it must be the case that
$\peer_3$ is blocked waiting for this message, and we can either move
$?\tagmsg{a}{2}{3}$ right after $!\tagmsg{a}{2}{3}$ or insert it in $\trace$
if it was not there.
\item
For the same reason, if $\peer_3$ sends a message
to $\peer_1$, it is always
possible to make sure that $\peer_1$ receives it immediately.
\item Finally, let us consider the case of $\peer_3$ sending a message $\tagmsg{a}{3}{2}$ to $\peer_2'$.
It must be the case that
$\peer_2'$ is either in its initial state $q_{0,2}$ or in the similar
receiving state $q_{0,2'}$ at the moment of the sending. Indeed, if $\peer_2'$
was in a state $q_{a,1}'$, $\peer_3$ would be blocked waiting for an
acknowledgment from $\peer_2'$, so it would not have been able to send a message to $\peer_2'$. So $\peer_2'$ is either in its initial state $q_{0,2}$ or in the similar
receiving state $q_{0,2'}$ at the moment of the sending $!\tagmsg{a}{3}{2}$.
With the same reasoning, it also holds that the buffer $3\to 2$ was empty
before the sending of $\tagmsg{a}{3}{2}$.
Since there are no send transitions from
$q_{0,2}$ and $q_{0,2'}$, and since we assumed above that all $?\tagmsg{a}{1}{2}$
immediately follow their matching send in $\trace$,
the only possible first action of $\peer_2'$
in $\trace$ after $!\tagmsg{a}{3}{2}$ is $?\tagmsg{a}{3}{2}$.
If this action exists in $\trace$, we can move it right after the sending
of $\peer_3$ up to causal equivalence.
If $?\tagmsg{a}{2}{3}$ does not happen in $\trace$ after this
$!\tagmsg{a}{3}{2}$, it means that no further action of $\peer_2'$ occurs
in $\trace$ after $!\tagmsg{a}{3}{2}$. So we can insert $?\tagmsg{a}{3}{2}$
in $\trace$ right after $!\tagmsg{a}{3}{2}$ while keeping
a valid trace in $\traces{}{\system_{\A,\specialmsg}'}$.
\end{itemize}

In order to sum up what we showed with this case analysis,
let us introduce the homomorphism
$h'':\actions{\messages}^*\to\actions{\messages}^*$
such that
\begin{itemize}
\item $h''(!\tagmsg{a}{1}{2})=!?\tagmsg{a}{1}{2}$,
\item $h''(?\tagmsg{a}{1}{2})=\emptytrace$, and
\item $h''(\action)=\action$ otherwise.
\end{itemize}

For any given $\trace\in\traces{}{\system_{\A,\specialmsg}'}$,
our case analysis shows that
$h''(\trace)\in\traces{0}{\system_{\A,\specialmsg}'}$.
It is also easy to observe that $\sendprojection{\trace}=\sendprojection{h''(\trace)}$.
As a consequence, we get the desired result.

\begin{lem}\label{lem:system-2-synchronizable}
$\system_{\A,\specialmsg}'$ is synchronizable.
\end{lem}

\subsection{Reducing the Message Reception Problem to Synchronizability}

We are now close to reach our initial goal, namely to reduce the message
reception problem to synchronizability.
Let us
consider the system
$\system''_{\A,\specialmsg}=\systemlist{\peer_1,\peer_2\cup\peer_2',\peer_3}$,
where $\peer_2\cup\peer_2'=\tuple{\states_2\cup\states_2',q_{02},\transitions_2\cup\transitions_2'}$ is obtained by merging
the initial state $\state_{0,2}$ of $\peer_2$ and $\peer_2'$.

It is now time to explain why we defined $\peer_2$ and $\peer_2'$
so that it is not possible to come back to the initial state
$\state_{0,2}$. While doing so, we make sure that
any trace of
$\system''_{\A,\specialmsg}$ is either a trace of $\system_{\A}$ or
a trace of $\system_{\A,\specialmsg}'$.
$$
\traces{k}{\system_{\A,\specialmsg}''}=
\traces{k}{\system_{\A}}\cup
\traces{k}{\system_{\A,\specialmsg}'},
$$
In particular,
$$
\sendtracesweak{k}{\system_{\A,\specialmsg}''}=
\sendtracesweak{k}{\system_{\A}}\cup
\sendtracesweak{k}{\system_{\A,\specialmsg}'},
\qquad \mbox{and} \qquad
\sendtraces{k}{\system_{\A,\specialmsg}''}=
\sendtraces{k}{\system_{\A}}\cup
\sendtraces{k}{\system_{\A,\specialmsg}'}.
$$

The next lemma establishes the soundness of the reduction of message reception
to language synchronizability. The reduction to synchronizability will be later
treated.

\begin{lem}
\label{lem:reduction-message-reception-language-synchronizability}
$\system_{\A,\specialmsg}''$ is not language synchronizable
iff there is a trace $\trace\in\traces{}{\A}$ such that
$?\specialmsg$ occurs in $\trace$.
\end{lem}

\begin{proof}
($\Rightarrow$) Assume that $\system_{\A,\specialmsg}''$ is not language
synchronizable
and let us show that there is a trace $\trace\in\traces{}{\A}$ such that
$?\specialmsg$ occurs in $\trace$.

Let us first observe that if $\system_{\A,\specialmsg}''$ is not synchronizable,
then $\sendtracesweak{}{\system_{\A,\specialmsg}''}\setminus\sendtracesweak{0}{\system_{\A,\specialmsg}''}\neq\emptyset$. Since by Lemma~\ref{lem:system-2-synchronizable} $\system_{\A,\specialmsg}'$ is synchronizable,
$\sendtracesweak{}{\system_{\A,\specialmsg}'}\setminus\sendtracesweak{0}{\system_{\A,\specialmsg}'}=\emptyset$. So
$\sendtracesweak{}{\system_{\A}}\setminus\sendtracesweak{0}{\system_{\A,\specialmsg}''}\neq\emptyset$, and in particular
$$
\sendtracesweak{}{\system_{\A}}\setminus\sendtracesweak{0}{\system_{\A,\specialmsg}'}\neq\emptyset.
$$
Let $\trace_1\in\traces{}{\system_A}$ be such that
$\sendprojection{\trace_1}\not\in\sendtracesweak{0}{\system_{\A,\specialmsg}}$.
Then $!\tagmsg{\specialmsg}{2}{3}$ occurs in $\trace_1$.
Indeed, if it was not the case, then by Lemma~\ref{lem:make-synchronizable},
we would have $\sendprojection{\trace_1}\in\sendtracesweak{0}{\system_{\A,\specialmsg}}$. Now, since $!\tagmsg{\specialmsg}{2}{3}$ occurs in $\trace_1$,
by Lemma~\ref{lem:fifo-auto-encoding}, there is a trace
$\trace\in\traces{}{\A}$ such that $h(\trace)=\trace_1$,
and by definition of $h$, $?\specialmsg$ occurs in $\trace$.

($\Leftarrow$) Assume that there is a trace
$\trace\in\traces{}{\A}$ such that $?\specialmsg$ occurs in $\trace$,
and let us show that $\system_{\A,\specialmsg}''$ is not language synchronizable.
By Lemma~\ref{lem:fifo-auto-encoding},
$h(\trace)\in\traces{}{\system_{\A}}$. In order to show
that $\system_{\A,\specialmsg}''$ is not synchronizable (resp. not
language synchronizable), let us show
that $\sendprojection{h(\trace)}\in\sendtracesweak{}{\system_{\A,\specialmsg}''}$
and
$\sendprojection{h(\trace)}\not\in\sendtracesweak{0}{\system_{\A,\specialmsg}''}$.
\begin{itemize}
\item $\sendprojection{h(\trace)}\in\sendtracesweak{}{\system_{\A,\specialmsg}''}$. Indeed, $\sendprojection{h(\trace)}\in\sendtracesweak{}{\system_A}$ because $h(\trace)\in\traces{}{\system_{\A}}$, and
$\sendtracesweak{}{\system_{\A}}\subseteq\sendtracesweak{}{\system_{\A,\specialmsg}''}$.
\item Let us show that
$\sendprojection{h(\trace)}\not\in\sendtracesweak{0}{\system_{\A,\specialmsg}''}$. Since $\sendtracesweak{0}{\system_{\A,\specialmsg}''}=\sendtracesweak{0}{\system_{\A}}\cup\sendtracesweak{0}{\system_{\A,\specialmsg}'}$, we need
to show that
$\sendprojection{h(\trace)}\not\in\sendtracesweak{0}{\system_{\A}}$
and
$\sendprojection{h(\trace)}\not\in\sendtracesweak{0}{\system_{\A,\specialmsg}'}$.
\begin{itemize}
\item $\sendprojection{h(\trace)}\not\in\sendtracesweak{0}{\system_{\A,\specialmsg}'}$. Indeed, since $?\specialmsg$ occurs in $\trace$, $!\tagmsg{\specialmsg}{2}{3}$ occurs in $h(\trace)$, but by definition,
$\peer_2'$, contains no $!\tagmsg{\specialmsg}{2}{3}$ transition, so
a trace that contains $!\tagmsg{\specialmsg}{2}{3}$ can't be trace of
$\system_{\A,\specialmsg}''$, therefore $\sendprojection{h(\trace)}\not\in\sendtracesweak{0}{\system_{\A,\specialmsg}'}$.
\item
$\sendprojection{h(\trace)}\not\in\sendtracesweak{0}{\system_{\A}}$.
Let us assume by absurd that there is a synchronous trace
$\trace'\in\traces{0}{\system_{\A}}$ such that $\sendprojection{\trace'}=\sendprojection{h(\trace)}$.
By Lemma~\ref{lem:fifo-auto-encoding}, there is
$\trace_0\in\traces{0}{\A}$ such that either $\trace'$ is a prefix of
$h(\trace_0)$, or $\trace'$ is a prefix of
$h(\trace_0)\cdot{}!?\tagmsg{a}{1}{3}\cdot{}!?\tagmsg{a}{3}{2}$ for some $a\in\Sigma$.
Since $\trace'$ is a synchronous trace, and by definition of $h$,
$\trace_0$ must only contain receptions, and since $\trace_0$ corresponds to
a trace in $\A$, $\trace_0$ is the empty trace. So $\trace'$ is
a prefix of $!?\tagmsg{a}{1}{3}\cdot{}!?\tagmsg{a}{3}{2}$ for some $a\in\Sigma$,
and $!\tagmsg{a}{2}{3}$ does not occur in $\trace'$.
This contradicts $\sendprojection{\trace'}=\sendprojection{h(\trace)}$ and
the fact that $!\tagmsg{a}{2}{3}$ does occur in $h(\trace)$.
\end{itemize}
\end{itemize}
\end{proof}

Let us now establish the soundness of the reduction of message reception to
synchronizability, instead of language synchronizability. It is slightly more involved due to the possible existence of stable traces of $\system_{\A}$
that are not ``catched'' by a stable synchronous trace of $\system{\A}$.
This is actually only a minor problem, and we will actually fix it with the following extra hypothesis on the FIFO automaton $\A$.

\begin{defi}
A FIFO automaton $\A$ is \emph{good for reduction} if the only stable trace of $\A$
is the empty trace.
\end{defi}

Note that the FIFO automaton $\A$ that we defined in the proof of the undecidability of the message reception problem is good for reduction: indeed, after the
first row of the tiling has been queued, the automaton always queues a new tile
right after it has dequeued a tile, or queues the marker of the end of the row
(\$)
right after it dequeues it. So the buffer always contains either at least
one tile or the \$ marker, except in the initial configuration.

\begin{lem}
\label{lem:reduction-message-reception-synchronizability}
Assume that $\A$ is good for reduction.
Then $\system_{\A,\specialmsg}''$ is not synchronizable
iff there is a trace $\trace\in\traces{}{\A}$ such that
$?\specialmsg$ occurs in $\trace$.
\end{lem}

\begin{proof}
Let us show that, under the hypothesis that $\A$ is good for reduction,
$\system_{\A,\specialmsg}''$ is synchronizable if and only if
$\system_{\A,\specialmsg}'$ is language synchronizable, which, by Lemma~\ref{lem:reduction-message-reception-language-synchronizability} will entail
what we need to prove.

($\Rightarrow$) Let us assume that
$\system_{\A,\specialmsg}''$ is synchronizable and let us show that
$\system_{\A,\specialmsg}''$ is language synchronizable.
This implication actually holds for any system $\system$.
Indeed,
if $\system$ is synchronizable, then
$\sendtraces{}{\system}\setminus\sendtraces{0}{\system}=\emptyset$.
Since $\sendtracesweak{}{\system}\subseteq \sendtraces{}{\system}$,
we have in particular
$$
\sendtracesweak{}{\system}\setminus\sendtraces{0}{\system}=\emptyset.
$$
By definition, $\sendtraces{0}{\system}=\sendtracesweak{0}{\system}\cup\mathsf{Stab}$, where $\mathsf{Stab}$ is a set of pairs $(\sendprojection{\trace},\configuration)$; such pairs do not belong to $\sendtracesweak{}{\system}$,
so
$$
\sendtracesweak{}{\system}\setminus\sendtraces{0}{\system}=
\sendtracesweak{}{\system}\setminus\sendtracesweak{0}{\system}.
$$
As a consequence, $\sendtracesweak{}{\system}\setminus\sendtracesweak{0}{\system}=\emptyset$, and since $\sendtracesweak{0}{\system}\subseteq \sendtracesweak{}{\system}$, we finally get
$$
\sendtracesweak{}{\system} = \sendtraces{0}{\system}.
$$

($\Leftarrow$) Let us assume that
$\sendtracesweak{k}{\system_{\A,\specialmsg}''}=\sendtracesweak{0}{\system_{\A,\specialmsg}''}$, and let us show that
$\sendtraces{k}{\system_{\A,\specialmsg}''}=\sendtraces{0}{\system_{\A,\specialmsg}''}$. The inclusion $\sendtraces{0}{\system_{\A,\specialmsg}''}\subseteq \sendtraces{}{\system_{\A,\specialmsg}''}$ holds for any system. Let us therefore
show that
$\sendtraces{k}{\system_{\A,\specialmsg}''}\subseteq \sendtraces{0}{\system_{\A,\specialmsg}''}$. Since
$\sendtraces{k}{\system_{\A,\specialmsg}''}
=
\sendtraces{k}{\system_{\A}}
\cup
\sendtraces{k}{\system_{\A,\specialmsg}'}$
for all $k\geq 0$,
we have to show that
$\sendtraces{}{\system_{\A}}\subseteq\sendtraces{0}{\system_{\A}}\cup\sendtraces{0}{\system_{\A,\specialmsg}'}$
and
$\sendtraces{}{\system_{\A,\specialmsg}'}\subseteq\sendtraces{0}{\system_{\A}}\cup\sendtraces{0}{\system_{\A,\specialmsg}'}$.
\begin{itemize}
\item $\sendtraces{}{\system_{\A,\specialmsg}'}\subseteq\sendtraces{0}{\system_{\A}}\cup\sendtraces{0}{\system_{\A,\specialmsg}'}$ follows from Lemma~\ref{lem:system-2-synchronizable}.
\item Let us show that $\sendtraces{}{\system_{\A}}\subseteq\sendtraces{0}{\system_{\A}}\cup\sendtraces{0}{\system_{\A,\specialmsg}'}$.
Since $\system_{\A,\specialmsg}''$ is language synchronizable by hypothesis,
we have in particular that $\sendtracesweak{}{\A}\subseteq \sendtracesweak{0}{\system_{\A}}\cup\sendtracesweak{0}{\system_{\A,\specialmsg}'}$. So we only need
to prove that for all stable trace $\trace$ of $\system_{\A}$, there is
a stable synchronous trace $\trace'$ of
$\system_{\A,\specialmsg}''$ leading to the same configuration and such that
$\sendprojection{\trace}=\sendprojection{\trace'}$.
We will actually show that the only stable traces of $\system_{\A}$
are synchronous, and therefore we can even take $\trace'=\trace$.

Let $\trace\in\traces{}{\system_{\A}}$ a stable trace be fixed,
and let us show that $\trace$ is synchronous.
By Lemma~\ref{lem:fifo-auto-encoding}, there is a trace
$\trace_0\in\traces{}{\A}$ and a message $a\in\Sigma$
such that either $\trace=h(\trace_0)$, or
$\trace=h(\trace_0)\cdot{}!?\tagmsg{a}{1}{3}$, or
$\trace=h(\trace_0)\cdot{}!?\tagmsg{a}{1}{3}\cdot{}!?\tagmsg{a}{3}{2}$.
By definition of $h$, if $\trace$ is stable, then $\trace_0$ is stable too.
Since $\A$ is good for reduction, $\trace_0$ must be the empty trace.
So either $\trace$ is the empty trace, or $\trace=!?\tagmsg{a}{1}{3}$,
or $\trace=!?\tagmsg{a}{1}{3}\cdot{}!?\tagmsg{a}{3}{2}$. In all cases, $\trace$
is a synchronous trace, which ends the proof.
\end{itemize}
\end{proof}

\begin{thm}\label{thm:undecidability}
Synchronizability (resp. language synchronizability) is undecidable.
\end{thm}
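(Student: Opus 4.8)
The plan is to reduce the undecidable problem of Lemma~\ref{lem:undecidability-FIFO-automaton} to synchronizability, using the system $\system''_{\A,\specialmsg}$ assembled above as the reduction target. Given a FIFO automaton $\A$ satisfying $(R1)$ and $(R2)$ together with a distinguished message $m$, I set $\specialmsg=m$ and build $\system''_{\A,\specialmsg}$. The whole point of the construction is that Lemma~\ref{lem:synchronizability-is-message-reception} already identifies, for every fixed bound $k\geq 1$, the failure of $k$-synchronizability with the existence of a trace $\trace$ such that $\trace\cdot{}?\specialmsg\in\traces{k}{\A}$; so the only real work left is to lift this bound-by-bound equivalence to the unbounded notion of synchronizability and to check that the language variant is handled by the same argument.

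First I would recall from the preliminaries that a system is synchronizable iff it is $k$-synchronizable for every $k\geq 1$, i.e. $\sendtraces{0}{\system''_{\A,\specialmsg}}=\sendtraces{k}{\system''_{\A,\specialmsg}}$ for all $k\geq 1$. Reading Lemma~\ref{lem:synchronizability-is-message-reception} contrapositively for each such $k$, this means $\system''_{\A,\specialmsg}$ is synchronizable exactly when, for every $k\geq 1$, there is no $\trace$ with $\trace\cdot{}?\specialmsg\in\traces{k}{\A}$. Since a synchronous trace is $1$-bounded we have $\traces{0}{\A}\subseteq\traces{1}{\A}$, hence $\traces{\omega}{\A}=\bigcup_{k\geq 1}\traces{k}{\A}$, and the quantifier over $k$ collapses: the condition above holds exactly when there is no $\trace$ with $\trace\cdot{}?\specialmsg\in\traces{\omega}{\A}$. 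In other words, $\system''_{\A,\specialmsg}$ is synchronizable if and only if $\A$ has no trace ending in the reception $?\specialmsg$. A decision procedure for synchronizability applied to $\system''_{\A,\specialmsg}$ would thus decide this non-reception property and, by complementation, the reception question of Lemma~\ref{lem:undecidability-FIFO-automaton}, which is undecidable; hence synchronizability is undecidable.

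For language synchronizability I would re-run the proof of Lemma~\ref{lem:synchronizability-is-message-reception} with $\sendtracesweak{}$ in place of $\sendtraces{}$, which goes through verbatim: in the direction $(1)\Rightarrow(2)$ the distinguishing send trace $\tracebis=\sendprojection{h(\trace\cdot{}?\specialmsg)}$ already lives at the level of weak send traces, and it lies in $\sendtracesweak{k}{\system''_{\A,\specialmsg}}$ but not in $\sendtracesweak{0}{\system''_{\A,\specialmsg}}$ because $\tagmsg{\specialmsg}{2}{3}$ occurs in $\tracebis$ whereas, by Lemmas~\ref{lem:fifo-auto-encoding} and~\ref{lem:peer2-variant}, no synchronous send trace of the system ever contains $\tagmsg{\specialmsg}{2}{3}$. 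Conversely, the inclusions used in $(2)\Rightarrow(1)$ are containments of send-trace sets and project to the weak level, while $\sendtracesweak{k}{\system'_{\A,\specialmsg}}=\sendtracesweak{0}{\system'_{\A,\specialmsg}}$ follows from Lemma~\ref{lem:peer2-variant2} (synchronizability implies language synchronizability, as $\sendtracesweak{}$ is the first-component projection of $\sendtraces{}$). The same collapse of the $k$-quantifier then yields that $\system''_{\A,\specialmsg}$ is language synchronizable iff $\A$ has no trace ending in $?\specialmsg$, so language synchronizability is undecidable by the same reduction.

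The only genuinely delicate point is the passage from the per-bound statement of Lemma~\ref{lem:synchronizability-is-message-reception} to the unbounded property: one must ensure that \emph{failing} $k$-synchronizability at some bound really witnesses a trace of $\A$ reaching $?\specialmsg$ within that same bound $k$, so that the union $\traces{\omega}{\A}=\bigcup_{k}\traces{k}{\A}$ captures precisely the disjunction over all $k$. This is exactly what Lemma~\ref{lem:synchronizability-is-message-reception} provides, since it is an equivalence for each fixed $k$ rather than a one-way implication; consequently the reduction is clean, and everything else is bookkeeping together with the observation that undecidability of a problem entails undecidability of its complement.
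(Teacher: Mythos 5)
Your proof is correct and follows essentially the same route as the paper: reduce from Lemma~\ref{lem:undecidability-FIFO-automaton} via the system $\system''_{\A,\specialmsg}$ and Lemma~\ref{lem:synchronizability-is-message-reception}. You merely spell out two points the paper leaves implicit --- the collapse of the per-bound equivalence over all $k$ to the $\traces{\omega}{\A}$ formulation, and the adaptation to language synchronizability --- both of which you handle correctly.
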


\begin{proof}
Let a FIFO automaton $\A$ that is good for reduction over the message alphabet
$\Sigma$
and let  a message $\specialmsg\in\Sigma$ be fixed.
By Lemma~\ref{lem:reduction-message-reception-language-synchronizability},
$\system_{\A,\specialmsg}''$ is
(language) synchronizable
iff $\sendtracesweak{k}{\system_{\A}}\setminus\sendtracesweak{0}{\system_{\A,\specialmsg}'}=\emptyset$. By Lemma~\ref{lem:reduction-message-reception-synchronizability},
$\sendtracesweak{k}{\system_{\A}}\setminus\sendtracesweak{0}{\system_{\A,\specialmsg}'}=\emptyset$ iff
there is no trace $\trace$ such that
$\trace\cdot{}?\specialmsg\in\traces{}{\A}$. By
Lemma~\ref{lem:undecidability-FIFO-automaton}, this is an undecidable problem.
\end{proof}
%
%\AFNOTE{je crois que ce résultat se décline comme un corollaire aux machines de Minsky qui sont des automates fifo à un seul message et le rôle du message $m$ pourrait être joué par un test à zéro}
%\ELNOTE{Non, car il faut un alphabet de file qui puisse contenir tout un jeu de tuiles de Wang, cf preuve du lemme~\ref{lem:undecidability-FIFO-automaton}.}
\section{The case of oriented rings\label{sec:bipartite}}
In the previous section we established the undecidability
of synchronizability for systems with (at least) three peers. In this section,
we show that this result is tight, in the sense that synchronizability
is decidable if $G_M$ is an oriented ring, in particular if the system involves
two peers only. This relies on the fact
that 1-synchronizability implies synchronizability for such systems.
\ificalp
This property is highly non-trivial, and below we only sketch the main steps of the proof, identifying
when the hypothesis on the ring topology becomes necessary.
\else
In order to show this result, we first establish some
confluence properties on traces for arbitrary topologies.
With the help of this confluence properties, we can state a trace normalization
property that is similar to the one that was used in~\cite{BasuBO12} and
for half-duplex systems~\cite{CeceF05}. This trace normalization property implies
that $1$-synchronizable systems on oriented rings have no unspecified
receptions\footnote{An unspecified reception occurs when a process $P$
is in a receiving state, some messages awaits for $P$ receiving them,
but they are not the ones that $P$ may dequeue. See~\cite{CeceF05} for formal definitions.},
and their reachability set is channel-recognizable.
Finally, this trace normalization property leads to a proof that $1$-synchronizability
implies synchronizability when $G_M$ is an oriented ring.
\fi

\ificalp
The starting point is a confluence property on arbitrary $1$-synchronizable systems.
\medskip
\else
\subsection{Confluence properties}
\fi

The following confluence property holds for any synchronizable system
(see also Fig~\ref{fig:send-send-diagram}).

\begin{lem}[Weak commutativity]\label{lem:send-send-diagram}
Let $\system$ be a $1$-synchronizable system.
Let $\trace\in\traces{0}{\system}$
and $a,b\in\messages$ be such that

\begin{enumerate}
\item
$\trace\cdot{} !a\in\traces{1}{\system}$,
\item
$\trace\cdot{} !b\in\traces{1}{\system}$, and
\item
$\msgsrc(a)\neq\msgsrc(b)$.
\end{enumerate}
If $\tracebis_1,\tracebis_2$ are any two of the six different shuffles
of $!a\cdot{}?a$ with $!b\cdot{}?b$, then
$\trace\cdot{}\tracebis_1\in\traces{}{\system}$,
$\trace\cdot{}\tracebis_2\in\traces{}{\system}$ and
$\trace\cdot{}\tracebis_1\systemequiv\trace\cdot{}\tracebis_2$.
\end{lem}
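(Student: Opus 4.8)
The plan is to lean on two structural facts. First, a synchronous trace is completely determined by its send-projection (it is $!?c_1\cdot{}!?c_2\cdots{}!?c_m$ where $c_1\cdots{}c_m$ is its send-projection), so $\sendtracesweak{0}{\system}$ and $\sendtraces{0}{\system}$ are in bijection with, respectively, the sets of admissible synchronous send-words and their stable endpoints. Second, $1$-synchronizability is $\sendtraces{0}{\system}=\sendtraces{1}{\system}$ (and hence $\sendtracesweak{0}{\system}=\sendtracesweak{1}{\system}$). From the latter I extract the key \emph{device}: if $\rho$ is any $1$-bounded trace with $\configuration_0\tr{\rho}\configuration^*$ and $\configuration^*$ stable, then the \emph{unique} synchronous trace $\sigma$ with $\sendprojection{\sigma}=\sendprojection{\rho}$ also satisfies $\configuration_0\tr{\sigma}\configuration^*$, since $(\sendprojection{\rho},\configuration^*)\in\sendtraces{1}{\system}=\sendtraces{0}{\system}$ forces a synchronous witness, which can only be $\sigma$. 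The \emph{weak} device is the same argument on $\sendtracesweak{1}{\system}=\sendtracesweak{0}{\system}$: the mere existence of a $1$-bounded trace with a given send-projection makes the corresponding synchronous trace a member of $\traces{0}{\system}$.

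First I would record the easy preliminaries. As $\trace\in\traces{0}{\system}$ it reaches a stable configuration $\configuration$, and since $\causalequiv$ is a congruence I may reason from $\configuration$. Because $\msgsrc(a)\neq\msgsrc(b)$, the messages $a$ and $b$ sit in \emph{distinct} channels, so each of the six shuffles prefixed by $\trace$ is a $1$-bounded trace ending in a stable configuration --- exactly what lets the device above apply to each of them. The two leading sends $!a$ and $!b$ are independent and neither disables the other, so $\trace\cdot{}!a\cdot{}!b$ and $\trace\cdot{}!b\cdot{}!a$ are valid, causal-equivalent, and reach a common (non-stable) configuration $\configuration'$ with both channels loaded. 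Feeding $\sendprojection{\trace}\cdot{}a\cdot{}b$ and $\sendprojection{\trace}\cdot{}b\cdot{}a$ to the weak device puts both synchronous representatives $\trace\cdot{}!?a\cdot{}!?b$ and $\trace\cdot{}!?b\cdot{}!?a$ into $\traces{0}{\system}$. A short inspection shows that for every peer carrying two of the four actions $!a,?a,!b,?b$, the two representatives realize \emph{both} orders of those two actions; hence every shuffle's per-peer order is realized in at least one representative, and replaying these local transition sequences from $\configuration'$ (the channels being loaded) shows that every one of the six shuffles is a valid trace of $\system$.

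The core step is to exhibit a single stable $\configuration^*$ reached by all six shuffles, which settles every pair at once since $\trace_1\systemequiv\trace_2$ only asks for one common configuration (this also sidesteps the non-transitivity of $\systemequiv$). I would fix a run of $\trace\cdot{}!a\cdot{}!b\cdot{}?a\cdot{}?b$ reaching some $\configuration^*$; commuting the two independent leading sends gives $\trace\cdot{}!b\cdot{}!a\cdot{}?a\cdot{}?b\to\configuration^*$ with the same endpoint. Applying the strong device to these two traces, whose send-projections are $a\cdot{}b$ and $b\cdot{}a$, forces \emph{both} synchronous representatives to reach precisely $\configuration^*$. For the two remaining shuffles $\trace\cdot{}!a\cdot{}!b\cdot{}?b\cdot{}?a$ and $\trace\cdot{}!b\cdot{}!a\cdot{}?b\cdot{}?a$ (receive order $?b$ then $?a$), I would replay, for each two-action peer, the per-peer transitions read off from whichever representative both reaches $\configuration^*$ and matches that peer's order; the loaded channels and already-advanced senders in $\configuration'$ guarantee the replay lands again in $\configuration^*$. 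Thus all six shuffles reach $\configuration^*$ and are pairwise $\systemequiv$.

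The main obstacle, and the one place where $1$-synchronizability is genuinely used, is the \emph{receive diamond}: when $\msgdst(a)=\msgdst(b)$ the common receiver must reach the same state whether it absorbs $a$ then $b$ or $b$ then $a$ --- something neither the automaton structure nor commutation provides. It is delivered exactly by the device: the $?a?b$-shuffle reaches $\configuration^*$, so after commuting its leading sends the strong device puts $\trace\cdot{}!?b\cdot{}!?a$ at $\configuration^*$, which means the receiver \emph{can} reach its $\configuration^*$-state via $?b?a$, and that path is what the $?b?a$-shuffles replay. A parallel subtlety is the send/receive coincidence $\msgdst(a)=\msgsrc(b)$ (or symmetrically $\msgdst(b)=\msgsrc(a)$), where one peer both receives one message and sends the other in opposite orders across shuffles; here the two valid synchronous representatives already expose both orderings of that peer's transitions. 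These coincidences are mutually constrained by $\msgsrc(a)\neq\msgsrc(b)$ and by $\msgsrc\neq\msgdst$ for each message (e.g. $\msgdst(a)=\msgdst(b)$ and $\msgdst(a)=\msgsrc(b)$ cannot co-occur), so the per-peer replay is always sourced from a representative that reaches $\configuration^*$, and the remaining bookkeeping is a finite case analysis rather than a new difficulty.
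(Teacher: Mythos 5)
Your proposal is correct and follows essentially the same route as the paper's proof: $1$-synchronizability applied to $\trace\cdot{}!a\cdot{}!b$ and $\trace\cdot{}!b\cdot{}!a$ yields the two synchronous representatives $\trace\cdot{}!?a\cdot{}!?b$ and $\trace\cdot{}!?b\cdot{}!?a$; a per-peer inspection (which the paper isolates as a separate auxiliary lemma) shows every shuffle projects, peer by peer, onto one of these two, giving membership in $\traces{1}{\system}$; and $1$-synchronizability is invoked once more to pin the stable endpoint down to the order of the two sends. Your insistence on a single common configuration $\configuration^*$ reached by all six shuffles is just a slightly more explicit rendering of the paper's final step, which instead notes that each shuffle is $\systemequiv$ to $\trace\cdot{}!a\cdot{}!b\cdot{}?a\cdot{}?b$ or to $\trace\cdot{}!b\cdot{}!a\cdot{}?a\cdot{}?b$ and that these two are causally equivalent.
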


\begin{figure}
\begin{tikzpicture}[scale=.7]

\begin{scope}[every node/.style={scale=.2}]
\node (0) at (0,0) {};
\node (1) at (-1,-1) {};
\node (2) at (1,-1) {};
\node (3) at (-2,-2) {};
\node (4) at (0,-2) {};
\node (5) at (2,-2) {};
\node (6) at (-1,-3) {};
\node (7) at (1,-3) {};
\node (8) at (0,-4) {};
\end{scope}

\node at (0,-1) {$\causalequiv$};

\draw[->] (0) edge node [above left] {$!a$} (1);
\draw[->] (0) edge node [above right] {$!b$} (2);
\draw[->,dashed] (1) edge node [pos=.3,below] {$!b$} (4);
\draw[->,dashed] (1) edge node [above left] {$?a$} (3);
\draw[->,dashed] (2) edge node [pos=.3,below] {$!a$} (4);
\draw[->,dashed] (2) edge node [above] {$?b$} (5);
\draw[->,dashed] (3) edge node [below left] {$!b$} (6);
\draw[->,dashed] (4) edge node [pos=.3,below] {$?a$} (6);
\draw[->,dashed] (4) edge node [pos=.3,below] {$?b$} (7);
\draw[->,dashed] (5) edge node [below right] {$!a$} (7);
\draw[->,dashed] (6) edge node [below left] {$?b$} (8);
\draw[->,dashed] (7) edge node [below right] {$?a$} (8);

\end{tikzpicture}
\caption{\label{fig:send-send-diagram}Diagrammatic representation of Lemma~\ref{lem:send-send-diagram}}
\end{figure}

\begin{rem}
This lemma should not be misunderstood as a consequence of
causal equivalence. Observe indeed that
the square on top of the diagram is the only square that commutes for causal equivalence. The three other squares only
commute with respect to $\systemequiv$, and they commute for
$\causalequiv$ only if some extra assumptions on $a$ and $b$ are made. For instance, the left square does commute for
$\causalequiv$ if and only if $\msgdst(a)\neq \msgsrc(b)$.
\end{rem}

\ificalp\else
Before we prove Lemma~\ref{lem:send-send-diagram}, let us first prove the following.

\begin{lem}
\label{lem:send-send-diagram-aux}
Let $a,b$ be two messages such that $\msgsrc(a)\neq\msgsrc(b)$.
Then for all peers $i$, for all shuffle $\tracebis$ 
of $!a\cdot{}?a$ with $!b\cdot{} ?b$,
either $\machineprojection{i}{\tracebis}=\machineprojection{i}{!?a\cdot{}!?b}$
or $\machineprojection{i}{\tracebis}=\machineprojection{i}{!?b\cdot{}!?a}$.
\end{lem}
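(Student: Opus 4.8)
The plan is to prove Lemma~\ref{lem:send-send-diagram-aux} by a direct case analysis on the shape of the shuffle, exploiting the fact that projection onto a peer $i$ only retains actions belonging to that peer, and that the peer of $!a$ differs from the peer of $!b$ by hypothesis~($\msgsrc(a)\neq\msgsrc(b)$). There are six shuffles of $!a\cdot{}?a$ with $!b\cdot{}?b$ (the interleavings of two two-letter words that keep each word's internal order), namely
$$
\begin{array}{lll}
\tracebis_1 = !a\cdot{}?a\cdot{}!b\cdot{}?b, & \tracebis_2 = !a\cdot{}!b\cdot{}?a\cdot{}?b, & \tracebis_3 = !a\cdot{}!b\cdot{}?b\cdot{}?a,\\
\tracebis_4 = !b\cdot{}?b\cdot{}!a\cdot{}?a, & \tracebis_5 = !b\cdot{}!a\cdot{}?b\cdot{}?a, & \tracebis_6 = !b\cdot{}!a\cdot{}?a\cdot{}?b.
\end{array}
$$
The claim is that for each fixed peer $i$, the projection $\machineprojection{i}{\tracebis}$ is always one of the two ``canonical'' projections $\machineprojection{i}{!?a\cdot{}!?b}$ or $\machineprojection{i}{!?b\cdot{}!?a}$.

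The key observation I would make is that, since $\peerofaction{!a}=\msgsrc(a)$ and $\peerofaction{!b}=\msgsrc(b)$ are distinct, the two send actions $!a$ and $!b$ never both land on the same peer. So for any fixed $i$, at most the following subsets of $\{!a,?a,!b,?b\}$ survive projection: those actions whose peer equals $i$. I would enumerate the relevant cases by which of the four actions belong to peer $i$. Writing $\peerofaction{?a}=\msgdst(a)$ and $\peerofaction{?b}=\msgdst(b)$, the peer $i$ can coincide with any nonempty subset of $\{\msgsrc(a),\msgdst(a),\msgsrc(b),\msgdst(b)\}$, but the crucial constraint is that $\msgsrc(a)\neq\msgsrc(b)$, so $i$ cannot simultaneously be the sender of both. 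I would argue that whenever $i$ retains at most one of the $a$-actions \emph{and} at most one of the $b$-actions, the projection of any shuffle coincides trivially, because a single-letter subsequence from each pair can only be ordered in the way that either $!?a\cdot{}!?b$ or $!?b\cdot{}!?a$ already provides. The only subtle situations arise when $i$ retains \emph{both} $!a$ and $?a$ (i.e.\ $i=\msgsrc(a)=\msgdst(a)$), but this is excluded since $\msgsrc(a)\neq\msgdst(a)$ by the definition of a message set; likewise for $b$. Hence each peer retains at most one of $\{!a,?a\}$ and at most one of $\{!b,?b\}$.

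Given this, the projection $\machineprojection{i}{\tracebis}$ is a subsequence consisting of at most one $a$-action and at most one $b$-action, with their relative order determined by where they appear in $\tracebis$. I would then check that in each of the six shuffles, the relative order of the surviving $a$-action and $b$-action agrees with their order in either $!?a\cdot{}!?b$ (when the $a$-action precedes the $b$-action) or $!?b\cdot{}!?a$ (when the $b$-action precedes the $a$-action). Since $\machineprojection{i}{!?a\cdot{}!?b}$ and $\machineprojection{i}{!?b\cdot{}!?a}$ are themselves precisely the subsequence of $i$-actions in these two canonical orders, the conclusion follows immediately. The main obstacle, and the reason the hypothesis $\msgsrc(a)\neq\msgsrc(b)$ is indispensable, is ruling out the case where both $!a$ and $!b$ project to the same peer: if they did, a shuffle like $\tracebis_2=!a\cdot{}!b\cdot{}?a\cdot{}?b$ would project to $!a\cdot{}!b\cdots$ on that peer, whose \emph{send} prefix $!a\cdot{}!b$ is compatible with $!?a\cdot{}!?b$ but the trailing receives could interleave in a way matching neither canonical form. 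The hypothesis precisely forecloses this, reducing the analysis to the clean one-action-per-pair situation described above. I expect this case-checking to be routine once the ``at most one action per pair per peer'' reduction is in place, so I would front-load that reduction and treat the six-shuffle verification as a short table.
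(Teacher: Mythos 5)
Your proof is correct, and it takes a slightly different --- and in fact more general --- route than the paper's. The paper's proof first uses the hypothesis $\msgsrc(a)\neq\msgsrc(b)$ to conclude that $!a$ and $!b$ cannot both survive the projection onto a given peer $i$, and then case-splits (up to symmetry) on whether $?a$ survives, listing the possible values of $\machineprojection{i}{\tracebis}$ in each case. Your proof instead rests on the observation that $\msgsrc(m)\neq\msgdst(m)$ for every message $m$ (part of the definition of a message set), so each peer retains at most one action from $\{!a,?a\}$ and at most one from $\{!b,?b\}$; the projection of any shuffle is then these at most two actions in some order, and the two canonical traces $!?a\cdot{}!?b$ and $!?b\cdot{}!?a$ realize exactly the two possible orders. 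This is cleaner and shows that the hypothesis $\msgsrc(a)\neq\msgsrc(b)$ is not actually needed for this auxiliary lemma (it is needed in the parent Lemma~\ref{lem:send-send-diagram}, to guarantee that both $\trace\cdot{}!a\cdot{}!b$ and $\trace\cdot{}!b\cdot{}!a$ are $1$-bounded traces). One inaccuracy in your commentary: the scenario you invoke to argue that $\msgsrc(a)\neq\msgsrc(b)$ is ``indispensable'' --- both sends landing on peer $i$ with trailing receives interleaving badly --- cannot occur, because if $i=\msgsrc(a)=\msgsrc(b)$ then neither $?a$ nor $?b$ belongs to peer $i$, so the projection is just $!a\cdot{}!b$ or $!b\cdot{}!a$, each of which matches a canonical form. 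This does not affect the validity of your proof, since your actual argument never relies on that claim.
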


\begin{proof}
Let $i$ and $\tracebis$ be fixed. Since $\msgsrc(a)\neq\msgsrc(b)$,
it is not the case that both $!a$ and $!b$ occur in 
$\machineprojection{i}{\tracebis}$.
By symmetry, let us assume that $!b$ does not occur in $\machineprojection{i}{\tracebis}$. We consider two cases:
\begin{enumerate}
\item
Let us assume that $?a$ does not occur in 
$\machineprojection{i}{\tracebis}$. Then 
$\machineprojection{i}{\tracebis}\in\{!a,?b,!a\cdot{}?b\}$, and in
all cases, 
$\machineprojection{i}{\tracebis}=\machineprojection{i}{!?a\cdot{}!?b}$.
\item 
Let us assume that $?a$ occurs in 
$\machineprojection{i}{\tracebis}$. Then $!a$ does not
occur in $\machineprojection{i}{\tracebis}$, 
therefore $\machineprojection{i}{\tracebis}$ contains only receptions, and
$\machineprojection{i}{\tracebis}\in\{?a,?a\cdot{}?b,?b,?b\cdot{}?a\}$.
In every case, either 
$\machineprojection{i}{\tracebis}=\machineprojection{i}{!?a\cdot{}!?b}$
or
$\machineprojection{i}{\tracebis}=\machineprojection{i}{!?b\cdot{}!?a}$.
\qedhere
\end{enumerate}
\end{proof}

Let us prove now Lemma~\ref{lem:send-send-diagram}.

\begin{proof}
Observe first that, since $\msgsrc(a)\neq\msgsrc(b)$, $\trace\cdot{}!a\cdot{}!b\in\traces{1}{\system}$
and $\trace\cdot{}!b\cdot{}!a\in\traces{1}{\system}$, and
since $\system$ is $1$-synchronizable, 
$\trace\cdot{}!?a\cdot!?b\in\traces{0}{\system}$ 
and 
$\trace\cdot{}!?b\cdot!?a\in\traces{0}{\system}$. 
By Lemma~\ref{lem:send-send-diagram-aux}, it follows that
for all shuffle $\tracebis$ of $!a\cdot{}?a$ with $!b\cdot{}?b$, 
$\trace\cdot{}\tracebis\in\traces{1}{\system}$. 
It remains to show that
$$
\mbox{for all two shuffles $\tracebis$, $\tracebis'$ of
$!a\cdot{}?a$ with $!b\cdot{}?b$,}\quad 
\trace\cdot{}\tracebis\systemequiv\trace\cdot{}\tracebis'.
\qquad (P) 
$$

Let $\trace_{ab}=!a\cdot{}!b\cdot{}?a\cdot{}?b$ and
$\trace_{ba}=!b\cdot{}!a\cdot{}?a\cdot{}?b$, and let $\tracebis$ be a
shuffle of $!a\cdot{}?a$ with $!b\cdot{}?b$.
Since $\system$ is $1$-synchronizable, the stable configuration that $\trace\cdot{}\tracebis$ leads 
to only depends on
the order in which the send actions $!a$ and $!b$ 
are executed in $\tracebis$, i.e. either 
$\trace\cdot{}\tracebis\systemequiv\trace_{ab}$ or 
$\trace\cdot{}\tracebis\systemequiv\trace_{ba}$.
Moreover, $\trace_{ab}\causalequiv\trace_{ba}$, hence $(P)$.
\end{proof}
\fi

\ificalp
\noindent Lemma~\ref{lem:send-send-diagram} then generalizes to arbitrary sequences of send
actions with rather technical arguments.
\else
Our aim now is to generalize Lemma~\ref{lem:send-send-diagram} to arbitrary sequences of send
actions (see Lemma~\ref{lem:send-send-diagram-star} below and the corresponding diagram).
For this, we need to reason by induction on the length of the sequence of send actions.
The first step is to establish the following property: a synchronous trace followed by a sequence of send actions can be completed to form a fully synchronous trace.
 \begin{lem}\label{lem:lemaux-0}
Let $\system$ be a $1$-synchronizable system.
Let $\trace\in\traces{0}{\system}$ and $a_1,\cdots{},a_n\in\messages$ be such that
\begin{enumerate}
\item $\trace\cdot{}!a_1\cdots{}!a_n\in\traces{n}{\system}$
\item $\msgsrc(a_i)=\msgsrc(a_j)$ for all $i,j\in\{1,\dots,n\}$.
\end{enumerate}
Then $\trace\cdot{}!?a_1\cdots{}!?a_n\in\traces{0}{\system}$.
\end{lem}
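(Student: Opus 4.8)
The plan is to proceed by induction on $n$, peeling off the messages one at a time from the right. The heart of the argument is a single-step observation that upgrades a fresh send placed on top of a synchronous trace into a synchronous matching send/receive pair; this is the only place where $1$-synchronizability is used. The same-source hypothesis~(2) is then what makes the induction go through, by keeping the projection on the sending peer clean.

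First I would isolate the following one-step claim: if $\trace'\in\traces{0}{\system}$ and $\trace'\cdot{}!a\in\traces{1}{\system}$, then $\trace'\cdot{}!?a\in\traces{0}{\system}$. To see this, note $\sendprojection{\trace'\cdot{}!a}=\sendprojection{\trace'}\cdot{}a\in\sendtracesweak{1}{\system}$, and since $\system$ is $1$-synchronizable we have $\sendtracesweak{0}{\system}=\sendtracesweak{1}{\system}$ (the weak sets are exactly the $\messages^*$-components of $\sendtraces{0}{\system}=\sendtraces{1}{\system}$). Hence there is a synchronous $\sigma\in\traces{0}{\system}$ with $\sendprojection{\sigma}=\sendprojection{\trace'}\cdot{}a$. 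The key point is that a synchronous trace is completely determined by its send projection: a trace of the form $!?c_1\cdots{}!?c_k$ has send projection $c_1\cdots{}c_k$, and conversely this word determines the trace. Since both $\sigma$ and $\trace'\cdot{}!?a$ are synchronous with the same send projection $\sendprojection{\trace'}\cdot{}a$, they are equal, so $\trace'\cdot{}!?a\in\traces{0}{\system}$.

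For the induction, the base case $n=0$ is the hypothesis $\trace\in\traces{0}{\system}$. For the step, set $i_0=\msgsrc(a_1)=\cdots{}=\msgsrc(a_n)$ and apply the induction hypothesis to $\trace\cdot{}!a_1\cdots{}!a_{n-1}$, which is $(n-1)$-bounded because $\trace$ reaches a stable configuration (all channels empty) and so at most $n-1$ messages accumulate. This yields $\trace':=\trace\cdot{}!?a_1\cdots{}!?a_{n-1}\in\traces{0}{\system}$. I then verify $\trace'\cdot{}!a_n\in\traces{1}{\system}$: it is $1$-bounded FIFO since $\trace'$ is synchronous, hence stable, and $!a_n$ adds a single message to an empty channel; for the peer projections, every peer $k\neq i_0$ is unaffected by $!a_n$, so $\machineprojection{k}{\trace'\cdot{}!a_n}=\machineprojection{k}{\trace'}\in L(\peer_k)$, while for $i_0$ the same-source hypothesis gives $\machineprojection{i_0}{!?a_i}=!a_i$ for each $i$ (as $i_0=\msgsrc(a_i)\neq\msgdst(a_i)$), whence $\machineprojection{i_0}{\trace'\cdot{}!a_n}=\machineprojection{i_0}{\trace}\cdot{}!a_1\cdots{}!a_n=\machineprojection{i_0}{\trace\cdot{}!a_1\cdots{}!a_n}\in L(\peer_{i_0})$. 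Applying the one-step claim with $a=a_n$ then gives $\trace'\cdot{}!?a_n=\trace\cdot{}!?a_1\cdots{}!?a_n\in\traces{0}{\system}$, closing the induction.

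I expect the only delicate point to be the bookkeeping on the sending peer's projection: it is precisely here that hypothesis~(2) is indispensable, since it guarantees that none of the receptions $?a_i$ belong to $i_0$, and hence that $\machineprojection{i_0}{\trace'}$ coincides with $\machineprojection{i_0}{\trace\cdot{}!a_1\cdots{}!a_{n-1}}$. Everything else reduces to the uniqueness of a synchronous trace given its send projection together with the definition of $1$-synchronizability; in particular, no appeal to Lemma~\ref{lem:send-send-diagram} is needed, consistent with the fact that that lemma requires distinct sources whereas here all sources coincide.
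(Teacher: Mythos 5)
Your proof is correct and follows essentially the same route as the paper: induction on $n$, peeling off the last message, checking that $\trace\cdot{}!?a_1\cdots{}!?a_{n-1}\cdot{}!a_n$ is a $1$-bounded FIFO trace whose peer projections all lie in the respective $L(\peer_i)$ (using hypothesis~(2) exactly where you use it), and then invoking $1$-synchronizability to close the pair. Your explicit justification of the one-step claim via the uniqueness of a synchronous trace with a given send projection is a welcome elaboration of the paper's terser ``by $1$-synchronizability''.
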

\begin{proof}
By induction on $n$. Let $a_1,\dots,a_{n+1}$ be fixed,
and let $\trace_n=\trace\cdot{}!?a_1\cdots{}!?a_n$.
By induction hypothesis, $\trace_n\in\traces{0}{\system}$.
Let $\trace_{n+1}'=\trace_n\cdot{}!a_{n+1}$.
Then
\begin{itemize}
\item $\machineprojection{i}{\trace_{n+1}'}=\machineprojection{i}{\trace_n}$
for all $i\neq\msgsrc(a_{n+1})$, and $\trace_n\in\traces{}{\system}$
\item for $i=\msgsrc(a_{n+1})$,
$\machineprojection{i}{\trace_{n+1}'}=\machineprojection{i}{\trace\cdot{}!a_1\cdots{}!a_{n+1}}$ and $\trace\cdot{}!a_1\cdots{}!a_n\in\traces{}{\system}$
\item $\trace_{n+1}'$ is $1$-bounded FIFO
\end{itemize}
therefore $\trace_{n+1}'\in\traces{1}{\system}$.

By $1$-synchronizability, it follows that $\trace_{n+1}'\cdot{}?a_{n+1}\in\traces{0}{\system}$.
\end{proof}

The second step is a confluence property that allows to commute a synchronization on one message and a sequence of synchronizations on other messages with different senders (see also the diagram on Figure~\ref{fig:lem-lemaux2}).

\begin{figure}
\begin{tikzpicture}[scale=1.2]
\node (0) at (0,0) {};
\node (1) at (-1,-1) {};
\node (2) at (1,-1) {};
\node (3) at (0,-2) {};
\draw[->] (0) edge node [above left] {$!?a$} (1);
\draw[->] (0) edge node [above right] {$!?b_1\cdots !?b_n$} (2);
\draw[->,dashed] (1) edge node [below left] {$!?b_1\cdots !?b_n$} (3);
\draw[->,dashed] (2) edge node [below right] {$!?a$} (3);
\end{tikzpicture}

\caption{\label{fig:lem-lemaux2}Diagram of Lemma~\ref{lem:lemaux-2}}
\end{figure}
\begin{lem}\label{lem:lemaux-2}
Let $\system$ be a $1$-synchronizable system.
Let $\trace\in\traces{0}{\system}$ and $a,b_1,\dots,b_n\in\messages$ be such that
\begin{enumerate}
\item $\trace\cdot{}!?a\in\traces{0}{\system}$
\item $\trace\cdot{}!?b_1\cdots{}!?b_n\in\traces{0}{\system}$
\item $\msgsrc(a)\neq\msgsrc(b_i)$ for all $i\in\{1,\dots,n\}$.
\end{enumerate}
Then the following holds
\begin{itemize}
\item
$\trace\cdot{}!?a\cdot{}!?b_1\cdots{}!?b_n\in\traces{0}{\system}$,
\item
$\trace\cdot{}!?b_1\cdots{}!?b_n\cdot{}!?a\in\traces{0}{\system}$,
and
\item
$\trace\cdot{}!?a\cdot{}!?b_1\cdots{}!?b_n\systemequiv
\trace\cdot{}!?b_1\cdots{}!?b_n\cdot{}!?a$.
\end{itemize}
\end{lem}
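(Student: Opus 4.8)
The plan is to establish all three items simultaneously by induction on $n$, using Lemma~\ref{lem:send-send-diagram} as the elementary step that swaps one synchronous pair past another, and keeping every intermediate trace synchronous so that it stays in $\traces{0}{\system}$. I will use two elementary facts throughout: a synchronous $M$-trace that lies in $\traces{\omega}{\system}$ already lies in $\traces{0}{\system}$ (the latter being by definition the set of synchronous traces of $\system$), and any prefix of a trace of $\traces{0}{\system}$ is a $1$-bounded trace of $\system$, hence belongs to $\traces{1}{\system}$. The second fact is exactly what discharges, for free, the hypotheses of the form $\trace\cdot{}!a\in\traces{1}{\system}$ needed to invoke Lemma~\ref{lem:send-send-diagram}: it suffices to exhibit the relevant one-letter extension as a prefix of a synchronous trace already known to be valid. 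The base case $n=1$ is then immediate: hypotheses (1) and (2) give $\trace\cdot{}!a,\trace\cdot{}!b_1\in\traces{1}{\system}$ and (3) gives $\msgsrc(a)\neq\msgsrc(b_1)$, so Lemma~\ref{lem:send-send-diagram} applies to $a,b_1$, and among its six shuffles the two synchronous ones yield $\trace\cdot{}!?a\cdot{}!?b_1,\trace\cdot{}!?b_1\cdot{}!?a\in\traces{0}{\system}$ together with $\trace\cdot{}!?a\cdot{}!?b_1\systemequiv\trace\cdot{}!?b_1\cdot{}!?a$.

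For the inductive step I set $\trace_n=\trace\cdot{}!?b_1\cdots{}!?b_n$, which is a synchronous prefix of the trace of hypothesis (2) and hence lies in $\traces{0}{\system}$. Applying the induction hypothesis to $a,b_1,\dots,b_n$ — whose premises are (1), the length-$n$ prefix of (2), and (3) — produces two synchronous traces $\sigma=\trace\cdot{}!?a\cdot{}!?b_1\cdots{}!?b_n$ and $\sigma'=\trace_n\cdot{}!?a$ in $\traces{0}{\system}$ with $\sigma\systemequiv\sigma'$. I then apply Lemma~\ref{lem:send-send-diagram} at $\trace_n$ to the pair $b_{n+1},a$: its hypotheses hold since $\trace_n\cdot{}!b_{n+1}$ is a prefix of the trace of (2), $\trace_n\cdot{}!a$ is a prefix of $\sigma'$, and $\msgsrc(a)\neq\msgsrc(b_{n+1})$. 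This delivers $\trace_n\cdot{}!?b_{n+1}\cdot{}!?a\in\traces{0}{\system}$ (the second item for $n+1$) and $\trace_n\cdot{}!?a\cdot{}!?b_{n+1}\in\traces{0}{\system}$, together with $\trace_n\cdot{}!?a\cdot{}!?b_{n+1}\systemequiv\trace_n\cdot{}!?b_{n+1}\cdot{}!?a$.

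To finish the step I extend $\sigma\systemequiv\sigma'$ by the pair $!?b_{n+1}$ on the right. Since $\sigma'\cdot{}!?b_{n+1}=\trace_n\cdot{}!?a\cdot{}!?b_{n+1}$ has just been shown valid, I would obtain that $\sigma\cdot{}!?b_{n+1}=\trace\cdot{}!?a\cdot{}!?b_1\cdots{}!?b_{n+1}$ is again a synchronous trace of $\system$, that is, in $\traces{0}{\system}$ (the first item for $n+1$), and that $\sigma\cdot{}!?b_{n+1}\systemequiv\sigma'\cdot{}!?b_{n+1}$. Chaining with the diagram of the previous paragraph then gives $\sigma\cdot{}!?b_{n+1}\systemequiv\trace_n\cdot{}!?a\cdot{}!?b_{n+1}\systemequiv\trace_n\cdot{}!?b_{n+1}\cdot{}!?a$, which is the third item for $n+1$.

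The delicate point — the one step I would treat with most care — is precisely this right-congruence of $\systemequiv$. Because $\systemequiv$ asserts only the \emph{existence} of a common reachable configuration, pushing $!?b_{n+1}$ through $\sigma\systemequiv\sigma'$ requires showing that some common stable configuration $\configuration^{\star}$ reached by both $\sigma$ and $\sigma'$ is one from which the synchronous pair $!?b_{n+1}$ can be fired, and that the two extensions then reach a single common stable configuration. I would discharge this by arguing at $\configuration^{\star}$ directly and invoking the local diamond of Lemma~\ref{lem:send-send-diagram} there. An appeal to causal equivalence would not suffice, since $\sigma$ and $\sigma'$ are in general not causal-equivalent — for instance, when $\msgdst(a)=\msgsrc(b_1)$ the peer $\msgdst(a)$ performs $?a$ and $!b_1$ in opposite orders in $\sigma$ and $\sigma'$ — exactly the phenomenon pointed out in the Remark following Lemma~\ref{lem:send-send-diagram}.
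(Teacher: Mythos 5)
Your proof is correct and follows essentially the same route as the paper's: induction on $n$, with Lemma~\ref{lem:send-send-diagram} applied at $\trace_n=\trace\cdot{}!?b_1\cdots{}!?b_n$ to the pair $a,b_{n+1}$, followed by right congruence of $\systemequiv$ applied to the induction hypothesis and transitivity. The only difference is that the paper invokes the right congruence of $\systemequiv$ without further comment, whereas you flag that this step deserves care because $\systemequiv$ only asserts the existence of a common reachable configuration.
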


\begin{proof}
By induction on $n$.
Let $a,b_1\dots,b_{n+1}$ be fixed,
let $\trace_n=\trace\cdot{}!?b_1\cdots{}!?b_n$.
By induction hypothesis, $\trace_n\cdot{}!?a\in\traces{0}{\system}$, and
by hypothesis $\trace_n\cdot{}!?b_{n+1}\in\traces{0}{\system}$.
By Lemma~\ref{lem:send-send-diagram}, $\trace_n\cdot{}!?a\cdot{}!?b_{n+1}\in\traces{0}{\system}$,
$\trace_n\cdot{}!?b_{n+1}\cdot{}!?a\in\traces{0}{\system}$, and
$$
\trace_n\cdot{}!?a\cdot{}!?b_{n+1}\systemequiv \trace_n\cdot{}!?b_{n+1}\cdot{}!?a.
$$
On the other hand,
by induction hypothesis, $\trace_n\cdot{}!?a\systemequiv\trace\cdot{}!?a\cdot{}!?b_1\cdots{}!?b_n$, and by right congruence of $\systemequiv$
$$
\trace_n\cdot{}!?a\cdot{}!?b_{n+1}\systemequiv\trace\cdot{}!?a\cdot{}!?b_1\cdots{}!?b_{n+1}
$$
By transitivity of $\systemequiv$, we can relate the two right members of the above
identities, \emph{i.e.}
$$
\trace_n\cdot{}!?b_{n+1}\cdot{}!?a\systemequiv\trace\cdot{}!?a\cdot{}!?b_1\cdots{}!?b_{n+1}
$$
which shows the claim.
\end{proof}

The next lemma expresses the following, rather technical property: considering two sequences of
synchronizations that are orthogonal (with different senders), it is possible to combine them in a single synchronous trace by shuffling the synchronization in any order. Diagramatically, it expresses that all paths that result from a shuffle inside the diamond lead to the same configuration (see Figure~\ref{fig:lemaux3}).

\begin{figure}
\begin{tikzpicture}
\node (0) at (0,0) {};
\node (1) at (-1,-1) {};
\node (2) at (1,-1) {};
\node (3) at (0,-2) {};
\draw[->] (0) edge node [above left] {$!?a_1\cdots!?a_n$} (1);
\draw[->] (0) edge node [above right] {$!?b_1\cdots !?b_m$} (2);
\draw[->,dashed] (1) edge node [below left] {$!?b_1\cdots !?b_m$} (3);
\draw[->,dashed] (2) edge node [below right] {$!?a_1\cdots!?a_n$} (3);
\draw[->,dashed] (0) edge node [right] {\tiny$!?c_1\cdots !?c_{n+m}$} (3);
\end{tikzpicture}
\caption{\label{fig:lemaux3}Diagram of Lemma~\ref{lem:lemaux-3}}
\end{figure}

\begin{lem}\label{lem:lemaux-3}
Let $\system$ be a $1$-synchronizable system.
Let $\trace\in\traces{0}{\system}$ and $a_1,\dots,a_n,b_1,\dots,b_m\in\messages$ be
 such that
\begin{enumerate}
\item $\trace\cdot{}!?a_1\cdots{}!?a_n\in\traces{0}{\system}$
\item $\trace\cdot{}!?b_1\cdots{}!?b_m\in\traces{0}{\system}$
\item $\msgsrc(a_i)\neq\msgsrc(b_j)$ for all $i\in\{1,\dots,n\}$, $j\in\{1,\dots,m\}$
\end{enumerate}
Then for all shuffle $c_1\dots c_{m+n}$ of $a_1\cdots{} a_n$ with $b_1\cdots{}b_m$,
\begin{itemize}
\item
$\trace\cdot{}!?c_1\cdots{}!?c_{n+m}\in\traces{0}{\system}$, and
\item
$\trace\cdot{}!?a_1\cdots{}!?a_n\cdot{}!?b_1\cdots{}!?b_m\systemequiv
\trace\cdot{}!?c_1\cdots{}!?c_m$.
\end{itemize}
\end{lem}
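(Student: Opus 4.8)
The plan is to prove the statement by induction on the total length $N=n+m$, peeling off the first letter $c_1$ of the shuffle and reducing to a strictly shorter instance handled by the induction hypothesis, with Lemma~\ref{lem:lemaux-2} as the workhorse. The base case $N=0$ is immediate, the only shuffle being empty and $\trace\in\traces{0}{\system}$. If $n=0$ (resp. $m=0$) the only shuffle is $b_1\cdots{}b_m$ (resp. $a_1\cdots{}a_n$) and the claim is exactly hypothesis~(2) (resp.~(1)); so I may assume both lists are nonempty, and then $c_1$ is either $a_1$ or $b_1$.

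Case $c_1=a_1$: I set $\trace^+=\trace\cdot{}!?a_1$, which lies in $\traces{0}{\system}$ as a prefix of the trace of hypothesis~(1). The remaining shuffle $c_2\cdots{}c_N$ is a shuffle of $a_2\cdots{}a_n$ with $b_1\cdots{}b_m$, of total length $N-1$, so I apply the induction hypothesis with base $\trace^+$. Its three premises are $\trace^+\cdot{}!?a_2\cdots{}!?a_n\in\traces{0}{\system}$, which is hypothesis~(1); the source condition, inherited from~(3); and $\trace^+\cdot{}!?b_1\cdots{}!?b_m\in\traces{0}{\system}$, the one fact that is not immediate. I obtain it from Lemma~\ref{lem:lemaux-2} applied to the base $\trace$, the single message $a_1$, and the block $b_1,\dots,b_m$ (its premises being $\trace\cdot{}!?a_1\in\traces{0}{\system}$, hypothesis~(2), and~(3)). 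The induction hypothesis then yields $\trace\cdot{}!?c_1\cdots{}!?c_N=\trace^+\cdot{}!?c_2\cdots{}!?c_N\in\traces{0}{\system}$ together with its $\systemequiv$-equivalence with $\trace^+\cdot{}!?a_2\cdots{}!?a_n\cdot{}!?b_1\cdots{}!?b_m=\trace\cdot{}!?a_1\cdots{}!?a_n\cdot{}!?b_1\cdots{}!?b_m$, which is exactly the canonical trace; this closes the case.

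Case $c_1=b_1$: symmetrically I set $\trace^+=\trace\cdot{}!?b_1\in\traces{0}{\system}$ and reduce to the shuffle $c_2\cdots{}c_N$ of $a_1\cdots{}a_n$ with $b_2\cdots{}b_m$. The nontrivial induction premise $\trace^+\cdot{}!?a_1\cdots{}!?a_n\in\traces{0}{\system}$ again comes from Lemma~\ref{lem:lemaux-2}, now with single message $b_1$ and block $a_1,\dots,a_n$. The induction hypothesis gives membership of $\trace\cdot{}!?c_1\cdots{}!?c_N$ in $\traces{0}{\system}$ together with $\trace\cdot{}!?c_1\cdots{}!?c_N\systemequiv\trace\cdot{}!?b_1\cdot{}!?a_1\cdots{}!?a_n\cdot{}!?b_2\cdots{}!?b_m$. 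This is where the real work lies: the right-hand side is not yet canonical, because $b_1$ sits in front of the $a$-block. To repair this I would use Lemma~\ref{lem:lemaux-2} once more to get $\trace\cdot{}!?b_1\cdot{}!?a_1\cdots{}!?a_n\systemequiv\trace\cdot{}!?a_1\cdots{}!?a_n\cdot{}!?b_1$, then append $!?b_2\cdots{}!?b_m$ by right-congruence of $\systemequiv$, and finally conclude by transitivity that $\trace\cdot{}!?c_1\cdots{}!?c_N\systemequiv\trace\cdot{}!?a_1\cdots{}!?a_n\cdot{}!?b_1\cdots{}!?b_m$. I expect this commutation-back step in the $b_1$ case to be the only genuinely delicate point; everything else is bookkeeping of prefixes and repeated invocations of Lemma~\ref{lem:lemaux-2}, together with the facts that $\systemequiv$ is transitive and a right-congruence, as already used in the proof of Lemma~\ref{lem:lemaux-2}.
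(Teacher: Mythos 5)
Your proof is correct and follows essentially the same route as the paper's: induction on $n+m$, peeling off $c_1$, and using Lemma~\ref{lem:lemaux-2} to discharge the one non-trivial premise of the induction hypothesis. The only (immaterial) difference is in the $c_1=b_1$ case, where the paper derives the symmetric canonical form $\trace\cdot{}!?b_1\cdots{}!?b_m\cdot{}!?a_1\cdots{}!?a_n$ for every shuffle and then specializes that statement to the particular shuffle $a_1\cdots a_n b_1\cdots b_m$ to conclude by transitivity, whereas you commute $!?b_1$ back past the $a$-block explicitly with one more application of Lemma~\ref{lem:lemaux-2} together with right-congruence of $\systemequiv$; both closings are valid.
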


\begin{proof}
By induction on $n+m$.
Let $a_1,\dots,a_{n},b_1\dots,b_{m}$ be fixed, and let $c_1\cdots{}c_{n+m}$
be a shuffle of $a_1\cdots{}a_{n}$ with $b_1\cdots b_m$.
\begin{itemize}
\item
Assume that $c_1=a_1$. Let $\trace'=\trace\cdot{}!?a_1$.
By Lemma~\ref{lem:lemaux-2}, $\trace'\cdot{}!?b_1\cdots{}!?b_m\in\traces{0}{\system}$,
and by hypothesis $\trace'\cdot{}!?a_2\cdots{}!?a_n\in\traces{0}{\system}$,
so we can use the induction hypothesis with $(a'_1,\dots,a'_{n-1})=(a_2,\dots,a_{n})$.
We get $\trace'\cdot{}!?c_2\cdots{}!?c_n\in\traces{0}{\system}$, and
$$
\trace'\cdot{}!?c_2\cdots{}!?c_n\systemequiv\trace'\cdot{}!?a_2\cdots{}!?a_n\cdot{}!?b_1\cdots{} !?b_m
$$
which shows the claim.
\item Assume that $c_1=b_1$. Then by the same arguments,
$$
\trace\cdot{}!?c_1\cdots{}!?c_n\systemequiv\trace\cdot{}!?b_1\cdots{}!?b_m\cdot{}!?a_1\cdots{} !?a_n
$$
Since this holds for all shuffle $c_1,\dots,c_{n+m}$, this also holds for
$c_1=a_1,\dots,c_n=a_n,c_{n+1}=b_1,\cdots,c_{n+m}=b_m$, which shows the claim.
\qedhere
\end{itemize}
\end{proof}

The next lemma generalizes Lemma~\ref{lem:lemaux-0}: a sequence of send following a synchronous trace can be completed in a synchronous trace, regardless whether these sends are from the same sender or not.

\begin{lem}\label{lem:lemaux-4}
Let $\system$ be a $1$-synchronizable system.
Let $\trace\in\traces{0}{\system}$ and $m_1,\cdots{},m_n\in\messages$ be such that
$\trace\cdot{}!m_1\cdots{}!m_n\in\traces{n}{\system}$
Then $\trace\cdot{}!?m_1\cdots{}!?m_n\in\traces{0}{\system}$.
\end{lem}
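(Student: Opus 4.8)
The plan is to reduce everything to the two auxiliary results already in hand: Lemma~\ref{lem:lemaux-0}, which synchronizes a block of sends all issued by a single peer, and Lemma~\ref{lem:lemaux-3}, which merges two already-synchronized blocks whose source peers are disjoint. I adopt the shorthand $!?\sigma$ for $!?c_1\cdots{}!?c_\ell$ whenever $\sigma=c_1\cdots{}c_\ell$ is a sequence of messages. First I would group the send actions $!m_1,\dots,!m_n$ according to their source peer: writing $p_1,\dots,p_r$ for the distinct values taken by $\msgsrc(m_i)$, let $\sigma_k$ be the subsequence of $m_1\cdots{}m_n$ formed by those $m_i$ with $\msgsrc(m_i)=p_k$, kept in their original order. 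By construction $m_1\cdots{}m_n$ is a shuffle of $\sigma_1,\dots,\sigma_r$, and the messages of distinct blocks have distinct sources.

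The first step is to establish, for each $k$, that $\trace\cdot{}!?\sigma_k\in\traces{0}{\system}$. For this I would check that $\trace\cdot{}!\sigma_k\in\traces{|\sigma_k|}{\system}$: this trace is FIFO since it contains no reception; its projection on $p_k$ equals $\machineprojection{p_k}{\trace\cdot{}!m_1\cdots{}!m_n}$ and hence lies in $L(\peer_{p_k})$; its projection on any peer $i\neq p_k$ equals $\machineprojection{i}{\trace}$ and lies in $L(\peer_i)$ because $\trace\in\traces{0}{\system}$; and its buffers never hold more than $|\sigma_k|\leq n$ messages. Since every message of $\sigma_k$ has source $p_k$, Lemma~\ref{lem:lemaux-0} applies and yields $\trace\cdot{}!?\sigma_k\in\traces{0}{\system}$.

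The second step is to merge these blocks, which I would do by induction on $r$, proving the stronger statement that for \emph{every} shuffle $\tau$ of $\sigma_1,\dots,\sigma_r$ one has $\trace\cdot{}!?\tau\in\traces{0}{\system}$. The base case $r=1$ is exactly the first step. For the inductive step, given a shuffle $\tau$ of $\sigma_1,\dots,\sigma_{r+1}$, deleting the letters originating from $\sigma_{r+1}$ leaves a shuffle $\tau'$ of $\sigma_1,\dots,\sigma_r$, so $\trace\cdot{}!?\tau'\in\traces{0}{\system}$ by the induction hypothesis, while $\trace\cdot{}!?\sigma_{r+1}\in\traces{0}{\system}$ by the first step. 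The sources occurring in $\tau'$ all lie in $\{p_1,\dots,p_r\}$, hence differ from $p_{r+1}$, and $\tau$ is a shuffle of $\tau'$ with $\sigma_{r+1}$; therefore Lemma~\ref{lem:lemaux-3} gives $\trace\cdot{}!?\tau\in\traces{0}{\system}$. Applying this to the particular shuffle $\tau=m_1\cdots{}m_n$ concludes the proof.

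The hard part is the bookkeeping in the second step: I must make sure that an arbitrary interleaving of the $r+1$ blocks genuinely decomposes as a shuffle of one block against the merge of the others, so that the two-block Lemma~\ref{lem:lemaux-3} can legitimately be invoked. The disjointness of sources across blocks, which the initial grouping guarantees, is precisely what licenses this decomposition and is the point where the hypothesis $\msgsrc(a_i)\neq\msgsrc(b_j)$ of Lemma~\ref{lem:lemaux-3} is consumed; everything else is a routine verification of projections and buffer bounds.
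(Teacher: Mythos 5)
Your proof is correct and follows essentially the same route as the paper's: partition the sends by source peer, synchronize each single-source block with Lemma~\ref{lem:lemaux-0}, and merge blocks with disjoint sources via Lemma~\ref{lem:lemaux-3}. The only difference is organizational --- the paper inducts on $n$, peeling off the source class of $m_1$ at each step, whereas you group into all $r$ source classes up front and induct on $r$ over arbitrary shuffles --- which does not change the substance of the argument.
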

\begin{proof}
By induction on $n$. Let $m_1,\dots,m_{n}$ be fixed with $n\geq 1$.
There are two subsequences $a_1,\dots,a_r$ and $b_1,\dots,b_m$ such that
\begin{itemize}
\item $\msgsrc(a_\ell)=\msgsrc(m_1)$ for all $\ell\in\{1,\dots,r\}$,
\item $\msgsrc(b_\ell)\neq\msgsrc(m_1)$ for all $\ell\in\{1,\dots,m\}$,
\item $m_1\cdots m_n$ is a shuffle of $a_1\cdots{}a_r$ with $b_1\cdots{} b_m$
\end{itemize}
By hypothesis, $\trace\cdot{}!a_1\cdots{}!a_r\in\traces{}{\system}$
and $\trace\cdot{}!b_1\cdots{}!b_m\in\traces{}{\system}$.
By Lemma~\ref{lem:lemaux-0},
 $\trace\cdot{}!?a_1\cdots{}!?a_r\in\traces{0}{\system}$,
and by induction hypothesis $\trace\cdot{}!?b_1\cdots{}!?b_m\in\traces{0}{\system}$,
and finally by Lemma~\ref{lem:lemaux-3}
$\trace\cdot{}!?m_1\cdots{}!?m_n\in\traces{0}{\system}$.
\end{proof}

The next lemma, the last one before the main lemma we aim at, is a purely combinatorial property
that does not have anything to do with synchronizability. It says that if a trace is a shuffle of two synchronous traces, and if it projected on a given machine, then this projection looks like the projection of a synchronous trace that is a shuffle of the original messages.
\begin{lem}\label{lem:lemaux-5}
Let $a_1,\dots,a_n,b_1,\cdots,b_m\in\messages$, and let
$\trace$ be a shuffle of $!?a_1\cdots!?a_n$ with $!?b_1\cdots{}!?b_m$.
Then for all $i\in\{1,\dots,\nbpeers\}$ there is
a shuffle $c_1\cdots c_{n+m}$ of $a_1\cdots a_n$ with
$b_1\cdots b_m$ such that $\machineprojection{i}{\trace}=\machineprojection{i}{!?c_1\cdots{}!?c_{n+m}}$.
\end{lem}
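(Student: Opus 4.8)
The plan is to reduce the statement to a purely combinatorial fact about interleavings of two chains, using that $\machineprojection{i}{\cdot}$ commutes with shuffling. First I would introduce, for a message $x\in\messages$ and a peer $i$, the single-peer action $\mathsf{act}_i(x)$, equal to $!x$ if $\msgsrc(x)=i$, to $?x$ if $\msgdst(x)=i$, and to $\emptytrace$ otherwise; since $\msgsrc(x)\neq\msgdst(x)$, at most one of $!x,?x$ survives the projection, so that $\machineprojection{i}{!?x_1\cdots{}!?x_\ell}=\mathsf{act}_i(x_1)\cdots\mathsf{act}_i(x_\ell)$ for any sequence of messages. In particular the target word $\machineprojection{i}{!?c_1\cdots{}!?c_{n+m}}$ is exactly $\mathsf{act}_i(c_1)\cdots\mathsf{act}_i(c_{n+m})$, i.e.\ the list of peer-$i$ actions of those $c_t$ for which $i$ is the source or the destination, read in the order of $c$.

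Next I would record that projection preserves interleavings: since $\trace$ is a shuffle of $!?a_1\cdots{}!?a_n$ with $!?b_1\cdots{}!?b_m$, deleting the non-$i$ actions keeps the relative order inside each factor, so $\machineprojection{i}{\trace}$ is a shuffle of $\mathsf{act}_i(a_1)\cdots\mathsf{act}_i(a_n)$ with $\mathsf{act}_i(b_1)\cdots\mathsf{act}_i(b_m)$ (empty factors being dropped). Call $a_k$ (resp.\ $b_\ell$) \emph{$i$-relevant} when $\mathsf{act}_i(a_k)\neq\emptytrace$ (resp.\ $\mathsf{act}_i(b_\ell)\neq\emptytrace$). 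Each $i$-relevant message contributes exactly one letter to $\machineprojection{i}{\trace}$, so $\trace$ induces a total order on the $i$-relevant messages (the order of their unique peer-$i$ action). The key observation is that the $i$-relevant $a$'s occur in this order as $a_{k_1},a_{k_2},\dots$ with $k_1<k_2<\cdots$: both $!a_k$ and $?a_k$ sit in the $k$-th block of $!?a_1\cdots{}!?a_n$, hence $\mathsf{act}_i(a_k)$ precedes $\mathsf{act}_i(a_{k'})$ in that word iff $k<k'$, and the shuffle $\trace$ preserves the order of the $a$-factor; likewise for the $b$'s. Thus the $\trace$-order of the $i$-relevant messages, say $d_1,\dots,d_p$, is itself an interleaving of a subsequence of $a_1\cdots a_n$ with a subsequence of $b_1\cdots b_m$ respecting both indexings.

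It then remains to extend $d_1,\dots,d_p$ into a full shuffle $c_1\cdots c_{n+m}$ of $a_1\cdots a_n$ with $b_1\cdots b_m$ by splicing in the $i$-irrelevant messages without disturbing either index order. I would do this greedily: maintaining pointers $\alpha,\beta$ into the two lists, I scan $d_1,\dots,d_p$ and, upon reaching an $a$-message $a_k$ (resp.\ a $b$-message $b_\ell$), I first emit $a_\alpha\cdots a_k$ and set $\alpha:=k+1$ (resp.\ emit $b_\beta\cdots b_\ell$ and set $\beta:=\ell+1$), finally flushing the remaining $a_\alpha\cdots a_n$ and $b_\beta\cdots b_m$. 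Since $i$-relevant $a$'s are processed in increasing index, one has $\alpha\leq k$ at each step, so the $a$'s and the $b$'s are emitted in increasing index and $c$ is a genuine shuffle of the two chains; moreover every message spliced between or around the $d_t$ is $i$-irrelevant, hence invisible to $\machineprojection{i}{\cdot}$, so $\mathsf{act}_i(c_1)\cdots\mathsf{act}_i(c_{n+m})=\mathsf{act}_i(d_1)\cdots\mathsf{act}_i(d_p)=\machineprojection{i}{\trace}$, which is the claim. The case $p=0$ (no $i$-relevant message) is absorbed by the final flush.

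I expect the main obstacle to be the second paragraph rather than the construction itself: the crux is the monotonicity showing that the $\trace$-order of the $i$-relevant messages is \emph{already} a legal interleaving of the two index orders, so that it can be completed at all. This rests on the fact that both $!a_k$ and $?a_k$ live in the same block of the synchronous word and therefore cannot reorder the $a$-factor under projection. Once that monotonicity is in hand the greedy completion is routine bookkeeping, and no appeal to $1$-synchronizability or to the system $\system$ is required, this being a statement about shuffles alone.
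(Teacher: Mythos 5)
Your proof is correct and follows essentially the same route as the paper's: both arguments are purely combinatorial (no appeal to the system or to $1$-synchronizability), both construct $c_1\cdots c_{n+m}$ by anchoring each message at one of its two occurrences in $\trace$, and both rest on the observation that $!a_k$ and $?a_k$ lie in the same block of $!?a_1\cdots{}!?a_n$, so that projection cannot reorder either factor. The only difference is bookkeeping: the paper places each $i$-irrelevant message at the position of its send action in $\trace$ in a single position-wise pass, whereas you first extract the order of the $i$-relevant messages and then splice the irrelevant ones in greedily; both completions work.
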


\begin{proof}
  Let us fix $\trace=\action_1\cdots\action_{2(n+m)}$ a shuffle $!?a_1\cdots!?a_n$ with $!?b_1\cdots{}!?b_m$.
  Consider the trace $\trace'=\trace_1'\cdots\trace_{2(n+m)}'$
  where, for all $k\in\{1,\dots, 2(n+m)\}$, $\trace_k'$  is defined as follows:
  \begin{itemize}
  \item if there are $m, j$ such that
    $\action_k= !\tagmsg{m}{i}{j}$ or $\action_k= ?\tagmsg{m}{j}{i}$,
    then $\trace_k'=!?m$
  \item otherwise, if $\action_k=!m_k$, then $\trace_i=!?m_k$, else
    $\action_i=?m_k$ and $\trace_k'=\epsilon$
  \end{itemize}
  Then by construction $\trace'$ is of the form $!?c_1\cdots !?c_{n+m}$
  with $c_1,\dots,c_{n+m}$ a shuffle of $a_1,\dots,a_n$ with $b_1,\dots,b_m$.
  Moreover, $\onmachine{i}{\trace'}=\onmachine{i}{\trace}$, which ends the proof.
\qedhere

%% ANCIENNE PREUVE TRES FORMELLE MAIS PAS TRES CLAIRE...
%%   Let $i\in\{1,\dots,\nbpeers\}$ be fixed.
%% For every $m\in\messages$, let $m^*= !m$ if $\msgdst(m)\neq i$,
%% otherwise $m^*=?m$, and let $\overline{m}^*= ?m$ if $\msgdst(m)\neq i$,
%% otherwise $\overline{m}^*=!m$ \AFNOTE{dans les deux cas, $\overline{m}^*$ vaut $?m$...}.\ELNOTE{Corrige} Finally,
%% let $h$ be the homomorphism defined by $h(m^*)=!?m$ and $h(\overline{m}^*)=\emptytrace$.
%% \AFNOTE{je comprends le morphisme $h$ qui efface toutes les réceptions sauf lesquelles ???? mais il semble y avoir un autre morphisme qui fait passer de $?m$ à $m^*$ et ce morphisme n'est pas explicité. Comment transforme-t-on un mot de $M^*$ en un mot sur des $\overline{m}^*$....}
%% Then for all $M$-traces $\trace$,
%% $\machineprojection{i}{\trace}=\machineprojection{i}{h(\trace)}$.
%% Let $\trace$ be a shuffle of $!?a_1\cdots!?a_n$ with $!?b_1\cdots{}!?b_m$.
%% Then there is a shuffle $\tracebis$ of
%% $a_1^*\cdots{}a_n^*$ with $b_1^*\cdots{}b_m^*$ such that $h(\tracebis)=h(\trace)$.
%% Therefore, $h(\trace)=!?c_1\cdots !?c_{n+m}$ for some shuffle $c_1\cdots c_{n+m}$
%% of $a_1\cdots{}a_n$ with
%% $b_1\cdots{}b_m$.\AFNOTE{bref ce n'est pas très clair}
\end{proof}
We are now ready to generalise Lemma~\ref{lem:send-send-diagram} to an arbitrary long sequence of send actions, which is the main property we wanted to establish with this long serie of lemmas.
\fi

\begin{figure}
\begin{tikzpicture}[scale=.7]

\begin{scope}[every node/.style={scale=.2}]
\node (0) at (0,0) {};
\node (1) at (-1,-1) {};
\node (2) at (1,-1) {};
\node (3) at (-2,-2) {};
\node (4) at (0,-2) {};
\node (5) at (2,-2) {};
\node (6) at (-1,-3) {};
\node (7) at (1,-3) {};
\node (8) at (0,-4) {};
\end{scope}

\node at (0,-1) {$\causalequiv$};

\draw[->] (0) edge node [above left] {$!a_1\cdots !a_n$} (1);
\draw[->] (0) edge node [above right] {$!b_1\cdots !b_m$} (2);
\draw[->,dashed] (1) edge node [pos=.3,below] {} (4);
\draw[->,dashed] (1) edge node [above left] {} (3);
\draw[->,dashed] (2) edge node [pos=.3,below] {} (4);
\draw[->,dashed] (2) edge node [above] {} (5);
\draw[->,dashed] (3) edge node [below left] {} (6);
\draw[->,dashed] (4) edge node [pos=.3,below] {} (6);
\draw[->,dashed] (4) edge node [pos=.3,below] {} (7);
\draw[->,dashed] (5) edge node [below right] {} (7);
\draw[->,dashed] (6) edge node [below left] {} (8);
\draw[->,dashed] (7) edge node [below right] {} (8);

\end{tikzpicture}
\caption{\label{fig:send-send-diagram-star}Diagrammatic representation of Lemma~\ref{lem:send-send-diagram-star}}
\end{figure}

\begin{lem}[Strong commutativity]\label{lem:send-send-diagram-star}
Let $\system$ be a $1$-synchronizable system.
Let
$a_1,\dots,a_n,b_1,\dots b_m\in\messages$
and
$\trace\in\traces{0}{\system}$
be such that

\begin{enumerate}
\item
$\trace\cdot{} !a_1\cdots{}!a_n\in\traces{n}{\system}$,
\item
$\trace\cdot{} !b_1\cdots{} !b_m\in\traces{m}{\system}$, and
\item
$\msgsrc(a_i)\neq\msgsrc(b_j)$
for all $i\in\{1,\dots,n\}$ and $j\in\{1,\dots,m\}$.
\end{enumerate}
Then for any two different shuffles $\tracebis_1,\tracebis_2$ of
$!?a_1\cdot{}!?a_2\cdots{}!?a_n$ with $!?b_1\cdot{}!?b_2\cdots{}!?b_m$,
it holds that
$\trace\cdot{}\tracebis_1 \in\traces{}{\system}$ , $\trace\cdot{}\tracebis_2\in\traces{}{\system}$ and
$\trace\cdot{}\tracebis_1\systemequiv\trace\cdot{}\tracebis_2$.
\end{lem}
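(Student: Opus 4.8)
The plan is to reduce the general statement about arbitrary shuffles to the building blocks already established, namely Lemma~\ref{lem:lemaux-4} and Lemma~\ref{lem:lemaux-3}, together with Lemma~\ref{lem:lemaux-5} which controls the per-peer projections of a shuffle. First I would deal with membership. By hypothesis~(1) and Lemma~\ref{lem:lemaux-4} (applied with the message sequence $a_1\cdots a_n$), we get $\trace\cdot{}!?a_1\cdots{}!?a_n\in\traces{0}{\system}$, and symmetrically by hypothesis~(2) we get $\trace\cdot{}!?b_1\cdots{}!?b_m\in\traces{0}{\system}$. This is precisely the setup of Lemma~\ref{lem:lemaux-3}, whose disjointness hypothesis $\msgsrc(a_i)\neq\msgsrc(b_j)$ is guaranteed by~(3). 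That lemma then yields, for \emph{every} shuffle $c_1\cdots c_{n+m}$ of $a_1\cdots a_n$ with $b_1\cdots b_m$, that $\trace\cdot{}!?c_1\cdots{}!?c_{n+m}\in\traces{0}{\system}$ and that all such traces are $\systemequiv$-equivalent to the canonical one $\trace\cdot{}!?a_1\cdots{}!?a_n\cdot{}!?b_1\cdots{}!?b_m$.

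The remaining gap is that $\tracebis_1,\tracebis_2$ are arbitrary shuffles of the synchronous blocks $!?a_i$ with $!?b_j$, whereas Lemma~\ref{lem:lemaux-3} only directly controls shuffles of the flat letter-sequences (i.e.\ ``block-ordered'' interleavings of the form $!?c_1\cdots{}!?c_{n+m}$). I would close this gap using Lemma~\ref{lem:lemaux-5}: given the arbitrary shuffle $\tracebis_1$, for each peer $i$ there is a letter-shuffle $c_1\cdots c_{n+m}$ with $\machineprojection{i}{\tracebis_1}=\machineprojection{i}{!?c_1\cdots{}!?c_{n+m}}$. Since $\trace\cdot{}!?c_1\cdots{}!?c_{n+m}\in\traces{0}{\system}\subseteq\traces{\omega}{\system}$ reaches the same stable configuration as the canonical trace (by the equivalence from Lemma~\ref{lem:lemaux-3}), and since $\trace\cdot{}\tracebis_1$ agrees with $\trace\cdot{}!?c_1\cdots{}!?c_{n+m}$ on every peer-projection, I would argue that $\trace\cdot{}\tracebis_1$ is itself a FIFO trace whose per-peer projections lie in the respective $L(\peer_i)$, hence $\trace\cdot{}\tracebis_1\in\traces{\omega}{\system}$, and that it leads to the same stable configuration, so $\trace\cdot{}\tracebis_1\systemequiv\trace\cdot{}!?c_1\cdots{}!?c_{n+m}$. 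The same holds for $\tracebis_2$, and transitivity of $\systemequiv$ through the common canonical trace gives $\trace\cdot{}\tracebis_1\systemequiv\trace\cdot{}\tracebis_2$.

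The main obstacle I anticipate is the FIFO/membership justification in the last step: knowing only that $\trace\cdot{}\tracebis_1$ matches a bona fide trace on all peer-projections is not by itself enough to conclude it is a trace, because a trace must also be FIFO, and an arbitrary interleaving of matched pairs $!?a_i,!?b_j$ need not respect the channel ordering a priori. I expect this to be handled by noting that each block $!?a_i$ and $!?b_j$ keeps a send immediately followed by its matching receive, so every prefix of $\trace\cdot{}\tracebis_1$ has all channels empty or containing exactly the single in-flight message of the block currently open; thus $\bufferon{i\to j}{\cdot}$ stays well-defined and bounded by one. Combined with the peer-projection agreement (which places each $\machineprojection{i}{\trace\cdot\tracebis_1}$ in $L(\peer_i)$ via the witness $c_1\cdots c_{n+m}$), this certifies $\trace\cdot{}\tracebis_1$ as a genuine $\traces{\omega}{\system}$-trace and pins down its final configuration. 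Care is needed to phrase this uniformly over all peers rather than appealing to a single witness shuffle, since different peers may require different witnesses $c_1\cdots c_{n+m}$ in Lemma~\ref{lem:lemaux-5}.
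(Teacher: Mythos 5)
Your proposal is correct and follows essentially the same route as the paper's proof: Lemma~\ref{lem:lemaux-4} to synchronize each homogeneous block, Lemma~\ref{lem:lemaux-3} for the letter-level shuffles $!?c_1\cdots{}!?c_{n+m}$, and Lemma~\ref{lem:lemaux-5} to supply, peer by peer, a witness letter-shuffle with the same projection, membership then following from the FIFO-plus-projections characterization of traces (you are in fact more explicit than the paper, which dismisses the FIFO check as ``clear'' and leaves the final $\systemequiv$ step implicit). The only quibble is your justification of FIFO-ness: in a shuffle of $!?a_1\cdots{}!?a_n$ with $!?b_1\cdots{}!?b_m$ the pair $!a_i$, $?a_i$ need not be adjacent since $b$-actions may intervene, but the conclusion stands because sends and receives alternate within each of the two shuffled words and condition~(3) keeps the two words on disjoint channels, so every channel holds at most one in-flight message.
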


\ificalp\else
\begin{proof}
Let $\trace\in\traces{0}{\system}$ and $a_1,\dots,a_{n},b_1,\dots,b_m$,
be fixed.
Let $\tracebis$ be a shuffle of
$!?a_1\cdots{}!?a_n$ with $!?b_1\cdots{}!?b_m$.
We want to show that $\trace\cdot{}\tracebis\in\traces{}{\system}$.
Clearly, $\trace\cdot{}\tracebis\in\traces{}{\system}$ is a FIFO trace.
Therefore, it is enough to find for all $i\in\{1,\dots,\nbpeers\}$
a trace $\trace_i$ such that
\begin{equation}\label{eq:goal}
\trace_i\in\traces{}{\system}\qquad \mbox{and} \qquad
\machineprojection{i}{\trace\cdot{}\tracebis}=\machineprojection{i}{\trace_i}.
\end{equation}
Let $i\in\{1,\dots,\nbpeers\}$ be fixed, and let us construct $\trace_i$ that validates
$(\ref{eq:goal})$.
By hypothesis
$$
\trace\cdot{}!a_1\cdots{}!a_n\in\traces{}{\system}
\mbox{ and }
\trace\cdot{}!b_1\cdots{}!b_n\in\traces{}{\system}
$$
therefore, by Lemma~\ref{lem:lemaux-4},
\begin{equation}\label{eq:dribble}
\trace\cdot{}!?a_1\cdots{}!?a_n\in\traces{0}{\system}
\mbox{ and }
\trace\cdot{}!?b_1\cdots{}!?b_n\in\traces{0}{\system}.
\end{equation}
On the other hand, by Lemma~\ref{lem:lemaux-5},
there is a shuffle $c_1\dots c_{n+m}$
of $a_1\cdots a_n$ with $b_1\cdots b_m$ such that
\begin{equation}\label{eq:shoot}
\machineprojection{i}{\tracebis}=\machineprojection{i}{!?c_1\cdots{}!?c_{n+m}}
\end{equation}
Let $\trace_i=\trace\cdot{}!?c_1\cdots{}!?c_{n+m}$.
By Lemma~\ref{lem:lemaux-3} and $(\ref{eq:dribble})$,
$\trace_i\in\traces{0}{\system}$, and by
$(\ref{eq:shoot})$, the second part of $(\ref{eq:goal})$ holds.
\end{proof}
\fi

\ificalp
For the rest of the proof, the hypothesis on the communication topology being an oriented ring becomes necessary.
We follow the rough idea in~\cite{BasuBO12}, also used for half-duplex
systems~\cite{CeceF05}, and show a trace normalization property.
\else
\subsection{Trace normalization}

In this section and the next one, it will be necessary to assume that the communication topology
is an oriented ring.
\fi

\begin{defi}[Normalized trace]
A $M$-trace $\trace$ is \emph{normalized}
if there is a synchronous $M$-trace
$\trace_0$, $n\geq 0$, and messages $a_1,\dots,a_n$ such that
$\trace=\trace_0\cdot{}!a_1\cdots{}!a_n$.
\end{defi}

\noindent\begin{minipage}{.8\textwidth} \setlength{\parfillskip}{0pt}
\begin{lem}[Trace Normalization]\label{lem:trace-normalization}
Assume $M$ is such that the communication topology $G_M$ is an oriented ring.
Let $\system=\systemlist{\peer_1,\dots,\peer_{\nbpeers}}$ be a $1$-synchronizable $M$-system.
For all $\trace\in\traces{}{\system}$,
there is a normalized trace $\normaltrace(\trace)
\in\traces{}{\system}$ such that
$\trace\systemequiv \normaltrace(\trace)$.
\end{lem}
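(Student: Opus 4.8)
The plan is to proceed by induction on the length of $\trace$. The empty trace is already normalized, so assume $\trace=\trace'\cdot{}\action$ with $\trace'\in\traces{\omega}{\system}$ (this prefix is admissible because every $\peer_i$ has all states accepting, hence $L(\peer_i)$ is prefix-closed). By the induction hypothesis there is a normalized $\tracebis=\normaltrace(\trace')$ with $\trace'\systemequiv\tracebis$; write $\tracebis=\trace_0\cdot{}!a_1\cdots{}!a_n$ with $\trace_0$ synchronous. Since $\trace'$ and $\tracebis$ reach a common configuration $\configuration'$ and $\trace=\trace'\cdot{}\action$ is a trace, $\configuration'\tr{\action}\configuration$ for some $\configuration$, so $\tracebis\cdot{}\action\in\traces{\omega}{\system}$ and $\tracebis\cdot{}\action\systemequiv\trace$. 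It therefore suffices to normalize $\tracebis\cdot{}\action$. If $\action=!a$ is a send, then $\tracebis\cdot{}!a$ is already normalized and we are done; the whole difficulty is concentrated in the case $\action=?a$.

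So suppose $\action=?a$, and set $p:=\msgdst(a)$ and $q:=\msgsrc(a)$. Since $G_M$ is an oriented ring, peer $p$ receives only from its predecessor, which forces $q$ to be that predecessor, and by the FIFO discipline $a$ is the head of the channel $q\to p$ in $\configuration'$. As $\trace_0$ leaves every channel empty, $a$ is the first message sent on $q\to p$ by the tail $!a_1\cdots{}!a_n$; and because in a ring peer $q$ emits only on $q\to p$, this $a$ is in fact peer $q$'s first tail send, say $a=a_k$. Let $c_1,\dots,c_s$ be the subsequence of $a_1,\dots,a_n$ emitted by peer $p$ (all on $p\to p+1$, in their original order). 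I will show that
$$\normaltrace:=\trace_0\cdot{}!?a_k\cdot{}!a_1\cdots{}\widehat{!a_k}\cdots{}!a_n$$
(the tail with $a_k$ deleted, kept in its original order) is normalized, belongs to $\traces{\omega}{\system}$, and is $\system$-equivalent to $\tracebis\cdot{}?a_k$.

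The key step is a local commutation at peer $p$, obtained from Lemma~\ref{lem:send-send-diagram-star} applied to the synchronous trace $\trace_0$, the single message $a_k$ (source $q$), and the messages $c_1,\dots,c_s$ (source $p\neq q$). The readiness granted by the ring gives $\trace_0\cdot{}!a_k\in\traces{1}{\system}$ and $\trace_0\cdot{}!c_1\cdots{}!c_s\in\traces{s}{\system}$, and the sources are disjoint, so all shuffles of $!?a_k$ with $!?c_1\cdots{}!?c_s$ are $\system$-equivalent and belong to $\traces{\omega}{\system}$. Comparing the shuffles $!?c_1\cdots{}!?c_s\cdot{}!?a_k$ and $!?a_k\cdot{}!?c_1\cdots{}!?c_s$ and projecting onto peer $p$, I get that from the state reached after $\trace_0$ the words $c_1\cdots{}c_s\cdot{}?a_k$ and $?a_k\cdot{}c_1\cdots{}c_s$ are both accepted by $\peer_p$ and lead to the same state. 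It then remains to check, peer by peer, that $\normaltrace$ and $\tracebis\cdot{}?a_k$ reach the same configuration: for every peer other than $p$ the two projections coincide verbatim (deleting the matched $a_k$ and inserting $?a_k$ affects neither peer $q$'s ordered emissions, as $a_k$ is already its first send, nor any third peer); for peer $p$ they reach the same state by the commutation above; and the channel contents agree because the only change, the pair $!?a_k$, leaves the channel $q\to p$ unchanged while all remaining sends stay unmatched exactly as before. Finally $\normaltrace$ is normalized by construction and FIFO because its sends all follow the synchronous prefix $\trace_0\cdot{}!?a_k$, so $\normaltrace\in\traces{\omega}{\system}$ and $\normaltrace\systemequiv\tracebis\cdot{}?a_k\systemequiv\trace$.

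The main obstacle is precisely this reception case: one cannot slide $?a_k$ leftward to meet $!a_k$ by causal equivalence alone, because peer $p$ may have pending emissions $c_1,\dots,c_s$ issued before $?a_k$, and—exactly as the remark after Lemma~\ref{lem:send-send-diagram} observes, since here $\msgdst(a_k)=\msgsrc(c_i)=p$—that square does not commute for $\causalequiv$. The orientation of the ring is what makes the two readiness hypotheses of Lemma~\ref{lem:send-send-diagram-star} hold simultaneously: each peer has a single outgoing channel, so the head of a channel is always the first tail emission of its sender, and it is $1$-synchronizability, through that lemma, that upgrades the missing causal commutation into a $\system$-equivalence.
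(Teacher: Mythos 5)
Your proof is correct and follows essentially the same route as the paper's: induction on the trace, the send case being immediate, and the receive case handled by invoking Lemma~\ref{lem:send-send-diagram-star} on the synchronous prefix to commute $?a$ past the pending sends of the receiving peer --- exactly the square that fails for $\causalequiv$, as you note. The only difference is bookkeeping: the paper splits the tail's sends by whether their source is $\msgsrc(a)$ and chains through several auxiliary traces, whereas you isolate $a$ against the sends of $\msgdst(a)$ and verify the resulting configuration equality directly, peer by peer.
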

\end{minipage}
%% \hspace{.05\textwidth}
\raisebox{1mm}{\begin{minipage}{0.1\textwidth}
\begin{tikzpicture}[scale=1.2]

\node (0) at (0,0) {};
\node (1) at (1,-1) {};
\node (2) at (0,-1) {};

\draw[->] (0) edge node [above right] {$\trace$} (1);
\draw[->,dashed] (2) edge node [below] {$!m_1\cdots{}!m_k$} (1);
\draw[->,dashed] (0) edge node [left] {$\trace_0$} (2);

\end{tikzpicture}

\end{minipage}} \\[1mm]

\begin{proof}
By induction on $\trace$. Let $\trace=\trace'\cdot{}\action$,
be fixed. Let us assume by induction hypothesis that there is
a normalized trace $\normaltrace(\trace')\in\traces{}{\system}$
such that $\trace'\systemequiv\normaltrace(\trace')$.
Let us reason
by case analysis on the last action $\action$ of $\trace$.
The easy case is when $\action$ is a send
action: then,
$\normaltrace(\trace')\cdot{}\action$ is a normalized trace,
and $\normaltrace(\trace')\cdot{}\action\systemequiv\trace'\cdot{}\action$
by right congruence of $\systemequiv$.
The difficult case is when $\action$ is $?a$ for some $a\in\messages$.
Let $i=\msgsrc(a)$, $j=\msgdst(a)$, \emph{i.e.} $i+1=j$ mod $\nbpeers$.
By the definitions of a normal trace and $\causalequiv$, there are
$\trace_0'\in\traces{0}{\system}$,
$a_1,\dots,a_n,b_1,\dots,b_m\in\messages$ such that
$$
\normaltrace(\trace')\causalequiv \trace_0'\cdot{}!a_1\cdots{}!a_n\cdot{}!b_1\cdots{}!b_m
$$
with $\msgsrc(a_k)=i$ for all $k\in\{1,\dots,n\}$ and $\msgsrc(b_k)\neq i$
for all $k\in\{1,\dots,m\}$.
%\ELNOTE{J'aurais envie de poser plutot $\msgdst(a_k)=j$ et $\msgdst(b_k)\neq j$,
%plus general??}
Since $G_M$ is an oriented ring, $\msgdst(a_1)=j$, therefore $a_1=a$
(because by hypothesis $j$ may receive $a$ in the configuration that $\normaltrace(\trace')$ leads to).
Let
$
\normaltrace(\trace)=\trace_0'\cdot{}!a\cdot{}?a\cdot{}!b_1\cdots !b_m
\cdot{} !a_2\cdots{} !a_n
$
and let us show that $\normaltrace(\trace)\in\traces{}{\system}$
and $\trace\systemequiv\normaltrace(\trace)$.

Since $\normaltrace(\trace')\in\traces{}{\system}$, we have
in particular that $\trace_0'\cdot{} !a\in\traces{1}{\system}$ and
$\trace_0'\cdot{}!b_1\cdots!b_n\in\traces{}{\system}$. Consider the two traces
$$
\begin{array}{ll}
\tracebis_1= & \trace_0'\cdot{}!a\cdot{}?a\cdot{}!b_1\cdots !b_n\cdot{} ?b_1\cdots ?b_n\\
\tracebis_2= & \trace_0'\cdot{}!a\cdot{}!b_1\cdots !b_n\cdot{} ?a\cdot{}?b_1\cdots ?b_n.
\end{array}
$$
By Lemma~\ref{lem:send-send-diagram-star},
$\tracebis_1,\tracebis_2\in\traces{}{\system}$ and both lead to the same configuration,
and in particular to the same control state $q$ for peer $j$.
The actions $?b_1$, $?b_2,\ldots ?b_n$
are not executed by peer $j$ (because $\msgsrc(m)\neq i$ implies $\msgdst(m)\neq j$ on an
oriented ring),
so the two traces
$$
\begin{array}{ll}
\tracebis_1'= & \trace_0'\cdot{}!a\cdot{}?a\cdot{}!b_1\cdots !b_n\\
\tracebis_2'= & \trace_0'\cdot{}!a\cdot{}!b_1\cdots !b_n\cdot{} ?a
\end{array}
$$
lead to two configurations $\configuration_1',\configuration_2'$
with the same control state $q$ for peer $j$ as in the configuration
reached after $\tracebis_1$ or $\tracebis_2$.
On the other hand, for all $k\neq j$,
$\machineprojection{k}{\tracebis_1'}=\machineprojection{k}{\tracebis_2'}$, therefore
$\tracebis_1'\systemequiv\tracebis_2'$.
Since $\trace_0'\cdot{}!a\cdot{}!a_2\cdots!a_n\in\traces{n}{\system}$,
and $\machineprojection{i}{\trace_0'\cdot{}!a}=\machineprojection{i}{\tracebis_1'}=\machineprojection{i}{\tracebis_2'}$,
the two traces
$$
\begin{array}{ll}
\tracebis_1''= & \trace_0'\cdot{}!a\cdot{}?a\cdot{}!b_1\cdots!b_n\cdot{}!a_2\cdots!a_n\\
\tracebis_2''= & \trace_0'\cdot{}!a\cdot{}!b_1\cdots!b_n\cdot{}?a\cdot{}!a_2\cdots!a_n
\end{array}
$$
belong to $\traces{}{\system}$ and $\tracebis_1''\systemequiv\tracebis_2''$.
Consider first $\tracebis_1''$: this is $\normaltrace(\trace)$ as defined above,
therefore $\normaltrace(\trace)\in\traces{}{\system}$, and $\normaltrace(\trace)\systemequiv\tracebis_2''$. Consider now $\tracebis_2''$. By definition,
$\tracebis_2''\causalequiv\normaltrace(\trace')\cdot{}?a$. By hypothesis,
$\normaltrace(\trace')\systemequiv\trace'$, therefore $\normaltrace(\trace')\cdot{}?a\causalequiv \trace$. To sum up, $\normaltrace(\trace)\systemequiv\tracebis_2''\causalequiv\normaltrace(\trace')\cdot{}?a\systemequiv\trace$, therefore
$\normaltrace(\trace)\systemequiv\trace$.
\end{proof}

Note how we used the hypothesis that the communication topology is an oriented
ring in the proof of Lemma~\ref{lem:trace-normalization}.
As a hint that this trace normalization does not hold if a peer
can send to two different peers, consider the following example:
\begin{exa}
%% Let $\peer_1=!\tagmsg{a}{1}{2}\cdot{}!\tagmsg{b}{1}{3}$,
%% $\peer_2=?\tagmsg{a}{1}{2}$,
%% $\peer_3=?\tagmsg{b}{1}{3}$, and
%% $\trace=!a\cdot{}!b\cdot{}?b$.
Let $\peer_1=!a\cdot{}!b$,
$\peer_2=?a$,
$\peer_3=?b$,
in other words, $\peer_1$ sends a message
to both $\peer_2$ and $\peer_3$, and they can receive this message.
The system is obviously $1$-synchronizable.
But the only trace $\systemequiv$-equivalent
to $\trace=!a\cdot{}!b\cdot{}?b$ is $\trace$ itself, which is not a
normalized trace.
\end{exa}

\ificalp\else
\subsection{Reachability set}
\fi

As a consequence of Lemma~\ref{lem:trace-normalization},
$1$-synchronizability implies several interesting properties
on the reachability set.

\begin{defi}[Channel-recognizable reachability set~\cite{Pachl87,CeceF05}]
Let
$\system=\systemlist{\peer_1,\dots,\peer_{\nbpeers}}$ with
$\peer_i=\tuple{\states_i,\transitions_i,q_{0,i}}$. The (coding of the) \emph{reachability set} of
$\system$ is the language
$\mathsf{Reach}(\system)$ over the alphabet $(M\cup\bigcup_{i=1}^{\nbpeers}\states_i)^*$
defined as $\{\state_1\cdots{}\state_{\nbpeers}\cdot{}w_1\cdots{}w_{\nbpeers}\mid
\configuration_0\tr{\trace}(\state_1,\dots,\state_{\nbpeers},w_1,\dots,w_{\nbpeers}),
\trace\in\traces{}{\system}\}$.
$\mathsf{Reach}(\system)$ is \emph{channel recognizable} (or QDD representable~\cite{BoigelotG99})
if it is a recognizable (and rational) language.
\end{defi}

\begin{thm}\label{thm:channel-rec}
Let $M$ be a message set such that $G_M$ is an oriented ring, and let
$\system$ be a $M$-system that is $1$-synchronizable.
Then
\begin{enumerate}
\item the reachability set of $\system$ is channel recognizable,
\item for all
$\trace\in\traces{}{\system}$, for all
$\configuration_0\tr{\trace}\configuration$, there is a stable
configuration $\configuration'$, $n\geq 0$ and $m_1,\dots m_n\in\messages$
such that $\configuration\tr{?m_1\cdots{}?m_n}\configuration'$.
\end{enumerate}
In particular, $\system$ does not have unspecified
receptions.
\end{thm}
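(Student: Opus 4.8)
The plan is to deduce both items from the Trace Normalization Lemma (Lemma~\ref{lem:trace-normalization}) and the confluence lemma for send sequences (Lemma~\ref{lem:send-send-diagram-star}), using crucially that on an oriented ring each peer $i$ sends only on the channel $i\to i{+}1$ and receives only on the channel $i{-}1\to i$ (indices modulo $\nbpeers$). First I would pin down the shape of a reachable configuration $\configuration$, say $\configuration_0\tr{\trace}\configuration$. By Lemma~\ref{lem:trace-normalization} there is a normalized trace $\normaltrace(\trace)=\trace_0\cdot{}!a_1\cdots{}!a_n$ with $\trace_0\in\traces{0}{\system}$ and $\trace\systemequiv\normaltrace(\trace)$; as $\systemequiv$-equivalent traces reach the same configuration, $\configuration_0\tr{\normaltrace(\trace)}\configuration$. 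Let $\configuration_s$ be the stable configuration reached after the synchronous prefix $\trace_0$, with peer $i$ in state $\state_i'$, and let $A_i$ be the subsequence of $a_1\cdots{}a_n$ made of the messages of source $i$; I write $!A_i$ for the associated sequence of sends. Since the suffix contains no reception, in $\configuration$ the channel $i\to i{+}1$ holds exactly $A_i$ and peer $i$ is in the state reached from $\state_i'$ along the send-only path $!A_i$. Conversely, any configuration of this form is reachable: fire $\trace_0$ and then all the $!A_i$ in any order.

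For item (1), the set of stable configurations reachable by a synchronous trace is a subset of the finite set $\states_1\times\cdots{}\times\states_{\nbpeers}$, hence finite. For such a configuration $\configuration_s$ and a peer $i$, the set of words $w$ labelling a send-only path from $\state_i'$ to a fixed state $\state_i$ in $\peer_i$ is regular (it is recognized by $\peer_i$ restricted to its send transitions, reading off the sent letter). Hence the coding $\state_1\cdots{}\state_{\nbpeers}\cdot{}w_1\cdots{}w_{\nbpeers}$ of $\mathsf{Reach}(\system)$ is a finite union, over synchronously reachable stable configurations and over target-state tuples, of concatenations of singletons with these regular languages; such a language is recognizable, which is item (1).

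Item (2) is the core. I would prove that the trace $\mu=\trace_0\cdot{}!a_1\cdots{}!a_n\cdot{}?a_1\cdots{}?a_n$, i.e.\ all the sends followed by all the matching receptions in the same order, belongs to $\traces{\omega}{\system}$. This trace is FIFO: inside each channel $i\to i{+}1$ the receptions of $A_i$ occur after, and in the same order as, the sends of $A_i$. Its projection on peer $k$ is $\machineprojection{k}{\trace_0}\cdot{}!A_k\cdot{}?A_{k-1}$. To show this projection is in $L(\peer_k)$, I apply Lemma~\ref{lem:send-send-diagram-star} to the two send sequences $A_k$ and $A_{k-1}$: both $\trace_0\cdot{}!A_k$ and $\trace_0\cdot{}!A_{k-1}$ are bounded traces of $\system$, being sub-behaviours of $\normaltrace(\trace)$, and their sources $k$ and $k{-}1$ differ. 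The lemma then makes the concatenation shuffle $\trace_0\cdot{}!?A_k\cdot{}!?A_{k-1}$ a trace of $\system$ (where $!?A$ denotes the synchronous sequence sending and immediately receiving each message of $A$), and its projection on peer $k$ is exactly $\machineprojection{k}{\trace_0}\cdot{}!A_k\cdot{}?A_{k-1}$. As every peer projection of $\mu$ is thus legal and $\mu$ is FIFO, $\mu\in\traces{\omega}{\system}$. The configuration reached by $\mu$ has all its channels emptied, so it is a stable configuration $\configuration'$, and $\configuration\tr{?a_1\cdots{}?a_n}\configuration'$, which is item (2) with $m_1\cdots{}m_n=a_1\cdots{}a_n$.

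The main obstacle is exactly this last step: in $\configuration$ each peer $k$ has already committed to the whole block of sends $!A_k$, and one must still certify that it can then receive the entire block $A_{k-1}$, although a synchronous run would have received $A_{k-1}$ interleaved with $!A_k$. This is where the oriented-ring hypothesis is indispensable: it guarantees that everything peer $k$ receives comes from the single predecessor channel $k{-}1\to k$ and everything it sends goes on the single successor channel $k\to k{+}1$, so that the two source sets are cleanly disjoint and Lemma~\ref{lem:send-send-diagram-star} applies, and it keeps the FIFO bookkeeping for $\mu$ trivial. Finally, the absence of orphan messages and of unspecified receptions is immediate from item (2), since it shows that every reachable configuration can be driven to a stable one using receptions only.
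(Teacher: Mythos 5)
Your proof is correct. For item (1) you follow essentially the same route as the paper: trace normalization reduces $\mathsf{Reach}(\system)$ to a finite union, over the stable configurations reachable by synchronous traces, of the sets of configurations reachable from them by send-only suffixes, and each such set is an effective recognizable language. For item (2) your argument is genuinely different in its decomposition. The paper splits the pending sends of the normalized trace into two blocks $!a_1\cdots!a_n$ and $!b_1\cdots!b_m$ with disjoint sources and applies Lemma~\ref{lem:send-send-diagram-star} once, globally, to conclude that ``all sends, then all matching receives'' is a trace; this requires reading that trace as a shuffle of $!?a_1\cdots!?a_n$ with $!?b_1\cdots!?b_m$, which stretches the literal statement of the lemma (in a strict shuffle $?a_1$ must precede $!a_2$). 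You instead apply the lemma once per peer $k$, to the outgoing block $A_k$ and the incoming block $A_{k-1}$, and only invoke the concatenation shuffle $!?A_k\cdot{}!?A_{k-1}$, which is literally covered by the lemma; you then reassemble the global trace $\mu=\trace_0\cdot{}!a_1\cdots!a_n\cdot{}?a_1\cdots?a_n$ via the FIFO-plus-projections characterization of traces. Your route is slightly longer but more self-contained, and it uses the oriented-ring hypothesis exactly where it is needed (one incoming and one outgoing channel per peer, so the two source sets in the lemma are disjoint and the FIFO check is immediate); the paper's route is shorter and produces the receiving order directly. Both arguments deliver the same stable configuration $\configuration'$ reached from $\configuration$ by receptions only, hence the absence of orphan messages and unspecified receptions.
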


\begin{proof}
\hfill
\begin{enumerate}
\item
Let $S$ be the set of stable configurations $\configuration$
such that $\configuration_0\tr{\trace}\configuration$ for some $\trace\in\traces{0}{\system}$; $S$ is finite and effective.
By Lemma~\ref{lem:trace-normalization},
$\mathsf{Reach}(\system)=\bigcup\{\mathsf{Reach}^!(\configuration)\mid\configuration\in S\}$, where
$\mathsf{Reach}^!(\configuration)=\{\state_1\cdots{}\state_{\nbpeers}\cdot{}w_1\cdots{}w_{\nbpeers}\mid
\configuration\tr{!a_1\cdots !a_n}(\state_1,\dots,\state_{\nbpeers},w_1,\dots,w_{\nbpeers}),n\geq 0,
a_1,\dots a_n\in\messages\}$ is an effective rational language.
\item Assume $\configuration_0\tr{\trace}\configuration$. By Lemma~\ref{lem:trace-normalization}, $\configuration_0\tr{\trace_0\cdot{}!m_1\cdots{}!m_r}\configuration$ for some
$\trace_0\in\traces{0}{\system}$. Then
$\trace_0\cdot{}!m_1\cdots{}!m_r\causalequiv\trace_0\cdot{}\trace_1$ where
$\trace_1:=!a_1\cdots{}!a_n\cdot{}b_1\cdots{}b_m$ for some
$a_1,\dots,a_n,b_1,b_m$ such that $\msgsrc(a_i)\neq\msgsrc(b_j)$ for
all $i\in\{1,\dots,n\}$ and $j\in\{1,\dots,m\}$. By
Lemma~\ref{lem:send-send-diagram-star},
$\trace_0\cdot{}\trace_1\cdot{}\overline{\trace_1}\in\traces{}{\system}$
(where $\overline{\trace_1}=?a_1\cdots{}?a_n\cdot{}?b_1\cdots{}?b_m$),
and therefore
$\configuration_0\tr{\trace_0\cdot{}\trace_1}\configuration\tr{\overline{\trace_1}}\configuration'$ for some stable configuration $\configuration'$.
\qedhere
\end{enumerate}
\end{proof}

\ificalp
\else
\subsection{$1$-synchronizability implies synchronizability}
\fi

%% Another consequence of Lemma~\ref{lem:trace-normalization} is that synchronizability and $1$-synchronizability coincide.
\ificalp
%% \\
\noindent\begin{minipage}{.8\textwidth} \setlength{\parfillskip}{0pt}
\begin{thm}\label{thm:weak-implies-strong-bipartite}
Let $M$ be a message set such that $G_M$ is an oriented ring. For any $M$-system $\system$,
$\system$ is
 $1$- synchronizable if and only if it is synchronizable.
\end{thm}

 In order to prove Theorem~\ref{thm:weak-implies-strong-bipartite}, we prove by induction on the length of $\trace$
\end{minipage}
\raisebox{1mm}{\begin{minipage}{0.1\textwidth}
\begin{tikzpicture}[scale=1.2]

\node (0) at (0,0) {};
\node (1) at (0,-1) {};
\node (2) at (1,-1) {};

\draw[->] (0) edge node [left] {$\trace$} (1);
\draw[->,dashed] (1) edge node [below] {$?m_1\cdots{}?m_k$} (2);
\draw[->,dashed] (0) edge node [above right] {$\synchtraceof{\trace}$} (2);

\end{tikzpicture}

\end{minipage}}
\\[-.1em]
that $\trace\cdot{}?m_1\cdots{}?m_k\systemequiv\synchtraceof{\trace}$
for some messages $m_1,\dots,m_k$, where
$\synchtraceof{\trace}$ denotes the unique synchronous $M$-trace such that
$\sendprojection{\synchtraceof{\trace}}=\sendprojection{\trace}$.
\else
\begin{thm}\label{thm:weak-implies-strong-bipartite}
Let $M$ be a message set such that $G_M$ is an oriented ring. For any $M$-system $\system$,
$\system$ is
 $1$- synchronizable if and only if it is synchronizable.
\end{thm}
\begin{proof}
We only need to show that $1$-synchronizability implies synchronizability. Let us
assume that $\system$ is $1$-synchronizable.
Let $\synchtraceof{\trace}$ denote the unique synchronous $M$-trace such that
$\sendprojection{\synchtraceof{\trace}}=\sendprojection{\trace}$.
We prove by induction on $\trace$ the following property (which implies in particular
that $\system$ is synchronizable): \\[1em]
\noindent\begin{minipage}{.75\textwidth} \setlength{\parfillskip}{0pt}
for all $\trace\in\traces{}{\system}$, there are $m_1,\dots,m_k\in\messages$
such that
$$
\begin{array}{cl}
(C1) & \synchtraceof{\trace}\in\traces{0}{\system}\\
(C2) & \trace\cdot{} ?m_1\cdots ?m_k\in\traces{}{\system},\mbox{ and}\\
(C3) & \trace\cdot{} ?m_1\cdots ?m_k\systemequiv\synchtraceof{\trace}.
\end{array}
$$
Let $\trace=\trace'\cdot{}\action$ be fixed and assume that
there are $m_1',\dots,m_k'\in\messages$ such that
$\trace'\cdot{}?m_1'\cdots?m_k'\in\traces{}{\system}$,
$\synchtraceof{\trace'}\in\traces{0}{\system}$, and
$\trace'\cdot{}?m_1'\cdots{}?m_k'\systemequiv\synchtraceof{\trace'}$.
Let us show that $(C1)$, $(C2)$, and $(C3)$ hold for $\trace$.
We reason by case analysis on the last action $\action$ of $\trace$.
\end{minipage}
\hspace{.05\textwidth}
\raisebox{0mm}{\begin{minipage}{0.2\textwidth}
\begin{tikzpicture}[scale=2]

\node (0) at (0,0) {};
\node (1) at (0,-1) {};
\node (2) at (1,-1) {};

\draw[->] (0) edge node [left] {$\trace$} (1);
\draw[->,dashed] (1) edge node [below] {$?m_1\cdots{}?m_k$} (2);
\draw[->,dashed] (0) edge node [above right] {$\synchtraceof{\trace}$} (2);

\end{tikzpicture}

\end{minipage}}
\begin{itemize}
\item Assume $\action=?a$.
Then $\synchtraceof{\trace}=\synchtraceof{\trace'}\in\traces{0}{\system}$, which proves
$(C1)$.
Let $i=\msgdst(a)$. Since peer $i$ only receives
on one channel, there are $m_1,\dots,m_{k-1}$
such that
$$
\trace'\cdot{}?m_1'\cdot{}?m_k'\causalequiv\trace'\cdot{}?a\cdot{}?m_1\cdot{}?m_{k-1}.
$$
Since $\trace'\cdot{}?m_1'\cdot{}?m_k'\systemequiv\synchtraceof{\trace}$ by
induction hypothesis, $(C2)$ and $(C3)$ hold.
\item Assume $\action=!a$. By Lemma~\ref{lem:trace-normalization},
there is $\normaltrace(\trace')=\trace_0\cdot{}!m_1''\cdots !m_k''$ with
$\trace_0\in\traces{0}{\system}$ such that $\trace'\systemequiv\normaltrace(\trace')$.
Since $\trace'\cdot{}?m_1'\cdots{}?m_k'$ leads to a stable configuration,
$m_1'',\dots,m_k''$ is a permutation of $m_1',\dots,m_k'$ that
do not swap messages of a same channel.
Since $G_M$ is an oriented ring,
$
\normaltrace(\trace')\causalequiv \trace_0\cdot{}!m_1'\cdots{}!m_k'.
$
Since $\trace'\cdot{} !a\in\traces{}{\system}$,
it holds that
$\trace_0\cdot{} !m_1\cdots !m_k\cdot{} !a\in\traces{}{\system}$, which implies by Lemma~\ref{lem:send-send-diagram-star} that the two traces
$$
\begin{array}{ll}
\tracebis_1= & \trace_0\cdot{} !m_1'\cdots !m_k'\cdots ?m_1\cdots?m_k'\cdot{}!a\cdot{}?a\\
\tracebis_2= & \trace_0\cdot{} !m_1'\cdots !m_k'\cdot{}!a\cdots ?m_1'\cdots?m_k'\cdot{}?a\\
\end{array}
$$
belong to $\traces{}{\system}$ and verify $\tracebis_1\systemequiv\tracebis_2$.
Consider first $\tracebis_1$, and let
$\tracebis_1'=\trace_0\cdot{} !m_1'\cdots !m_k'\cdot{}$ $?m_1'\cdots?m_k'$.
Since $\normaltrace(\trace')\causalequiv \trace_0\cdot{}!m_1'\cdots !m_k'\systemequiv\trace'$ and
$\trace'\cdot{}?m_1\cdots?m_k\systemequiv\synchtraceof{\trace'}$, it holds that
$\tracebis_1'\systemequiv\synchtraceof{\trace'}$.
Therefore,
$\synchtraceof{\trace'}\cdot{}!a\cdot{}?a=\synchtraceof{\trace}$ belongs to $\traces{}{\system}$, which shows $(C1)$, and $\synchtraceof{\trace}\systemequiv\tracebis_1$.
Consider now
$\tracebis_2$, and let $\tracebis_2'=\trace_0\cdot{} !m_1'\cdots !m_k'\cdot{}!a\causalequiv\normaltrace(\trace')\cdot{}!a$. Then $\tracebis_2'\systemequiv \trace'\cdot{} !a=\trace$,
therefore $\trace\cdot{}?m_1'\cdots?m_k'\cdot{}?a\in\traces{}{\system}$,
which shows $(C2)$,
and
$\trace\cdot{}?m_1'\cdots?m_k'\cdot{}?a\systemequiv\tracebis_2$. Since
$\tracebis_2\systemequiv\tracebis_1\systemequiv\synchtraceof{\trace}$, this
shows $(C3)$.
\qedhere
\end{itemize}
\end{proof}
\fi

\begin{thm}\label{thm:decidability}
Let $M$ be a message set such that $G_M$ is an oriented ring. The problem of deciding
whether a given $M$-system is synchronizable is
decidable.
%%PSPACE-complete, and in PTIME
%%if the peers are deterministic.
%\ELNOTE{C'est la question $\sendtracesweak{0}{\system}\stackrel{=}{?}\sendtraces{}{\system}$ qui est PSPACE-difficile, mais quand on observe les configs stables, on dirait que ca devient PTIME... a creuser, la conjecture est que $1$-synch equivaut a la propriete diamand du lemme 2 sur toutes les configurations accessibles par une execution synchrone.}
\end{thm}
Since a system with two machines is a particular case of a system that is an oriented ring, we deduce from the above result that synchronizability is decidable in that particular case.
\begin{thm}
Synchronizability is decidable for systems of $2$-CFSMs.
\end{thm}

\section{Extensions and Related Works\label{sec:extensions}}
\subsection{Synchronizability for other communication models}

We considered the notions of synchronizability and language synchronizability
introduced by Basu and Bultan~\cite{BasuB16} and we
showed that both are not decidable for systems with peer-to-peer FIFO communications, called ($1$-$1$) type
systems in~\cite{BasuB16}.
In the same work,
Basu and Bultan considered the question of the decidability
of language synchronizability
for other communication
models. All the results we presented so far do not have any immediate consequences on their claims for these communication
models. Therefore, we briefly discuss now what we can say about the decidability of language synchronizability for the other communication
models that have been considered.

\subsubsection{Bags}
In~\cite{BasuB16}, language synchronizability is studied for systems where peers communicate through bags instead of queues,
thus allowing to reorder messages. Language synchronizability is decidable for bag communications:
$\tracesbag{}{\system}$ is the trace language of a Petri net, $T_0(\system)=\{\trace\in\actions{\messages}^*\mid\sendprojection{\trace}\in\sendtracesbag{0}{\system}\}$ is an effective
regular language, $\system$ is language synchronizable iff
$\tracesbag{}{\system}\subseteq T_0(\system)$,
and whether the trace language of a Petri is included in a given regular language
reduces to the coverability problem. Lossy communications where not considered in~\cite{BasuB16},
but the same kind of argument would also hold for lossy communications. However, our Example~\ref{ex:counter-example-peer-to-peer}
is a counter-example for Lemma~3 in~\cite{BasuB16}, \emph{i.e.} the notion of language $1$-synchronizability
for bag communications defined in~\cite{BasuB16} does not imply language synchronizability. The
question whether (language) synchronizability can be decided more efficiently than by reduction to the coverability problem
for Petri nets is open.

\subsubsection{Mailboxes}
The other communication models considered in~\cite{BasuB16} keep the FIFO queue model, but differ in the way how
queues are distributed among peers.
%% CI-DESSOUS : je ne vois plus pourquoi ce serait un contre-exemple, je crois
%% que j'ai fait une erreur...
%%
%% The $*$-$*$ model assumes a unique FIFO queue for all messages. It is claimed
%% that language $n$-synchronizability implies language synchronizability for systems with $n$ peers (Theorem~5). Actually, adding
%% two $a$ messages in our Example~\ref{ex:counter-example-peer-to-peer}, it is easy to get a counter-example
%% to Theorem~5. The decidability of (language) synchronizability for $*$-$*$ communications is open.
%%
The $*$-$1$ (mailbox) model assumes a queue per receiver. This model
is the first model that was
considered for (language) synchronizability~\cite{BasuB11,BasuBO12}.
Our
Example~\ref{ex:counter-example-peer-to-peer} is not easy to adapt
for this communication model. We therefore
design a completely different counter-example.
\begin{exa}
\label{ex:counter-example-mailbox}
Consider the system of communicating machines depicted in
Fig.~\ref{fig:mailbox-counter-example}.
Assume that the machines communicate via mailboxes, like in~\cite{BasuB11,BasuBO12}, i.e. all messages that are sent to peer $i$
wait in a same FIFO queue $Q_i$, and let $\sendtracesmailbox{k}{\system}$ denote the $k$-bounded send
traces of $\system$ within this model of communications. Then
$\sendtracesmailbox{0}{\system}=\sendtracesmailbox{1}{\system}\neq\sendtracesmailbox{2}{\system}$, as depicted in Figure~\ref{fig:mailbox-counter-example}.
%% Then the following holds.
%% $$
%% \begin{array}{llll}
%% \sendtracesmailbox{0}{\system} & = & \prefixclosure \{ &
%% \aonetwo\cdot{}\aonetwo\cdot{}\tagmsg{b}{1}{3}\cdot{}\tagmsg{c}{2}{3}\cdot{}\tagmsg{d}{3}{2}
%% ,\\ &&&
%% \aonetwo\cdot{}\aonetwo\cdot{}\tagmsg{c}{2}{3}\cdot{}\tagmsg{b}{1}{3}
%% ,\\ &&&
%% \aonetwo\cdot{}\tagmsg{c}{2}{3} \aonetwo\cdot{}\tagmsg{b}{1}{3}
%% ,\\ &&&
%% \tagmsg{c}{2}{3}\cdot{}\aonetwo\cdot{}\aonetwo\cdot{}\tagmsg{b}{1}{3}\}
%% \\
%% & = & \multicolumn{2}{l}{\sendtracesmailbox{1}{\system}}
%% \\
%% \sendtracesmailbox{2}{\system} & = &  \multicolumn{2}{l}{\sendtracesmailbox{0}{\system}\cup
%% \{\aonetwo\cdot{}\aonetwo\cdot{}\tagmsg{b}{1}{3}\cdot{}\tagmsg{c}{2}{3} \tagmsg{d}{3}{2}\cdot{}\tagmsg{e}{2}{1}\}}
%% \end{array}
%% $$
Therefore $\system$
%the system of Fig.~\ref{fig:mailbox-counter-example}
is
language 1-synchronizable but not language
synchronizable, which contradicts Theorem~1 in~\cite{BasuB11},
Theorem~2 in~\cite{poplBasuBO12}, and Theorem~2 in~\cite{BasuB16}.
It can be noticed that it does not contradict Theorem~1 in~\cite{BasuBO12},
but it contradicts the Lemma~1 of the same paper, which is used to prove Theorem~1.
\end{exa}
%% In other words, the question whether $1$-synchronizability
%% implies synchronizability (when observing stable configurations, as we did in
%% this paper) is still open for mailbox communications, and the decidability
%% of synchronizability and language synchronizability is open.
\begin{figure}
\begin{tikzpicture}[shorten >=1pt,=stealth’,initial text={},auto,every state/.style={scale=.4,initial distance={2mm}},scale=1.2]
{\small
%% p1 = seq [a;a;b] nil
%% p2 = sum [c,seq [a_;a_;d_;e] nil;  a_,diamond a_ c (seq [d_] nil)]
%% p3 = sum [c_,seq [b_] nil ; b_, seq [c_;d] nil]
%% a : 1->2
%% b : 1->3
%% c : 2->3
%% d : 3->2
%% e : 2->1
\begin{scope}
\node[state,initial] (m1q0) at (0,0) {};
\node[state] (m1q1) at (1,0) {};
\node[state] (m1q2) at (2,0) {};
\node[state] (m1q3) at (3,0) {};
\node[left of=m1q0] {$\peer_1$};
\draw[->] (m1q0) edge node {$!\tagmsg{a}{1}{2}$} (m1q1);
\draw[->] (m1q1) edge node {$!\tagmsg{a}{1}{2}$} (m1q2);
\draw[->] (m1q2) edge node {$!\tagmsg{b}{1}{3}$} (m1q3);
\end{scope}

\begin{scope}[yshift=-.8cm]
\node[state,initial] (m2q0) at (0,-1) {};
\node[state] (m2q1) at (1,-1) {};
\node[state] (m2q2) at (2,-.5) {};
\node[state] (m2q3) at (2,-1.5) {};
\node[state] (m2q4) at (3,-1) {};
\node[state] (m2q5) at (4,-1) {};
\node[state] (m2q6) at (0,0) {};
\node[state] (m2q7) at (1,0) {};
\node[state] (m2q8) at (2,0) {};
\node[state] (m2q9) at (3,0) {};
\node[state] (m2q10) at (4,0) {};
\node[left of=m2q0] {$\peer_2$};
\draw[->] (m2q0) edge node {$?\tagmsg{a}{1}{2}$} (m2q1);
\draw[->] (m2q1) edge node [above] {$?\tagmsg{a}{1}{2}$} (m2q2);
\draw[->] (m2q1) edge node [below left] {$!\tagmsg{c}{2}{3}$} (m2q3);
\draw[->] (m2q2) edge node {$!\tagmsg{c}{2}{3}$} (m2q4);
\draw[->] (m2q3) edge node [below right] {$?\tagmsg{a}{1}{2}$} (m2q4);
\draw[->] (m2q4) edge node {$?\tagmsg{d}{3}{2}$} (m2q5);
\draw[->] (m2q0) edge node {$!\tagmsg{c}{2}{3}$} (m2q6);
\draw[->] (m2q6) edge node {$?\tagmsg{a}{1}{2}$} (m2q7);
\draw[->] (m2q7) edge node {$?\tagmsg{a}{1}{2}$} (m2q8);
\draw[->] (m2q8) edge node {$?\tagmsg{d}{3}{2}$} (m2q9);
\draw[->] (m2q9) edge node {$!\tagmsg{e}{2}{1}$} (m2q10);
\end{scope}

\begin{scope}[xshift=5cm]
\node[state,initial] (m3q0) at (0,0) {};
\node[state] (m3q1) at (0,-.8) {};
\node[state] (m3q2) at (0,-1.6) {};
\node[state] (m3q3) at (1,0) {};
\node[state] (m3q4) at (1,-.8) {};
\node[state] (m3q5) at (1,-1.6) {};
\node[left of=m3q0] {$\peer_3$};
\draw[->] (m3q0) edge node [left] {$?\tagmsg{c}{2}{3}$} (m3q1);
\draw[->] (m3q1) edge node [left] {$?\tagmsg{b}{1}{3}$} (m3q2);
\draw[->] (m3q0) edge node {$?\tagmsg{b}{1}{3}$} (m3q3);
\draw[->] (m3q3) edge node [left] {$?\tagmsg{c}{2}{3}$} (m3q4);
\draw[->] (m3q4) edge node [left] {$!\tagmsg{d}{3}{2}$} (m3q5);
\end{scope}
}

%% \node at (8.5cm,-.8cm) {\small
%% $
%% \begin{array}{l}
%% \sendtracesmailbox{0}{\system} = \\ 
%% \prefixclosure \{ 
%% \aonetwo\cdot{}\aonetwo\cdot{}\tagmsg{b}{1}{3}\cdot{}\tagmsg{c}{2}{3}\cdot{}\tagmsg{d}{3}{2}
%% ,\\ 
%% \aonetwo\cdot{}\aonetwo\cdot{}\tagmsg{c}{2}{3}\cdot{}\tagmsg{b}{1}{3}
%% ,\\
%% \aonetwo\cdot{}\tagmsg{c}{2}{3} \aonetwo\cdot{}\tagmsg{b}{1}{3}
%% ,\\ 
%% \tagmsg{c}{2}{3}\cdot{}\aonetwo\cdot{}\aonetwo\cdot{}\tagmsg{b}{1}{3}\}
%% \\
%% = \sendtracesmailbox{1}{\system}
%% \end{array}
%% $
%% };

%% \node at (6.5cm,-3cm) {\small
%% $
%% \sendtracesmailbox{2}{\system} = \sendtracesmailbox{0}{\system}\cup
%% \{\aonetwo\cdot{}\aonetwo\cdot{}\tagmsg{b}{1}{3}\cdot{}\tagmsg{c}{2}{3}\cdot{}\\~~~ \tagmsg{d}{3}{2}\cdot{}\tagmsg{e}{2}{1}\}
%% $
%% };

\node at (8.5cm,-.8cm) {\small
$
\begin{array}{llll}
\sendtracesmailbox{0}{\system} & = & \prefixclosure \{ &
aabcd,\\ &&& aacb,\\ &&& acab, \\ &&& caab \}
\\
& = & \multicolumn{2}{l}{\sendtracesmailbox{1}{\system}}
\\
\end{array}
$
};

\node at (8cm,-2cm) {\small
$\sendtracesmailbox{2}{\system} =  \sendtracesmailbox{0}{\system}\cup \{aabcde\}$
};

\end{tikzpicture}
\caption{\label{fig:mailbox-counter-example}Language $1$-synchronizability does not imply
language synchronizability for $1$-$*$ (mailbox) communications \emph{à la}~\cite{BasuB11,BasuBO12}.}
\end{figure}

%% Many problems are therefore left for future work: the (un)decidability of synchronizability
%% for the mailbox semantics, the largest class of communication topologies for
%% which $1$-synchronizability implies synchronizability (either for the peer-to-peer semantics or for the mailbox one),
%% the study of language synchronizability, etc.
%% Our intention in this work
%% was limited to the identification of some of these problems, and maybe to explain why the errors
%% in~\cite{BasuB11,BasuBO12,BasuB16} were missed by so many reviewers.

%% \subparagraph*{Acknowledgement}
%% We would like to thank the anonymous reviewers of our paper for relevant suggestions of improvements,
%% and for an accurate reading of our proofs.

\subsection{Analysis of the original mistake\label{sec:error-explaned}}

We analyse the original mistake looking at the proof of Theorem~1 in~\cite{BasuB11}.
The proof attempt is by absurd: the authors assume a sequence of send actions $m_1\dots m_n$
that exists in $\sendtraces{}{S}$ but not in $\sendtraces{1}{S}$.
There exists a prefix $m_1\dots m_l$ in $\sendtraces{1}{S}$ such that
$m_1\dots m_{l+1}\not\in\sendtraces{1}{S}$. So there are two traces
$\trace\in\traces{}{S}$ and $\trace'\in\traces{1}{S}$ with
$\sendprojection{\trace}=m_1\dots m_{l+1}$ and
$\sendprojection{\trace'}=m_1\dots m_{l}$. The authors claim that the only reason why
$\trace'$ cannot be extended (in $\traces{1}{S}$) to a trace that ends with $!m_{l+1}$
is because the buffer where $m_{l+1}$ should go is full. But they miss another explanation: it could
simply be that the configuration after $\trace'$ has control states from which it is not possible to take
a transition labeled with a $!m_{l+1}$, even after a few receptions. This configuration has a priori nothing in common with the configuration reached in $\trace$ right before $!m_{l+1}$.

\subsection{Realizability of choreographies}
Let us recall that a choreography $C$ is a finite automaton describing the exchange of messages between processes. A transition $(q, \tagmsg{m}{i}{j}, q')$ in $C$ is interpreted as follows: process $P_i$, in state $q$, sends message $m$ to process $P_j$ and moves to state $q'$; and in the same way, process $P_j$, in state $q$, receives message $m$ from process $P_i$ and moves into state $q'$. The communication has to be specified and can be done by rendez-vous, bags, fifo channels ; the topology of communications could be peer-to-peer or with mailboxes.
%	 $\action_k= !\tagmsg{m}{i}{j}$ or $\action_k= ?\tagmsg{m}{j}{i}$
From a choreography $C$, one may construct the system $S_C$ of communicating processes $P_i$ such that
each process $P_i$ is the (natural) projection of $C$ ; then $C$ coincides with the synchronous composition of the peer-to-peer system of $P_i$ (Proposition 4 in \cite{schewe2020realisability}). But choreography-defined peer-to-peer systems form a \emph{strict subclass} of peer-to-peer systems.

Since the word \emph{realizability} is used with different meanings, for example in ~\cite{poplBasuBO12} and in ~\cite{schewe2020realisability}, we distinguish here two notions of realizability. A choreography $C$ is said \emph{mailbox-realizable} (resp. \emph{peer-to-peer-realizable}) if the system $S_C$ with respect to the mailbox semantics (resp. with respect to the peer-to-peer semantics) is synchronizable.

Basu, Bultan and Ouederni considered the question of the decidability of the mailbox-realizability of
choreographies~\cite{poplBasuBO12}.
Assuming (from a previous paper from Basu and Bultan \cite{BasuB11}) that
$\sendtraces{0}{S}=\sendtraces{1}{S}$ implies $\sendtraces{0}{S}=\sendtraces{}{S}$, they established
the decidability of the mailbox-realizability of choreographies. Our counter-example shows that this
decidability proof is not correct hence the decidability of the mailbox-realizability is, to the best of our knowledge, still an open problem. They did not studied the peer-to-peer-realizability problem.

Very recenly, Schewe~\emph{et al}~\cite{schewe2020realisability}
considered the peer-to-per-realizability problem and proposed a proof of decidability noticing that all our counter-examples are not choreography-defined peer-to-peer systems. They did not studied the mailbox-realizability problem.

\subsection{Branching synchronizability and stability} Branching synchronizability is defined in \cite{DBLP:conf/facs2/OuederniSB13} and Theorem 1 says that a system $S$ of processes communicating through fifo channels and mailboxes is branching synchronizable iff its associated synchronous system $S_{rdv}$ is branching equivalent (i.e. bisimilar) to $S$ in which all channels are bounded by $1$. It is immediate to deduce from Theorem 1 that branching synchronizability is decidable but this is false. The proof of Theorem 1 is not given in \cite{DBLP:conf/facs2/OuederniSB13} and it is said that it is on the web page of the first author, Ouederni; we did not found the complete paper on her web pages. %	\url{http://www.lcc.uma.es/} %7Emeriem/publications.html.
Stability~\cite{AkrounSY16} seems to be another name for branching synchronizability.
More precisely, let $\ltssendof{k}{\system}$ denote the labeled transition system
restricted to $k$-bounded configurations, where receive actions
are considered as internal actions ($\tau$ transitions in CCS dialect).
A system $\system$ is $k$-stable if
$\ltssendof{}{\system}\branchsim\ltssendof{k}{\system}$,
where $\branchsim$ denotes the branching bisimulation.
In particular, a system that is $0$-stable is synchronizable.
Theorem~1 in~\cite{AkrounSY16} claims that the following
implication would hold for any $k\geq 1$: if
$\ltssendof{k}{\system}\branchsim\ltssendof{k+1}{\system}$, then $\ltssendof{k+1}{\system}\branchsim\ltssendof{k+2}{\system}$. Our
example~\ref{ex:counter-example-peer-to-peer} is a counter-example to this
implication for $k=0$, and it could be generalized to a counter-example for
other values of $k$ by changing the number of consecutive $a$ messages
that are sent by the first peer (and, symmetrically, received by the second peer). Therefore the claim of Theorem~1 in~\cite{AkrounSY16} is not correct.

In~\cite{AkrounS18}, the authors consider the $\ltsof{k}{\system}$ (note the ``$?$'') associated
with a given system: this LTS is the ``standard'' one that
keeps the receive actions as being ``observable''.
A new notion,
also called stability is defined accordingly: a system (strongly) $k$-stable
if $\ltsof{}{\system}\branchsim\ltsof{k}{\system}$, and (strongly) stable
if it is strongly $k$-stable for some $k$. It is not difficult to observe
that a system is strongly $k$-stable if and only if
all its traces are $k$-bounded:
indeed, if all traces are
$k$-bounded, $\ltsof{}{\system}=\ltsof{k}{\system}$,
and if not, there is a trace with $k+1$ unmatched send actions in
$\ltsof{}{\system}$, therefore $\ltsof{}{\system}$ is not trace
equivalent to $\ltsof{k}{\system}$. All results of~\cite{AkrounS18}
are therefore trivially correct.

\subsection{\label{sec:existentially-bounded}Existentially bounded systems}

Existentially bounded systems have been introduced by Genest, Kuske and
Muscholl~\cite{GenestKM06}.
A system $\system$ is existentially $k$-bounded, $k\geq 1$,
if for all trace $\trace\in\traces{}{\system}$, there
is a trace $\trace'\in\traces{k}{\system}$ such that
$\trace\causalequiv\trace'$. Unlike synchronizability,
existential boundedness takes into account the receive actions, but bases on
a more relaxed notion of trace
(also called message sequence chart, MSC for short).

Existential boundedness and synchronizability are incomparable. For instance,
a system with two peers $P_1$ and $P_2$, defined (in CCS notation) as
$P_1=!a$ and $P_2=0$ (idle), is existentially $1$-bounded, but not
synchronizable. Conversely, there are synchronous systems that are not
existentially $1$-bounded: consider
$P=!a.!a||?b.?b$ (i.e. all shuffles of the two),
and $Q=?a.?a||!b!b$, and assume that $P,Q$ are represented as
(single-threaded) communicating automata.
Then this system is
synchronous, but the trace $!a!a!b!b?a?a?b?b$ is not causally equivalent to a
$1$-bounded trace.

Although Genest~\emph{et al} did not explicitly defined it, one could
consider existentially $0$-bounded systems. This is a quite restricted notion,
but it would imply synchronizability and would generalize
half-duplex systems.

Genest~\emph{et al} showed that for any given $k\geq 1$, it is decidable
whether a given system $\system$ of communicating machines with peer-to-peer
communications is existentially $k$-bounded
(Proposition 5.5, \cite{GenestKM10}). Note that what we call a system is
what Genest~\emph{et al} called a \underline{deadlock-free} system, since
we do not have any notion of accepting states.

\subsection{Communication layers}

Finally, following the work of Lipton on reduction~\cite{Lipton75},
there has been recently a lot of interest on the verification of FIFO systems
on the idea of grouping communications in closed
rounds~\cite{Chaouch-SaadCM09,KraglQH18},
in particular to abstract a round of communications
as a single operation.
In~\cite{Bouajjani-el-al-CAV-2018},
the authors define the notion of $k$-synchronous systems:
a system $\system$ of machines communicating with mailboxes is $k$-synchronous
if for all $\trace\in\traces{k}{\system}$, there are
$\trace_1,\ldots,\trace_n$ such that
\begin{itemize}
  \item $\trace\causalequiv\trace_1\cdots\trace_n$,
  \item for all $i=1,\ldots,n$ $\trace_i$ contains at mots $k$ send actions, and
  \item every message received in $\trace_i$ has been sent in $\trace_i$
\end{itemize}
The classes of $k$-synchronous systems, of existentially $k$ bounded systems, and the one of
synchronizable systems are incomparable, although they share very similar ideas.

\section{\label{sec:open}Conclusion and Perspectives}
We established the undecidability of synchronizability for communicating finite state machines communicating with peer-to-peer channels. We also proposed a counter-example for an argument of the proofs
that synchronizability is decidable for mailbox communications. Finally, we showed the decidability of
synchronizability for systems organized on an oriented ring.

Although we identified some problems and fixed them, our work leaves open a bunch of questions. The first one is the decidability of synchronizability for the mailboxes semantics - we only found a counter example to the proof of Basu and Bultan, but we did not show that it is undecidable.
Another question is the decidability of the LTL/CTL model checking for synchronizable systems, either on traces, or on sequences of configurations. We also left open the exact complexity of synchronizability for oriented rings. We believe these questions are rather technical and sometimes very challenging.
%\AFNOTE{On doit pouvoir prouver la decidabilité de synchronisabilité pour certains systems unidirectionnels considérés par Philippe Schnoebelen dans concur 2008...}

\subsection*{\label{sec:ack}Acknowledgements}
We thank the anonymous reviewers of the Conference ICALP 2017 and of the Journal LMCS who produced detailed and usefull reviews that helped and motivated us to produce a better version of this paper. In particular, we added many explanations of the technical proofs, we gave more understandable details and we created Section 5 devoted to explain mistakes in previous decidability proofs and results. We also thank Shrisha Rao for his carrefull reading of the paper.

\bibliographystyle{alphaurl}
\bibliography{refs}

\end{document}